\documentclass[lettersize,journal]{IEEEtran}

\usepackage[utf8]{inputenc} 
\usepackage[T1]{fontenc}
\usepackage{ifthen}
\usepackage[cmex10]{amsmath}
\usepackage{mathrsfs}
\usepackage{amsfonts}
\usepackage{amsmath}
\usepackage{amssymb}
\usepackage{amsthm}
\usepackage{multirow}
\usepackage{multicol}
\usepackage[dvipsnames]{xcolor}
\usepackage{graphicx}
\usepackage{tikz}
\usepackage{pgfplots}
\usepackage{comment}
\usepackage[caption=false]{subfig}
\usepackage{braket}
\usepackage{stmaryrd}
\usepackage[textsize=tiny]{todonotes}
 \usepackage{url}
\pgfplotsset{compat=1.18}

\makeatletter
\renewcommand*\env@matrix[1][*\c@MaxMatrixCols c]{%
  \hskip -\arraycolsep
  \let\@ifnextchar\new@ifnextchar
  \array{#1}}
\makeatother

\usepackage[ruled, vlined, linesnumbered]{algorithm2e}

\SetCommentSty{mycommfont}
\usepackage[noadjust]{cite} 

\newtheoremstyle{bolddefinition}    
  {\topsep}                         
  {\topsep}                         
  {\itshape}                        
  {}                                
  {\bfseries}                       
  {.}                               
  { }                               
  {\thmname{#1}\thmnumber{ #2}\thmnote{ \textbf{(#3)}}} 

\theoremstyle{bolddefinition}
\newtheorem{proposition}{Proposition}

\newtheorem{theorem}{Theorem}

\theoremstyle{bolddefinition}
\newtheorem{definition}{Definition}

\newtheorem{lemma}{Lemma}

\newtheorem{corollary}{Corollary}

\newtheorem{remark}{Remark}

\usepackage[acronym,shortcuts]{glossaries}
\newacronym{AED}{AED}{automorphism-ensemble decoding}
\newacronym{CSS}{CSS}{Calderbank-Shor-Steane}
\newacronym{DPM}{DPM}{dyadic permutation matrix}
\newacronym{LER}{LER}{logical error rate}
\newacronym{LDPC}{LDPC}{low-density parity-check}
\newacronym{PCM}{PCM}{parity-check matrix}
\newacronym{QC}{QC}{quasi-cyclic}
\newacronym{QC-LDPC}{QC-LDPC}{quasi-cyclic low-density parity-check}
\newacronym{QD}{QD}{quasi-dyadic}
\newacronym{QD-LDPC}{QD-LDPC}{quasi-dyadic low-density parity-check}
\newacronym{bp}{BP}{belief propagation}
\newacronym{BP}{BP$2$}{binary BP}
\newacronym{BP4}{BP$4$}{quaternary BP}
\newacronym{BP4-M}{BP$4$-M}{modified quaternary belief propagation}
\newacronym{MS}{MS}{min-sum}
\newacronym{BP+OSD}{BP$2+$OSD}{belief propagation plus ordered-statistics decoding}
\newacronym{QECC}{QECC}{quantum error correcting code}
\newacronym{QT}{QT}{quantum Tanner}
\newacronym{QEC}{QEC}{quantum error correction}
\newacronym{QLDPC}{QLDPC}{quantum low-density parity-check}
\newacronym{MSD}{MSD}{magic state distillation}
\newacronym{GB}{GB}{generalized bicycle}
\newacronym{LP}{LP}{lifted product}
\newacronym{dc}{DC}{dual-containing}
\newacronym{SO}{SO}{self-orthogonal}
\newacronym{GHP}{GHP}{generalized hypergraph product}
\newacronym{HP}{HP}{hypergraph product}
\newacronym{LHCB}{LHCB}{left-hand conveyor belt}
\newacronym{GM}{GM}{generator matrix}
\newacronym{BB}{BB}{bivariate bicycle}
\newacronym{IS}{IS}{intersecting subsets}

\pdfmapfile{+custom.map}

\newcommand{\supp}{\text{supp}}

\newcommand{\6}{\mathbf }

\usepackage{mathabx}
\usepackage{pifont}

\usepackage{array}

\newcommand{\alessio}[1]{\textcolor{ForestGreen}{#1}}

\begin{document}
\title{Design and Analysis of Quantum Dual-Containing CSS LDPC Codes based on Quasi-Dyadic Matrices
\thanks{
Part of the material in this paper has been presented at the 13th International Symposium on Topics in Coding (ISTC) 2025~\cite{baldelli2025quantum}. \\
This work was partially supported by the Italian National Cybersecurity Agency (ACN) under the programme for promotion of XL cycle PhD research in cybersecurity (CUP I32B24001750005).
The views expressed are those of the authors and do not represent the funding institutions.\\
The authors are with the Department of Information Engineering,
Università Politecnica delle Marche, 60131 Ancona, Italy  (Corresponding author: Alessio Baldelli. E-mail: a.baldelli@pm.univpm.it).
}} 

\author{
\IEEEauthorblockN{Alessio Baldelli, Marco Baldi, Massimo Battaglioni, Franco Chiaraluce, and Paolo Santini \\}
}

\maketitle

\begin{abstract}
Quantum error correcting codes are essential to achieve fault-tolerant quantum computation.
This work introduces two constructions of high-rate, dual-containing (DC) Calderbank--Shor--Steane low-density parity-check (LDPC) codes based on quasi-dyadic matrices. 
We characterize the automorphism group of such codes, investigate their minimum distance behavior, and provide several theoretical results on their cycle properties. 
Monte Carlo simulations under depolarizing and phenomenological noise show better finite-length logical error rates than the considered DC benchmark codes and competitive performance against several state-of-the-art quantum LDPC
code families. 
Finally, we employ an automorphism-ensemble belief propagation decoder to improve their decoding performance.
\end{abstract} 

\begin{IEEEkeywords}
    CSS codes, dual-containing codes, LDPC codes, dyadic codes, quasi-dyadic codes, fault-tolerance. 
\end{IEEEkeywords}

\section{Introduction}
\label{sec:intro}
Quantum computers, due to their intrinsic properties, are highly sensitive to noise, which represents a significant limitation to their scalability.
In order to combat the effects of quantum noise and thus enable the design of scalable quantum processors, research has focused on the development of specialized \emph{\ac{QEC}} techniques, which exploit
redundant physical qubits 
to enable the detection and correction of errors in quantum processing systems.

In the last twenty years, many researchers have focused on a specific class of \acp{QECC}, named \emph{topological} codes~\cite{bravyi1998quantum,dennis2002topological}. 
Examples of this class are the so-called \emph{toric code}~\cite{11_kitaev1997quantum, 12_kitaev1997quantum} and its “planar” version, named \emph{surface code}~\cite{15_freedman2001projective, 16_wang2003confinement,fowler2012surface,Roffe_Surface}.
These codes offer several favorable features, including nearest-neighbor connectivity for syndrome extraction, local error detection, and a high degree of structural symmetry. However, the implementation of these families of codes requires a large amount of \emph{physical} qubits to encode only one (surface code) or two (toric code) \emph{logical} qubit(s).

In recent years, \emph{\ac{QLDPC} codes} were proposed as a valid alternative to surface codes. 
They retain many of the advantages of planar codes while supporting larger dimensions~\cite{sparse_quantum_McKay, Hagiwara2007,
Babar2015, Tillich_HGP, leverrier2022quantumtannercodes, Pant_Kal_almost_linear, Panteleev_Kala, Ostrev2024classicalproduct, IBM_BBcodes}. 
In particular, \ac{QC}-\ac{QLDPC} codes, introduced in~\cite{Hagiwara2007}, combine efficient decoding with algebraic regularity, which also simplifies the analysis of structural parameters such as \emph{girth} and \emph{minimum distance}. Cyclic and \ac{QC} codes were  used  to design state-of-the-art \ac{QLDPC} codes such as: bicycle codes \cite{sparse_quantum_McKay}, \ac{GB} codes~\cite{Hagiwara2007, kovalev_Pryad_GBcodes, Panteleev_degenerate, GB_codes_Lin}, \ac{QC} generalized hypergraph product codes~\cite{Panteleev_degenerate}, also known as \ac{LP} codes~\cite{Pant_Kal_almost_linear}, and \ac{BB} codes~\cite{IBM_BBcodes}, which are a specific instance of \ac{LP} codes.

In the classical domain, the notion of \emph{reproducible} codes was introduced in~\cite{santini2022reproducible}, as a generalization of the \ac{QC} framework. 
Their definition encompasses many code families, such as cyclic, \ac{QC}, dyadic, and \ac{QD} codes~\cite{Rajan2001}. 
In particular, similar to the cyclic case, a dyadic matrix can be reconstructed from a single row by applying \emph{dyadic permutations} (a special type of reflections, i.e., permutations of order 2).

The idea of \emph{quantum stabilizer codes} based on dyadic matrices, in turn, was introduced in~\cite{dyadics}, where the authors discuss how this structure can be leveraged in the design of \emph{\ac{CSS} codes}. However, the strict dyadic condition imposes a global structure that can limit the flexibility of code constructions. In particular, in~\cite{dyadics}, the authors focus on the design of quantum stabilizer codes with dimension equal to zero, which therefore do not encode any logical qubit. Instead, \ac{QD} codes relax this rigidity by introducing block-level dyadic symmetry~\cite{baldelli2025quantum, baldelli2026ISIT, kirsten_QD}.

\subsection{Our contribution}

Motivated by the above considerations, we propose two constructions (\emph{Construction A} and \emph{Construction B}) of \ac{dc} \ac{CSS} \ac{QLDPC} codes based on \ac{QD} matrices. 
They enable the design of high-rate codes, which is itself a valuable achievement, since many state-of-the-art constructions are characterized by vanishing rates (see, for instance,~\cite{Tillich_HGP, Pacenti_Margulis, Pant_Kal_almost_linear, IBM_BBcodes}). 
As a consequence of their \ac{dc} structure, such \ac{CSS} codes necessarily have girth~$4$, and therefore their performance is naturally limited compared to other families of non-\ac{dc} \ac{CSS} \ac{QLDPC} codes. 
However, our finite-length numerical results show that, for the tested instances and simulation settings, the codes we propose achieve lower \acp{LER} under \ac{bp} decoding than the well-known bicycle codes, which remain the reference among high-rate \ac{dc} \ac{CSS} constructions. 
Comparisons with \ac{QLDPC} codes having similar lengths, stabilizer generator weights, and rates also show competitive finite-length performance beyond \ac{dc} \ac{CSS} codes. We further extend the assessment beyond the ideal depolarizing-noise setting by considering a phenomenological-noise model with imperfect syndrome measurements, where the overcomplete parity-check representation of the proposed \textit{Construction~A} codes is particularly useful.

We also show that dyadic and \ac{QD} codes are characterized by a structured automorphism group. Since both \ac{CSS} components are derived from the same \ac{dc} code, the relevant quantum-code automorphisms follow directly from those of the underlying classical code. In this work, these automorphisms are exploited in two ways: first, to characterize the algebraic symmetry of the proposed constructions, and second, to define explicit coordinate permutations leveraged in an automorphism-aware ensemble decoder. 
Code automorphisms are also relevant in the broader context of fault-tolerant quantum computation \cite{Grassl_FT}, since they may induce permutation-based logical operations depending on their action on the logical subspace. A full characterization of this induced logical action is, however, outside the scope of the present work.

Compared with its preliminary version~\cite{baldelli2025quantum}, the present work:
\begin{itemize}
\item provides a more flexible version of \textit{Construction~A}, leading to improved finite-length distance properties;
\item introduces a heuristic optimization procedure that improves the minimum distance and reduces the number of length-$4$ cycles of \textit{Construction~B} codes;
\item develops additional theoretical results that further characterize the structure, cycle properties, minimum distance behavior, and automorphism groups of classical and quantum dyadic and \ac{QD} codes;
\item exploits the automorphism group of the proposed CSS codes in an AutDEC-based~\cite{AutomorphismEnsembleDecoding} performance assessment, where dyadic automorphisms are applied consistently to both qubit coordinates and stabilizer-check coordinates through the induced check action, providing a \ac{BP}+AutDEC rescue stage;
\item contains an extended numerical evaluation, including more code parameters and decoders, deeper simulations in the low-\ac{LER} region, and comparisons with several state-of-the-art \ac{QLDPC} code families;
\item extends the numerical analysis beyond the depolarizing-noise setting by considering a phenomenological-noise model with imperfect syndrome measurements.
\end{itemize}

\subsection{Paper outline}

The remainder of this paper is organized as follows.
Sec.~\ref{sec:notation} introduces the notation and preliminary concepts needed in both the classical and quantum domains, highlighting the role of classical linear codes in the construction of \ac{CSS} codes. 
Sec.~\ref{sec:dyadic} reviews the properties of dyadic and \ac{QD} matrices required in the remainder of the paper and introduces a number of  theoretical results.
In Sec.~\ref{sec:QD_LDPC}, we present two  constructions of \ac{dc} \ac{CSS} \ac{QLDPC} codes. 
Sec.~\ref{sec:cycles} investigates the cycle structure of these codes and provides other theoretical results. 
Sec.~\ref{sec:results} presents the numerical results, considering both the depolarizing and the phenomenological-noise models. Sec.~\ref{sec:auto} reports additional numerical results obtained through automorphism-aware ensemble decoding. Finally, Sec.~\ref{sec:conclusions} concludes the paper.

\section{Notation and Background}
\label{sec:notation}
In this section, we introduce the notation and recall the basic facts and definitions related to classical codes, quantum stabilizer codes, and code automorphisms. 

Upper case and lower case bold letters denote matrices, e.g., $\6A$, $\6B$, and row vectors, e.g., $\6a$, $\6b$, respectively. 
We denote transposition by $(\cdot)^T$ and inversion by  $(\cdot)^{-1}$. The $k\times k$ identity matrix is denoted as $\6I_k$, and the size is omitted when it is clear from the context. 
The same holds for the all-zero matrix $\60$ which is not necessarily a square matrix. Furthermore, let us indicate with $\mathbb{N}$ the set of natural numbers, with italic upper case letters, e.g., $A$, a generic subset of $\mathbb{N}$, with $\mathbb{C}$ the set of complex numbers, and with $\mathbb{Z}_m$ the set of all congruence classes modulo $m$. 
Moreover, we indicate with $\mathbb{F}_2$ the binary finite field, and with $\mathbb{F}_2^{n}$ an $n$-dimensional vector space over $\mathbb{F}_2$, where $n \in \mathbb{N}$. In addition, with calligraphic upper case letters, e.g., $\mathcal{A}$, we denote groups. We denote by $\text{rk}(\6A)$ the binary rank of the matrix $\6A$  while, using $\text{ker}(\6A)$ and $\text{row}(\6A)$, we indicate the null space of $\6A$ and its rowspace, respectively. 
The \emph{Hamming weight} (or simply \emph{weight}) of an element $\6a \in \mathbb{F}_2^n$ corresponds to the number of its non-zero entries, and it is denoted as $|\6a|$.
The \emph{support} of the vector $\6a$, which is the set of indices of its non-null entries, is denoted as $\supp(\6a)$. The \emph{order} of a vector $\boldsymbol{\alpha} \in \mathbb{F}_2^{n}$, denoted by $\#\langle \boldsymbol{\alpha} \rangle$, is the smallest positive integer $t$ such that $t \boldsymbol{\alpha} = \boldsymbol{0}$.
 In the quantum domain, a \emph{quantum state} is represented using the standard bra-ket notation, that is $\ket{\cdot}$.

Moreover, we employ $\otimes$ for the tensor product, $\cdot$ for the dot product between vectors (i.e., $\6a\cdot\6b=\sum_i a_ib_i \bmod 2$).
In addition, we use the symbol \( \oplus \) to denote modulo-2 addition. With a slight abuse of notation, for $i, j \in \mathbb{N}$ we define $i \oplus j$ as the integer corresponding to the bit-wise modulo-2 addition of the binary representations of $i$ and $j$.

\subsection{Classical linear codes}
A binary \emph{linear code} $C[n, k]$ is a linear subspace of $\mathbb{F}_2^{n}$ with length $n$, dimension $k$, and \emph{rate} defined as $R = k/n$. 
A linear code can be represented as the kernel of a \ac{PCM} $\6H\in\mathbb{F}_2^{m \times n}$, i.e., $C = \text{ker}(\6H)$, with $m \geq n-k$. If $m = n-k = \text{rk}(\6H)$, $\6H$ is \emph{full-rank}; otherwise, it has a \emph{rank deficiency} of $m - \text{rk}(\6H)$ linearly dependent rows and we denote this \ac{PCM} as \emph{overcomplete}.
Analogously, a code can be interpreted as the rowspace of a \ac{GM} $\6G \in \mathbb{F}_2^{k \times n}$, i.e., $C = \text{row}(\6G)$. 
In other words, we have
\begin{equation*}
    C = \left\{\6m\6G\mid \6m\in\mathbb F_2^k\right\} = \left\{\6c \in\mathbb F_2^n\mid  \6H\6c^T = \60\right\},
\end{equation*}
where $\6{c} \in C$ is a codeword.
We denote with $C^{\perp} = \{ \6c \in \mathbb{F}_2^{n} \, | \, \6G \6c^T = \60 \}$ the $[n, n - k]$ \emph{dual code} of the code $C$. 
A linear code $C$ is said to be \emph{\ac{SO}} if $C \subseteq C^{\perp}$, whereas it is said to be \ac{dc} if $C^{\perp} \subseteq C$.

The \emph{Hamming distance} $d_{\text{H}}(\6a, \6b)$ between two binary vectors $\6a$ and $\6b$ is the number of positions in which they differ. 
So, we call 
$d(C) = \text{min}\{ d_{\text{H}}(\6a, \6b) \, | \, \forall \, \6a \neq \6b; \, \6a, \6b \in C \}$
\emph{minimum distance} of $C$. It is easy to show that $d(C)$ is equal to the  weight of the non-zero codeword(s) of minimum weight. 

\subsection{Quantum Stabilizer Codes} \label{subsec:quantum_codes}

Quantum \emph{stabilizer codes}, introduced in~\cite{gottesman1997stabilizercodesquantumerror}, serve as the quantum analog of classical binary linear codes. 

Let us consider the $n$-qubit Hilbert space $(\mathbb{C}^2)^{\otimes n}$, which denotes the tensor product of $n$ copies of the complex two-dimensional space $\mathbb{C}^2$.
The \emph{single-qubit Pauli operators} are 
\(
\6I,\6X,\6Y,\6Z\), 
i.e., the usual Pauli matrices~\cite{Nielsen_Chuang_2010}.  
The corresponding \emph{Pauli group} is
\(
\mathcal{P}_1 = \{\pm 1, \pm i\}\,\{\6I, \6X, \6Y, \6Z\}.
\) This group is closed under the standard matrix multiplication. Then, we define a \emph{Pauli operator on $n$ qubits} as $\6P = \sigma_0\6P_0 \otimes \, \dots \, \otimes \sigma_{n-1}\6P_{n-1} = \sigma ( \6P_0 \otimes \, \dots \, \otimes \6P_{n-1}) \in (\mathbb{C}^2)^{\otimes n}$, where $\sigma = \prod_{j = 0}^{n-1} \sigma_j \in \{ \pm1, \pm i \}$ is the \emph{global phase}, and $\6P_i \in \{ \6I, \6X, \6Z, \6Y \}$. 
The \emph{weight} $\text{wt}(\6P)$ of a Pauli operator $\6P$ is the number of non-identity components in its associated tensor product decomposition. 
So, the \emph{n-qubit Pauli group} $\mathcal{P}_{n}$ consists of all the Pauli operators that act on $n$ qubits. 

Let us consider the quotient group $\mathcal{P}'_n = \mathcal{P}_n/\{ \pm \6I, \pm i\6I \}$.
It is possible to show that there exists an isomorphism between $\mathcal{P}'_n$ and $\mathbb{F}_2^{2n}$, realized using the following mapping
\begin{equation} \label{eq:mapping}
   \6P = \bigotimes_{i = 0}^{n-1} \6X^{x_i}_i \6Z^{z_i}_i \mapsto [\6x \, | \, \6z]  = [x_0, \dots, x_{n-1} \, | \, z_0, \dots, z_{n-1}], 
\end{equation}
$\forall \, \6P \in \mathcal{P}'_n$, where the index $i$ represents the $i$-th qubit and $x_i, z_i \in \mathbb{F}_2$. In other words, \eqref{eq:mapping} establishes a one-to-one correspondence between Pauli operators on $n$ qubits and vectors in $\mathbb{F}_2^{2n}$.
Consider $\6P, \tilde{\6P} \in \mathcal{P}'_n$ and their associated binary representations $[\6x \, | \, \6z], [\tilde{\6x} \, | \, \tilde{\6z}]$ using the above mapping. If we find $\6x \cdot \tilde{\6z} \, \, \oplus  \, \, \6z \cdot \tilde{\6x} = \60$, the two Pauli operators commute.

Let us introduce $\tilde{\mathcal{P}}_n$ as the subset of the Pauli operators $\6P \in \mathcal{P}_n$ with $\sigma = 1$.
A \emph{stabilizer group} $\mathcal{S}$ is a commutative subgroup of $\mathcal{P}_n$ such that $-\6I \notin \mathcal{S}$.
Suppose $\mathcal{S}$ has $n-k$ \emph{independent} generators
$\6S_0, \dots, \6S_{n-k-1} \in \tilde{\mathcal{P}}_n$, called the \emph{stabilizer generators},
so that $\mathcal{S} = \langle \6S_0,\dots, \6S_{n-k-1} \rangle$.
By independence of the generators we mean that, up to the global phase factor $\sigma$, it is not possible to obtain one of them from the others using the group operation, i.e., matrix multiplication.

A \emph{quantum stabilizer code} $\mathcal{C} \llbracket n, k \rrbracket$ is defined as a $2^{k}$-dimensional subspace of $(\mathbb{C}^{2})^{\otimes n}$ such that every quantum state $\ket{\psi} \in \mathcal{C}$ 
is stabilized by all the stabilizer generators, that is,
\begin{equation*}
    \mathcal{C} = \{ \ket{\psi} \in (\mathbb{C}^{2})^{\otimes n} \, \, | \, \, \6S_i \ket{\psi} = \ket{\psi}, i \in \{0, \dots, n-k-1 \} \}.
\end{equation*}
Then, $\mathcal{S}$ is called \emph{stabilizer group} of $\mathcal{C}$. As an example, for $i = 1$, we obtain $\6S_1 \ket{\psi} = \ket{\psi}$, and we say that $\ket{\psi}$ is \emph{stabilized} by $\6S_1$. 
The minimum distance of $\mathcal{C}$ is defined as follows
\begin{align*}
    d(\mathcal{C}) = \text{min} \{ & \text{wt}(\6P) \, | \, \6x \cdot \6z_{i} + \6z \cdot \6x_{i} = \60; \\ 
    & \6P \in \tilde{\mathcal{P}}_n, \, \6P \notin \mathcal{S}, \, i \in \{0, \dots, n-k-1 \} \},
\end{align*}
where $[\6x \mid \6z]$ and $[\6x_i \mid \6z_i]$ denote the binary vectors associated with
$\mathbf{P}$ and with the stabilizer generator $\mathbf{S}_i$, respectively.  

Any Pauli operator $\6P \in \tilde{\mathcal{P}}_n$ can be used to represent a \emph{Pauli error} on $n$ qubits. It acts on the quantum system and, under its effect, such a system goes from a state $\ket{\psi}$ to a state $\6P \ket{\psi}$. However, for every quantum stabilizer code $\mathcal{C}$, it holds that $\6P \ket{\psi} = \ket{\psi}$ for all $\ket{\psi} \in \mathcal{C}$ if and only if $\6P \in \mathcal{S}$. 
Thus, there exist Pauli errors $\6P \in \tilde{\mathcal{P}}_n$  which do not corrupt $\ket{\psi}$, referred to as \emph{degenerate} Pauli errors~\cite{Panteleev_degenerate}. 
Instead, the errors that corrupt the quantum state $\ket{\psi}$ are referred to as \emph{logical errors}.

If we apply the mapping \eqref{eq:mapping} 
to all the $n-k$ stabilizer generators 
of a code $\mathcal{C} \llbracket n, k \rrbracket$, we obtain the following matrix
\begin{equation} \label{eq:stab_matrix}
    \6H_{\text{Q}} = 
    \begin{bmatrix}[c|c]
        \6H_{\text{X}} & \6H_{\text{Z}}
    \end{bmatrix}
    \in \mathbb{F}_2^{(n-k) \times 2n},
\end{equation}
where $\6H_{\text{X}}, \6H_{\text{Z}} \in \mathbb{F}_2^{(n - k) \times n}$. 
Equation~\eqref{eq:stab_matrix} represents the \ac{PCM} of the stabilizer code $\mathcal{C}$, where its rows are the binary representation of its stabilizer generators.  It is possible to convert the commutative constraint of the stabilizer generators 
into the orthogonality condition of the rows of $\6H_{\text{Q}}$, expressed by the following \emph{symplectic product}
\begin{equation} \label{eq:symp}
    \6H_{\text{X}} \6H_{\text{Z}}^{T} \oplus \6H_{\text{Z}} \6H_{\text{X}}^{T} = \60.
\end{equation}

In the present work, we are interested in studying \ac{CSS} codes, an important subclass of stabilizer codes~\cite{CSS, Steane_CSS}. 
They are characterized by stabilizer generators that are either tensor products of only $\6I$ and $\6X$ ($\6X$-type stabilizers) or of only $\6I$ and $\6Z$ operators ($\6Z$-type stabilizers). 
A \ac{CSS} code $\mathcal{C} \llbracket n, k_1 - k_2 \rrbracket$ is constructed based on two classical binary  codes $C_1 [n, k_1]$ and $C_2 [n, k_2]$, with $k_1 \geq k_2$, and such that $C_2 \subseteq C_1$.
It follows that the symplectic product relation (\ref{eq:symp}) in this case becomes $\6{H}_1\6{G}_2^{T} = \60$ (or $\6{G}_2\6{H}_1^{T} = \60$). This means that, in such codes, any row of $\6H_1$ is orthogonal to all  rows of $\6G_2$ (and vice versa).  In particular, the code $C_1$ is used to correct $\6X$ errors, i.e., bit-flip errors; instead, the code $C_2^{\perp}$ is used for $\6Z$ errors, i.e., phase-flip errors.
The  minimum distance of the \ac{CSS} code $\mathcal{C}$ is $d(\mathcal{C}) \geq \text{min} \{ d(C_1), d(C_2^\perp) \}$, while its rate is $R_{\text{Q}} = (k_1 - k_2)/n$.
For \ac{CSS} codes, the correction of $\6X$ and $\6Z$ errors can be performed separately.

If the \ac{CSS} code is constructed from a single classical $[n,k]$ linear code $C$ which 
satisfies $C^{\perp} \subseteq C$, i.e, a \ac{dc} code, then we obtain a  subclass known as \ac{dc} \ac{CSS} codes. 
In this setting, the two classical codes defining the \ac{CSS} construction are $C$ and its dual $C^{\perp}$, which share the same length $n$ and have dimensions $k_1 = k$ and $k_2 = n-k$, respectively. 
Therefore, the \ac{PCM} $\mathbf{H}$ associated to $C$ satisfies the following orthogonality condition 
\begin{equation}
    \6{H}_1\6{G}_2^{T} = \6{H}\6{H}^{T} = \6{0}.
\label{eq:orto}
\end{equation}
The resulting \ac{CSS} code $\mathcal{C}$ has length $n$ and dimension $2k - n$. 
Thus, for a \ac{dc} \ac{CSS} code we have $d(\mathcal{C}) \geq d(C)$. Since the two codes are represented by the same \ac{PCM}, we obtain the same performance in correcting $\6X$ and $\6Z$ errors.

\subsection{Classical and Quantum LDPC Codes}
In the \emph{classical} domain, a linear code is called a \ac{LDPC} code~\cite{gallager1962low} if its \ac{PCM} $\6H \in \mathbb{F}_2^{m \times n}$ is \emph{sparse}.  
Every \ac{PCM} can be associated with a Tanner graph~\cite{Tanner1981}, which is a bipartite graph consisting of $n$ \emph{variable} nodes and $m$ \emph{check} nodes. 
Tanner graphs are characterized by the presence of \emph{cycles}, which are closed paths composed of an even number of edges.  
The \emph{girth} $g$ is defined as the length of the shortest cycle(s) in the graph. 
Throughout the paper, we denote with $n_4$ the number of length-$4$ cycles in the Tanner graph of a certain code.
In the following, we use the terms girth of a Tanner graph, of a code, and of a \ac{PCM} interchangeably.
If all the variable (check) nodes have the same degree, the code is said to be regular in the \emph{column (row) weight} and we adopt $\lambda_{\text{c}}$ ($\lambda_{\text{r}}$) to denote the column (row) weight of its \ac{PCM}. If the code is not regular in the column (row) weight, then we use $\lambda_{\text{c, avg}}$ ($\lambda_{\text{r, avg}}$) to denote the average column (row) weight.

In the \emph{quantum} domain, a  
stabilizer code $\mathcal{C}$ represented by a sparse \ac{PCM} is called \ac{QLDPC} code.  
Analogously, we can refer to them as \emph{sparse stabilizer codes}. As above, it is possible to describe a stabilizer code $\mathcal{C}$ with a Tanner graph. 
In this case, the variable nodes represent the $n$ qubits, whereas the check nodes are associated to the $(n - k)$ stabilizer generators. 
In the case of \ac{CSS} codes, the Tanner graph associated to $\6H_{\text{Q}}$ consists of two independent Tanner graphs corresponding to the matrices $\6H_1$ and $\6G_2$.
The row weight of $\6H_1$ ($\6G_2$) represents the weight of $\6Z$-type ($\6X$-type) stabilizer generators.

\subsection{Role of code automorphisms in \ac{CSS} codes} \label{subsec:fault-auto}
The automorphisms of a classical code are the permutations that maintain the code space. More specifically, the automorphism group $\mathcal{A}(C)$ on the classical code $C$ is the set of all permutations $\pi \in S_n$, where $S_n$ is the symmetric group of $n$ elements, that preserves the code space, i.e., such that $\pi(C) = \left\{\pi^{-1}(\6c)\mid \6c\in C\right\} = C$. In other words, it holds that: $\pi^{-1}(\6c) = (c_{\pi(0)}, c_{\pi(1)}, \dots, c_{\pi(n-1)}) \in C, \, \forall \, \6c = (c_0, c_1, \dots, c_{n-1}) \in C$.
It is known from~\cite{MacWilliams_Sloane} that $\mathcal{A}(C) = \mathcal{A}(C^{\perp})$. 
For a \ac{CSS} code $\mathcal{C}$ based on one 
\ac{SO} code, the relevant automorphism group for $\mathcal{C}$ is $\mathcal{A}(C)$~\cite{Grassl_FT}.

For a \ac{dc} \ac{CSS} code $\mathcal{C}$ constructed from a \ac{dc} code $C$, we get 
the relevant automorphism group is 
$\mathcal{A}(\mathcal{C}) = \mathcal{A}(C)$, similarly to the case of one \ac{SO} code. 
More generally, when a \ac{CSS} code is constructed from two distinct codes $C_1$ and $C_2$, the relevant automorphism group is $\mathcal{A}(\mathcal{C}) = \mathcal{A}(C_1)\cap\mathcal{A}(C_2)$.
Even if both groups are large, the structure and size of their intersection are not easily characterized. 
This observation motivates the use of \ac{dc} \ac{CSS} constructions, 
for which the automorphism group relevant for the \ac{CSS} code is known exactly and is as large as permitted by the underlying classical symmetry. 

In the present work, we will use this automorphism structure primarily to characterize the algebraic symmetries of the proposed DC CSS constructions and, later, to define the coordinate permutations used in the automorphism-ensemble decoding assessment of Sec.~\ref{sec:auto}.

\subsection{Noise models} \label{app:noise}
{In this paper, we consider two noise models: \emph{depolarizing noise} 
and \emph{phenomenological noise}.

In the depolarizing-noise model, which relies on perfect syndrome measurements, any qubit is affected by a Pauli error $\mathbf{X}$, $\mathbf{Z}$, or $\mathbf{Y}$ with probability $p/3$ each, and remains error-free with probability $1 - p$. The parameter $p$ is called the \emph{depolarizing probability}. 
The 
state $\ket{\psi} \in (\mathbb{C}^{2})^{\otimes n}$ is corrupted by a Pauli error $\6E = \bigotimes_{i = 0}^{n-1} \6E_i \in \tilde{\mathcal{P}}_n$, where $\6E_i = \{ \6X, \6Z, \6Y \}$.

Since, in actual quantum systems, syndrome measurement operations are inherently noisy, fault-tolerant \acp{QECC} need to detect and correct errors even when they corrupt the syndrome readout.
While circuit-level noise simulations provide the most accurate noise setting, a useful simplified model is the second model considered here, i.e., phenomenological noise~\cite{Gottesman_FT, Gottesman_introduction}. 
Unlike the depolarizing-noise model, in this setting we also consider syndrome readout errors as independent random events with readout error probability $\epsilon$. 
In particular, we adopt the single-shot syndrome measurement scenario, which is the framework used in~\cite[Sec.~$5$]{Ostrev2024classicalproduct}. 
Therefore, we leverage the linearly dependent rows, called \emph{meta-checks}, of the \ac{PCM} to mitigate the negative effect of syndrome readout errors.
In other words, the binary decoder of a \ac{CSS} code is fed with the following \emph{extended \ac{PCM}}
\begin{equation*}
    \widetilde{\6H} =
    \begin{bmatrix} 
        \6H & \6I \\ 
        \60 & \6L
    \end{bmatrix},
\end{equation*}
where $\6H \in \mathbb{F}_2^{m \times n}$ is the \ac{PCM} representing either $C_1$ or $C_2^{\perp}$ and $\6L \in \mathbb{F}_2^{(m - r) \times m}$; then, we have $\6I \in \mathbb{F}_2^{m \times m}$, $\60 \in \mathbb{F}_2^{(m - r) \times n}$, where $r = \text{rk}(\6H)$.
Specifically, $\6L$ is the full-rank \emph{meta-check matrix} and it represents the classical code $C_{\text{L}} = \text{ker}(\6L)$. It holds that 
\begin{equation} \label{eq:meta_matrix}
    \text{ker}(\6L) = \text{row}(\6H^T),
\end{equation}
so that $\6L \6H = \60$.
In particular, the minimum distance of $C_{\text{L}}$, that with a slight abuse of notation we denote as $d_{\text{L}}$, represents the capability of correcting syndrome readout errors and is called \emph{meta-check distance}.
By leveraging~\eqref{eq:meta_matrix}, it is possible to show that $d_{\text{L}}$ is upper bounded by the minimum column weight of $\6H$; we denote such a bound as $U(d_{\text{L}})$.

We anticipate that, in Section~\ref{sec:results}, we use the native \ac{PCM} produced or reported for each code instance, without adding or removing linearly dependent rows. Consequently, a code whose native \ac{PCM} is full rank has no meta-check redundancy in the considered experiment.

\section{Basics on Dyadic and Quasi-Dyadic Codes}
\label{sec:dyadic}
A matrix $\6M$ over $\mathbb{F}_2$ is defined as \emph{reproducible} if it can be completely described by a small subset of its rows (called the \emph{signature set}) and a set of linear transformations that act on that subset of rows \cite{santini2022reproducible}. 
Moreover, $\6M$ is \emph{quasi-reproducible} if it is an array of reproducible matrices. A classical linear code $C \subseteq \mathbb{F}_2^{n}$ is reproducible (quasi-reproducible) if it can be represented by a reproducible (quasi-reproducible) \ac{GM} or \ac{PCM}. 
A special class of reproducible matrices is that of the so-called  \emph{dyadic} matrices. For these matrices, the signature $\6m$ is given by their first row.
\begin{definition}[Ring of dyadics] \label{def:dyadic}
    Given an integer $\ell\geq 1$, we define $\mathcal M_{\ell}(\mathbb F_2)$ as the set of $2^\ell\times 2^\ell$ matrices with entries over $\mathbb F_2$ and structured as follows
    \begin{equation}
    \label{eq:dyadics}
        \6M = \begin{bmatrix}
        \6A & \6B \\
        \6B & \6A
        \end{bmatrix}, \quad \6A,\6B\in\mathcal M_{\ell-1}(\mathbb F_2).
    \end{equation}
    For $\ell = 0$, $\mathcal M_0(\mathbb F_2)\coloneq \mathbb F_2$.
    For any $\ell$, when equipped with standard matrix sum and multiplication,  $\mathcal M_\ell(\mathbb F_2)$ forms a commutative ring.
\end{definition}

It is easy to verify that dyadic matrices are symmetric, and that, coherent with its definition, any dyadic matrix can be fully described by using only its signature $\6m = (m_{0,0}, m_{0,1}, \cdots, m_{0, 2^{\ell}-1})$.
In fact, for any $i, j$, we have $m_{i, j} = m_{0, i\oplus j}$. By weight of a dyadic matrix, we refer to the weight of its signature, that is  $|\6m|$. 
Let us recall some useful theoretical results on dyadic matrices.

\begin{proposition}
    If $\6M \in \mathcal M_{\ell}(\mathbb F_2)$ is dyadic and its signature $\6m$ has odd weight, then:
    \begin{itemize}
        \item[$(i)$] $\6M$ is not singular;
        \item[$(ii)$] $\6M$ is invertible and the inverse matrix equals $\6M$;
        \item[$(iii)$] $\6M\6M^{T} = \6I$.
    \end{itemize}
\end{proposition}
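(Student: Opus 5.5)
The plan is to prove $(iii)$ first, using the structure of $\mathcal M_\ell(\mathbb F_2)$, and then read off $(i)$ and $(ii)$.

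By Definition~\ref{def:dyadic}, the matrix $\6M$ is symmetric, so $\6M\6M^T=\6M^2$. The set of dyadic matrices is closed under products, so $\6M^2$ is itself dyadic. It is therefore determined by its signature, i.e., its first row. Using $m_{i,j}=m_{0,i\oplus j}$, the $(0,j)$ entry of $\6M^2$ is
\begin{equation*}
(\6M^2)_{0,j}=\bigoplus_{k=0}^{2^\ell-1} m_{0,k}\, m_{0,k\oplus j}.
\end{equation*}

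Take $j\neq 0$ first. The map $k\mapsto k\oplus j$ is a fixed-point-free involution on $\{0,\dots,2^\ell-1\}$. It splits the index set into pairs $\{k,k\oplus j\}$, and the two terms of each pair are equal, namely $m_{0,k}m_{0,k\oplus j}$ and $m_{0,k\oplus j}m_{0,k}$. Their sum is $0$ modulo $2$, so $(\6M^2)_{0,j}=0$.

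Now take $j=0$. Since $m_{0,k}^2=m_{0,k}$ over $\mathbb F_2$, the entry becomes $\bigoplus_k m_{0,k}=|\6m| \bmod 2 = 1$, because the weight is odd. The signature of $\6M^2$ is therefore $(1,0,\dots,0)$, which is the signature of $\6I$. Hence $\6M\6M^T=\6M^2=\6I$, which is $(iii)$.

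From $\6M\6M=\6I$, the matrix $\6M$ has a two-sided inverse, namely itself; this is $(ii)$. Invertibility gives nonsingularity, which is $(i)$.

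There is an alternative route by induction on $\ell$. With $\6M=\begin{bmatrix}\6A&\6B\\\6B&\6A\end{bmatrix}$, we have $\det\6M=\det(\6A+\6B)\det(\6A-\6B)=\det(\6A+\6B)^2$ over $\mathbb F_2$. The signature of $\6A+\6B$ is the fold of $\6m$ into half length, and this fold has the same weight parity as $\6m$. Either route works, but the pairing argument is direct.

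The only delicate step is justifying that $\6M^2$ is dyadic, which requires closure of $\mathcal M_\ell(\mathbb F_2)$ under multiplication. This is asserted in Definition~\ref{def:dyadic}. Alternatively, one can compute a general entry directly: $(\6M^2)_{i,j}=\bigoplus_k m_{0,i\oplus k}m_{0,k\oplus j}$, and the substitution $k'=i\oplus k$ turns this into the $j'=i\oplus j$ case above. This avoids relying on closure altogether.
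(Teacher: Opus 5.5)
Your pairing argument is correct and is essentially the argument the paper relies on. The paper defers this proof to \cite{banegas2020designing}, and it derives Lemma~\ref{lem:rank} from the expansion $\6M=\sum_i m_i\6D^{(i)}$ together with $\6D^{(i)}\6D^{(j)}=\6D^{(i\oplus j)}$ in a commutative ring of characteristic $2$; there the coefficient of $\6D^{(t)}$ in $\6M^2$ is exactly your $\bigoplus_k m_{0,k}m_{0,k\oplus t}$, and the cross terms cancel in your pairs $\{k,k\oplus t\}$.

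One small correction to your aside: the determinant induction on its own only gives $(i)$. To reach $(ii)$ and $(iii)$ by induction, use instead that $\6A\6B=\6B\6A$ in characteristic $2$, so $\6M^2$ is block diagonal with both diagonal blocks equal to $(\6A+\6B)^2$.
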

The proof is available in \cite{banegas2020designing}.

Let $\mathcal D_\ell(\mathbb F_2) = \{ \6D^{(0)}\coloneq \6I_{2^\ell}, \6D^{(1)},\cdots, \6D^{(2^\ell-1)} \}\subseteq \mathcal M_\ell(\mathbb F_2)$ be the set containing all dyadic matrices with weight 1, so that $\6D^{(i)}$ is the dyadic matrix whose signature has a $1$ in position $i \in \{0,\cdots, 2^\ell-1\}$. 
Each $\6D^{(i)}$ is a \emph{\ac{DPM}}. We define the  bijection 
\[
\phi : \{0,\dots,2^\ell-1\} \mapsto \mathbb F_2^\ell
\]
that maps each index $i$ to its 
$\ell$-bit binary representation \[
\phi(i) = (b_{\ell-1},\dots,b_1,b_0), 
\qquad 
i = \sum_{k=0}^{\ell-1} b_k\,2^k,\; b_k \in \mathbb F_2.
\]  Note that each \ac{DPM} (excluding $\6I$) has multiplicative order $2$, since
$${(\6D^{(i)})}^2 = \6D^{(i)} 
(\6D^{(i)})^T = \6D^{(i)} 
(\6D^{(i)})^{-1} = \6I_{2^\ell},\quad \forall \, i.$$
It is easy to see that any dyadic matrix can be uniquely expressed as a linear combination of \acp{DPM}, that is,
$$\exists! \, \, \6m \in \mathbb F_2^{2^\ell}: \6M = \sum_{i = 0}^{2^\ell-1} m_i\6D^{(i)},\,\,\, \forall \, \6M \in \mathcal M_\ell(\mathbb F_2).$$
Furthermore, it is straightforward to note that, for any two \acp{DPM} $\6D^{(i)}$ and $\6D^{(j)}$, it holds that $\6D^{(i)}
\6D^{(j)} = \, \6D^{(i\oplus j)}$. 

The next lemma 
naturally follows.

\begin{lemma}\label{lem:rank}
Let $\ell \geq 1$ be an integer and $\6M\in\mathcal M_\ell(\mathbb F_2)$ with signature $\6m$. Then
$$\6M^2 = 
\begin{cases}
    \60, \, \, \, \, \, \, \, \, \text{if }|\6m|\text{ is even,}
   \\\6I_{2^\ell}, \, \, \, \, \text{if }|\6m|\text{ is odd.}
\end{cases}$$
Hence, $\6M$ is non singular if and only if its signature has odd weight and, in such a case, $\6M^{-1} = \, \6M$.
\end{lemma}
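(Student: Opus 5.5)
Write $\6M=\sum_{i} m_i\6D^{(i)}$ using the unique decomposition into \acp{DPM} recalled above, and expand the square in the commutative ring $\mathcal M_\ell(\mathbb F_2)$. Since the ring is commutative and we work over $\mathbb F_2$, the Frobenius identity $(\6A+\6B)^2=\6A^2+2\6A\6B+\6B^2=\6A^2+\6B^2$ holds. Iterating it over the sum gives
\begin{equation*}
\6M^2=\sum_{i=0}^{2^\ell-1} m_i^2\,(\6D^{(i)})^2=\sum_{i=0}^{2^\ell-1} m_i\,\6I_{2^\ell}=|\6m|\,\6I_{2^\ell}.
\end{equation*}
Here I use $m_i^2=m_i$ and $(\6D^{(i)})^2=\6D^{(i\oplus i)}=\6D^{(0)}=\6I_{2^\ell}$. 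Reducing $|\6m|$ modulo $2$ yields $\60$ for even weight and $\6I_{2^\ell}$ for odd weight, which is the displayed formula.

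Equivalently, without invoking Frobenius explicitly, one expands $\sum_{i,j} m_im_j\6D^{(i\oplus j)}$. The cross terms with $i\neq j$ occur in pairs $(i,j),(j,i)$ and cancel, because $\6D^{(i)}\6D^{(j)}=\6D^{(j)}\6D^{(i)}$ (as $i\oplus j=j\oplus i$). The diagonal terms then give $|\6m|\,\6I$.

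For the ``hence'' part, the odd-weight case is immediate: $\6M\6M=\6I$, so $\6M$ is invertible with $\6M^{-1}=\6M$. For the even-weight case, $\6M^2=\60$. If $\6M$ were invertible, multiplying by $\6M^{-1}$ would give $\6M=\60$, which is not invertible since $\ell\ge 0$ gives a nonempty matrix. This is a contradiction, so $\6M$ is singular. The two cases together give the ``if and only if''.

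There is no real obstacle here. The only point needing care is justifying commutativity of the \acp{DPM}, so that the cross terms cancel. This follows directly from $\6D^{(i)}\6D^{(j)}=\6D^{(i\oplus j)}$ and the symmetry of $\oplus$, or from the commutativity of $\mathcal M_\ell(\mathbb F_2)$ stated in Definition~\ref{def:dyadic}.
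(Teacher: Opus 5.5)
Your proof is correct and matches the argument the paper intends when it says the lemma ``naturally follows'' from the unique decomposition $\6M=\sum_i m_i\6D^{(i)}$ and the rule $\6D^{(i)}\6D^{(j)}=\6D^{(i\oplus j)}$. The cross terms cancel in pairs, leaving $\6M^2=|\6m|\,\6I_{2^\ell}$, and both invertibility claims follow immediately. The paper gives no explicit proof, and your treatment of the even-weight case, where $\6M^2=\60$ forces singularity, fills in the one step it leaves implicit.
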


A matrix is said to be \emph{quasi-dyadic (\ac{QD})} if it is composed only of dyadic blocks, and a linear code is referred to as \ac{QD} if it is possible to represent it using a \ac{QD} \ac{PCM} or \ac{GM}.

\begin{proposition}
    For any dyadic code $C$, represented by a dyadic \ac{PCM}, 
    its automorphisms group contains all the $2^\ell$ \acp{DPM} $\6D^{(i)}$, $ \forall i \in \{ 0, \dots, 2^{\ell} - 1 \}$.
    Then, for any dyadic code, the automorphism group has size \emph{at least} $2^\ell$, i.e, $|\mathcal{A}(C)| \geq 2^{\ell}$.
\end{proposition}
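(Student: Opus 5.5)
The plan is to show directly that each \ac{DPM} $\6D^{(i)}$, viewed as a coordinate permutation $\pi_i : j \mapsto j \oplus i$ on $\{0,\dots,2^\ell-1\}$, maps $C=\ker(\6H)$ into itself, where $\6H\in\mathcal M_\ell(\mathbb F_2)$ is the dyadic \ac{PCM}. Since $\6D^{(i)}$ has signature $\6e_i$, its entries are $d_{j,k}=1$ if and only if $k = j\oplus i$. Hence for a row vector $\6c$, the product $\6c\6D^{(i)}$ has entries $(\6c\6D^{(i)})_k = c_{k\oplus i}$, so $\6c\6D^{(i)} = (c_{\pi_i(0)},\dots,c_{\pi_i(2^\ell-1)})$. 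This is exactly a coordinate permutation. The maps $\pi_i$ are pairwise distinct for distinct $i$, since $\pi_i(0)=i$, so we obtain $2^\ell$ distinct permutations.

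The key step uses the commutativity of the ring $\mathcal M_\ell(\mathbb F_2)$ (Definition~\ref{def:dyadic}). Both $\6H$ and $\6D^{(i)}$ lie in it, so $\6H\6D^{(i)}=\6D^{(i)}\6H$. Moreover, $\6D^{(i)}$ is symmetric. Let $\6c\in C$, i.e., $\6H\6c^T=\60$. Then
\[
\6H\big(\6c\6D^{(i)}\big)^T=\6H\6D^{(i)}\6c^T=\6D^{(i)}\6H\6c^T=\60,
\]
so $\6c\6D^{(i)}\in C$. Since $\6c\mapsto\6c\6D^{(i)}$ is injective (indeed $(\6D^{(i)})^2=\6I$) and $C$ is finite, this map is a bijection of $C$ onto itself. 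Therefore $\pi_i\in\mathcal A(C)$ for every $i$.

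If the code is instead given through a dyadic \ac{GM}, the same argument applies to $C^\perp$. One then invokes $\mathcal A(C)=\mathcal A(C^\perp)$ from \cite{MacWilliams_Sloane}.

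Finally, $\{\6D^{(i)}\}$ is closed under multiplication, because $\6D^{(i)}\6D^{(j)}=\6D^{(i\oplus j)}$. So these permutations form a subgroup of $\mathcal A(C)$ isomorphic to $(\mathbb F_2^\ell,\oplus)$, of order $2^\ell$. This gives $|\mathcal A(C)|\ge 2^\ell$.

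The only delicate point is the bookkeeping: checking that the paper's convention $\pi^{-1}(\6c)=(c_{\pi(0)},\dots)$ matches the action of $\6D^{(i)}$. Since $\pi_i$ is an involution, $\pi_i=\pi_i^{-1}$ and the convention is immaterial.
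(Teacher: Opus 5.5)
Your proof is correct and uses the same key fact as the paper: $\mathcal M_\ell(\mathbb F_2)$ is commutative, so $\mathbf H\mathbf D^{(i)}=\mathbf D^{(i)}\mathbf H$, and hence the permutation $j\mapsto j\oplus i$ maps the code onto itself. The paper gives no separate proof of this dyadic statement and writes the argument out only for the quasi-dyadic analogue (Proposition~\ref{propo:automtype}), working with the generator matrix and its row space where you use the kernel of the parity-check matrix; your explicit check that the $2^\ell$ permutations are distinct and form a subgroup fills in a detail the paper leaves implicit.
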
 

Next we present a similar result for \ac{QD} codes.

\begin{proposition}
    Let $\6G\in\mathbb F_2^{k\times n}$ be a \ac{GM} for a \ac{QD} code.
    Then, its automorphisms are expressed as 
    \begin{equation*}
        \6I_{n/2^{\ell}} \otimes \6D^{(i)}, \, \, \forall i \in \{ 0, \dots, 2^{\ell} - 1 \};
    \end{equation*} 
    so, $|\mathcal{A}(C)| \geq 2^{\ell}$.
    \label{propo:automtype}
\end{proposition}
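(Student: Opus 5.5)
Let $\6G = [\6G_{r,s}]$ with $\6G_{r,s}\in\mathcal M_\ell(\mathbb F_2)$, $r=0,\dots,k/2^\ell-1$, $s=0,\dots,n/2^\ell-1$, be the block decomposition of the \ac{QD} generator matrix. Here we implicitly assume that $k$ and $n$ are multiples of $2^\ell$. If $\6G$ is only quasi-dyadic in its column blocks, the same argument works row-block by row-block. A permutation $\pi$ is an automorphism of $C=\text{row}(\6G)$ exactly when $\text{row}(\6G\6P_\pi)=\text{row}(\6G)$, where $\6P_\pi$ is the corresponding permutation matrix. The plan is to exhibit, for $\6P = \6I_{n/2^\ell}\otimes\6D^{(i)}$, an invertible $\6Q$ with $\6G\6P=\6Q\6G$.

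First, I will compute $\6G(\6I_{n/2^\ell}\otimes\6D^{(i)})$ blockwise. Its $(r,s)$ block is $\6G_{r,s}\6D^{(i)}$. Next, I will use the fact (Definition~\ref{def:dyadic}) that $\mathcal M_\ell(\mathbb F_2)$ is a commutative ring containing every \ac{DPM}, so $\6G_{r,s}\6D^{(i)}=\6D^{(i)}\6G_{r,s}$. Hence
\begin{equation*}
\6G\left(\6I_{n/2^\ell}\otimes\6D^{(i)}\right)=\left(\6I_{k/2^\ell}\otimes\6D^{(i)}\right)\6G .
\end{equation*}
Since $(\6D^{(i)})^2=\6I$, the left factor $\6Q=\6I_{k/2^\ell}\otimes\6D^{(i)}$ is an invertible (indeed permutation) matrix. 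Therefore $\text{row}(\6G\6P)=\text{row}(\6Q\6G)=\text{row}(\6G)$, which shows that $\6P$ maps $C$ onto itself. The same reasoning applies with a \ac{QD} \ac{PCM} $\6H$, using $\ker(\6H\6P)=\6P^{-1}\ker(\6H)$ together with $\mathcal A(C)=\mathcal A(C^\perp)$.

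For the cardinality bound, I will observe that the maps $i\mapsto\6I_{n/2^\ell}\otimes\6D^{(i)}$ are pairwise distinct. Each $\6D^{(i)}$ sends the first coordinate of a block to coordinate $i$, so different $i$ give different permutations. Moreover $\6D^{(i)}\6D^{(j)}=\6D^{(i\oplus j)}$, so these matrices form a subgroup of order $2^\ell$ isomorphic to $\mathbb F_2^\ell$, and hence $|\mathcal A(C)|\ge 2^\ell$.

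The only delicate point is interpreting the statement, since it should read ``contains'' rather than claiming that these are all the automorphisms. I also need the row-block dimension of $\6G$ to be a multiple of $2^\ell$ for $\6Q$ to be defined. If it is not (for instance, if a \ac{QD} \ac{PCM} is given), I would switch to the \ac{PCM} formulation via the dual code and apply the same commutation argument there.
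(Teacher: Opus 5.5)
Your proposal is correct and follows the paper's proof. Both arguments compute $\mathbf G(\mathbf I_{n/2^\ell}\otimes\mathbf D^{(i)})$ blockwise, use commutativity of $\mathcal M_\ell(\mathbb F_2)$ to rewrite it as $(\mathbf I_{k/2^\ell}\otimes\mathbf D^{(i)})\mathbf G$, and conclude that the row space is unchanged; your check that the $2^\ell$ permutations are distinct (forming a group isomorphic to $\mathbb F_2^\ell$), and your remark that ``expressed as'' should read ``include'', fill in details the paper leaves implicit.
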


\begin{IEEEproof}
Considering the structure of $\6G$, we can write 
\begin{align*}
    \6G' & = \6G \,\big(\6I_{n/2^\ell} \otimes \6D^{(i)}\big) 
    \\&=  \,\big(\6I_{k/2^\ell} \otimes \6D^{(i)}\big)  \6G=
\begin{bmatrix}
\6D^{(i)} & \dots & \60 \\ 
\vdots & \ddots & \vdots \\ 
\60 & \dots &\6D^{(i)}
\end{bmatrix}\6G.
\end{align*} 
Thus, the matrix $\6G'$ generates the same code generated by $\6G$. 
Notice that, in the above equation, we have used the fact that right multiplying a \ac{QD} matrix by $\6I_{n/2^\ell}\otimes \6D^{(i)}$ corresponds, in practice, to multiplying each dyadic block $\6M$ of $\6G$ by $\6D^{(i)}$. 
Because of commutativity, this is the same as $\6D^{(i)} \6M$.
\end{IEEEproof}

\section{Dual-Containing \ac{QD}-\ac{QLDPC} Codes}
\label{sec:QD_LDPC}
In this section, we present two new constructions of the \ac{PCM} of classical \ac{QD} codes as $w\times u$ arrays of dyadic matrices, and show how to use them 
to design \ac{dc} \ac{CSS} 
codes.

The \emph{orthogonality condition} is expressed by~\eqref{eq:orto}, which characterizes \ac{dc} \ac{CSS} codes, while the \emph{cycles condition} regards the avoidance of length-$4$ cycles. 
In particular, we know that there is a length-$4$ cycle in $\6H$ if there exists at least one pair of rows or columns that share at least two non-zero positions.

\begin{theorem} \label{theo:girt_DC}
    Let us consider an  $[n, k]$ classical code $C \subseteq \mathbb{F}_2^{n}$, represented by $\6H \in \mathbb{F}_2^{m \times n}$ with column weight $v > 1$. 
    Then, it cannot satisfy both the orthogonality and the cycles conditions.
\end{theorem}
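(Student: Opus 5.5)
The argument is by contradiction: assume $\6H$ satisfies both the orthogonality condition $\6H\6H^T=\60$ from \eqref{eq:orto} and the cycles condition (its Tanner graph has no length-$4$ cycles). We then exhibit a pair of rows sharing at least two non-zero positions. By the remark preceding the statement, such a pair is exactly a length-$4$ cycle, which gives the contradiction.

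First, we use the column weight. Since $v>1$, we can pick any column index $j$ and two distinct rows $\6h_a$ and $\6h_b$ of $\6H$ with $h_{a,j}=h_{b,j}=1$. If there are zero columns, or columns of weight less than $2$, the hypothesis $v>1$ still guarantees at least one column of weight $\geq 2$, and one such column is all we need. So $\supp(\6h_a)\cap\supp(\6h_b)\ni j$ is non-empty.

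Next, we apply orthogonality. The $(a,b)$ entry of $\6H\6H^T$ is $\6h_a\cdot\6h_b=|\supp(\6h_a)\cap\supp(\6h_b)| \bmod 2$, and this must be $0$. The intersection therefore has even cardinality. Being non-empty, it has cardinality at least $2$. Hence rows $a$ and $b$ share at least two non-zero positions $j$ and $j'$, and the entries $(a,j),(b,j),(b,j'),(a,j')$ form a length-$4$ cycle. This contradicts the cycles condition.

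The only delicate point is making sure $a\neq b$, i.e.\ that the two rows are genuinely distinct. This is exactly what $v>1$ provides; with $v=1$ the argument collapses, because then no two rows meet in any column. Note that the diagonal entries $\6h_a\cdot\6h_a=0$, which force even row weights, are not needed. The consequence is that every \ac{dc} \ac{CSS} code of this type has girth $4$, which is the fact stated in the introduction.
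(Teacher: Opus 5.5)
Your proof is correct and is essentially the paper's argument. The paper combines the two constraints (every pairwise row overlap even, and at most one) to force all row supports to be pairwise disjoint, which contradicts column weight $v>1$. Your direct version starts from a column of weight $\geq 2$ and derives an overlap of size at least $2$, which is the same reasoning run in the contrapositive direction.
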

\begin{IEEEproof}
To satisfy  the orthogonality condition, any possible pair of rows in $\6H$ should share a null or even number of positions occupied by a “$1$”. Instead, the absence of length-$4$ cycles in the Tanner graph requires that any two distinct rows of $\6H$ share \emph{at most
one} common non-zero position. Combining the two constraints, the number of
common non-zero positions must be both even and at most one, hence it must be $0$. 
Therefore, the supports of all the rows of $\6H$ must be pair-wise disjoint, which implies that every column has weight at most $1$, i.e., $v = 1$. 
This contradicts the hypothesis of the theorem, which assumes $v > 1$.
Hence, $C$ cannot satisfy both the orthogonality and the cycles conditions simultaneously.
\end{IEEEproof}

\begin{theorem} \label{theo:distance}
    Let $\6H\in\mathbb F_2^{m \times n}$ be the \ac{PCM} of a 
    \ac{dc} code $C$ regular in the row weight $\lambda_{\emph{r}}$. 
    Then, we get $d(C) \leq \lambda_{\emph{r}}$.
\end{theorem}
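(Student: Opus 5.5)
The plan is to exhibit a nonzero codeword of $C$ whose weight is exactly $\lambda_{\text{r}}$. The rows of the parity-check matrix are natural candidates, because the code is dual-containing.

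First, I will recall what dual-containment gives. By definition $C^{\perp}\subseteq C$, and $C^{\perp}=\text{row}(\6H)$: the dual of $\ker(\6H)$ is the rowspace of $\6H$. Equivalently, the orthogonality condition~\eqref{eq:orto}, $\6H\6H^T=\60$, says that every row $\6h$ of $\6H$ satisfies $\6H\6h^T=\60$. Hence $\6h\in\ker(\6H)=C$.

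Second, I will select one row $\6h$ of $\6H$. Since $\6H$ is regular in the row weight, $|\6h|=\lambda_{\text{r}}$. I must make sure $\6h\neq\60$. This holds as soon as $\lambda_{\text{r}}\geq 1$, which is the only meaningful case; if $\lambda_{\text{r}}=0$, then $\6H=\60$, so $C=\mathbb F_2^n$ and the statement is degenerate. I will add a brief remark excluding this trivial case.

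Third, I will use the fact recalled in the background that $d(C)$ equals the minimum weight of a nonzero codeword. This gives $d(C)\leq|\6h|=\lambda_{\text{r}}$.

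There is no real obstacle here. The only point needing care is the identification $C^{\perp}=\text{row}(\6H)$, which holds even when $\6H$ is overcomplete, together with the nonzero-row caveat just mentioned. I will also note that regularity is used only to fix the weight of the chosen row; without it, the same argument gives $d(C)\leq\min_i|\6h_i|$ over the nonzero rows $\6h_i$ of $\6H$.
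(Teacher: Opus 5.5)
Your proposal is correct and follows essentially the same argument as the paper: every row of $\mathbf H$ has weight $\lambda_{\text{r}}$ and lies in $C^{\perp}=\text{row}(\mathbf H)\subseteq C$, so it is a nonzero codeword of $C$, which gives $d(C)\leq\lambda_{\text{r}}$. Your handling of the degenerate case $\lambda_{\text{r}}=0$ and your check that the argument holds for an overcomplete $\mathbf H$ are points the paper leaves implicit; they add care but do not change the argument.
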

\begin{IEEEproof}
    The PCM $\6H$ of 
    $C$ is the \ac{GM} of 
    $C^{\perp}$. This means that the rows of $\6H$ with weight $\lambda_{\text{r}}$ are codewords of $C^{\perp}$.
    Since it holds that $C^{\perp} \subseteq C$, 
    the codewords of $C^{\perp}$ are also codewords of $C$.
    Then, $d(C)$ is upper-bounded by 
    $\lambda_{\text{r}}$.
\end{IEEEproof}

We now present some properties of \ac{QD} \acp{PCM} composed exclusively of \acp{DPM}.

\begin{proposition}
    \label{prop:rank_deficiency}
    Let $\6H \in\mathbb{F}_2^{2^\ell w\times 2^\ell u}$ be a $w \times u$ array of \acp{DPM}. Then
    \[
    \mathrm{rk}(\6H) \le w2^{\ell} - (w-1).
    \]
\end{proposition}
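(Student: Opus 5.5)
The bound will follow from exhibiting $w-1$ independent linear dependencies among the rows of $\6H$, i.e., a subspace of dimension at least $w-1$ inside the left kernel $\{\6x \in \mathbb{F}_2^{2^\ell w} : \6x\6H = \60\}$. Then $\mathrm{rk}(\6H) = w2^\ell - \dim(\text{left kernel}) \le w2^\ell - (w-1)$.

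The key observation is that every \ac{DPM} is a permutation matrix. Hence the all-one vector $\mathbf{1}_{2^\ell}$ satisfies $\mathbf{1}_{2^\ell}\6D^{(i)} = \mathbf{1}_{2^\ell}$ for every $i$. Index the block rows of $\6H$ by $a \in \{0,\dots,w-1\}$, and let $\6e_a \in \mathbb{F}_2^{2^\ell w}$ be the indicator vector of the $2^\ell$ rows in block row $a$, i.e., $\6e_a = (\60,\dots,\mathbf{1}_{2^\ell},\dots,\60)$ with the ones in the $a$-th block. Then $\6e_a\6H$ is the concatenation over block columns $b$ of $\mathbf{1}_{2^\ell}\6D^{(i_{a,b})} = \mathbf{1}_{2^\ell}$. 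So $\6e_a\6H = \mathbf{1}_{2^\ell u}$ for every $a$: the sum of the rows in each block row is the all-one vector of length $n$.

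It follows that $(\6e_a \oplus \6e_0)\6H = \60$ for $a = 1,\dots,w-1$. These $w-1$ vectors are linearly independent, because $\6e_a \oplus \6e_0$ is the only one among them with nonzero entries in block $a$. They therefore span a $(w-1)$-dimensional subspace of the left kernel, and rank--nullity applied to $\6H^T$ gives the claim.

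There is no real obstacle here; the only point needing care is the independence check and the fact that the identity $\6D^{(0)}$ is also a permutation matrix. If one wants to remark on sharpness, one could note that the bound need not be tight, since additional dependencies may arise from the specific \ac{DPM} arrangement. That is not required for the inequality.
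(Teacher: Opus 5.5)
Your proof is correct and follows the same approach as the paper's: every \acp{DPM} block is a permutation matrix, so the rows of each block-row sum to the all-one vector, which yields $w-1$ row dependencies. Your version is slightly more careful, because you write down the $w-1$ left-kernel vectors $\mathbf{e}_a \oplus \mathbf{e}_0$ explicitly and check that they are linearly independent, whereas the paper only asserts that $w-1$ of the block-row sums are dependent.
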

\begin{IEEEproof}
    Each \ac{DPM} (as any permutation matrix) has the property that the sum (over $\mathbb{F}_2$) of all its rows equals the all-one vector of length $2^{\ell}$. Consequently, for any fixed block-row of $\6H$, the sum of the $2^{\ell}$ corresponding rows is the all-one row. Since $\6H$ has $w$ block-rows, the sums of the rows associated with different block-rows are identical. Therefore, among these $w$ row sums, at least $w-1$ are linearly dependent.
\end{IEEEproof}

\begin{remark} \label{rem:expected_parameters}
    Consider  a $w \times u$ array of \acp{DPM} $\6H \in\mathbb{F}_2^{2^\ell w\times 2^\ell u}$, representing the code $C$. 
    According to Proposition~\ref{prop:rank_deficiency}, the associated classical code $C$ has \emph{at least} $k = (u - w)2^{\ell}$, and therefore the rate satisfies $R \ge 1 - w/u$.
    Since $\6H$ is a \ac{QD} matrix formed by \acp{DPM}, the code is regular in the column weight, $\lambda_{\emph{c}} = w$, and in the row weight, $\lambda_{\emph{r}} = u$.  
    Moreover, the dimension of the associated \ac{dc} \ac{CSS} code is, at least, $2k - n 
    = (u - 2w)2^{\ell}$, and the weight of its stabilizer generators is $u$. 
    Then, if we aim to have a \ac{CSS} code with \emph{expected} dimension greater than $0$, we need $w < u/2$.
\label{rem:parconstrA}
\end{remark}

The following original \ac{LDPC} code constructions will be used to design component codes of \ac{dc} \ac{CSS} codes. 

\subsection{Construction A}\label{subsec:A}

\emph{Construction~A} returns a \ac{PCM} $\mathbf H \in\mathbb{F}_2^{2^\ell w\times 2^\ell u}$ which is a $w\times u$ array of \acp{DPM}.
We assume
\(
u=2^m
\)
for some integer $m> 1$, and we identify the block-column indices
 as
\(
j\in\{0,1,\ldots,u-1\}.
\)
The construction proceeds as follows.
\begin{enumerate}
    \item Select $u\leq 2^\ell$ distinct \acp{DPM}
    \[
    \mathbf Q_0,\mathbf Q_1,\ldots,\mathbf Q_{u-1}
    \in \mathcal D_\ell(\mathbb F_2).
    \]
    These matrices form the reference block-row
    \[
    \mathbf h_0=
    \big[
    \mathbf Q_0\ \mathbf Q_1\ \cdots\ \mathbf Q_{u-1}
    \big].
    \]
    \item Select $w\leq u$ distinct shift values
    \[
    a_0,a_1,\ldots,a_{w-1}
    \in\{0,1,\ldots,u-1\},
    \]
    with $a_0=0$.
    \item For each $i\in\{0,1,\ldots,w-1\}$, define the $i$-th
    block-row
    \[
    \mathbf h_i=
    \big[
    \mathbf D_{i,0}\ \mathbf D_{i,1}\ \cdots\ \mathbf D_{i,u-1}
    \big]
    \] by setting
    \[
    \mathbf D_{i,j}=\mathbf Q_{j\oplus a_i},
    \qquad j=0,1,\ldots,u-1.
    \]
\end{enumerate}
The \ac{PCM} $\mathbf H$ is obtained by vertically stacking the block-rows $\mathbf h_0,\mathbf h_1,\ldots,\mathbf h_{w-1}$.

\begin{proposition}
Let $\mathbf H$ be obtained using \emph{Construction~A}. Then
\[
\mathbf H\mathbf H^T=\mathbf 0.
\]
\end{proposition}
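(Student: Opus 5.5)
We verify $\mathbf H\mathbf H^T=\mathbf 0$ block by block. The $(i,i')$ block of $\mathbf H\mathbf H^T$ is
\[
\mathbf B_{i,i'}=\sum_{j=0}^{u-1}\mathbf D_{i,j}\mathbf D_{i',j}^T=\sum_{j=0}^{u-1}\mathbf Q_{j\oplus a_i}\mathbf Q_{j\oplus a_{i'}},
\]
where we use that dyadic matrices are symmetric, so $\mathbf D^T=\mathbf D$. Since each $\mathbf Q_t$ is a \ac{DPM}, we write $\mathbf Q_t=\mathbf D^{(q_t)}$ and use $\mathbf D^{(x)}\mathbf D^{(y)}=\mathbf D^{(x\oplus y)}$. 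Each summand is then a single \ac{DPM}:
\[
\mathbf Q_{j\oplus a_i}\mathbf Q_{j\oplus a_{i'}}=\mathbf D^{(q_{j\oplus a_i}\oplus q_{j\oplus a_{i'}})}.
\]
Because $u=2^m$, the map $j\mapsto j\oplus a$ is a permutation of $\{0,\ldots,u-1\}$ for every $a<u$, so every index stays in range.

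\emph{Diagonal blocks} ($i=i'$). Every summand equals $\mathbf Q_{j\oplus a_i}^2=\mathbf I$. We therefore get $u$ copies of $\mathbf I$, and since $u=2^m$ with $m>1$ is even, $\mathbf B_{i,i}=\mathbf 0$.

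\emph{Off-diagonal blocks} ($i\neq i'$). Set $\delta=a_i\oplus a_{i'}\neq 0$ and substitute $t=j\oplus a_i$, so that $j\oplus a_{i'}=t\oplus\delta$. Then
\[
\mathbf B_{i,i'}=\sum_{t=0}^{u-1}\mathbf Q_t\mathbf Q_{t\oplus\delta}.
\]
The involution $t\mapsto t\oplus\delta$ has no fixed points because $\delta\neq0$, so it partitions $\{0,\ldots,u-1\}$ into pairs $\{t,t\oplus\delta\}$. The two terms of a pair are $\mathbf Q_t\mathbf Q_{t\oplus\delta}$ and $\mathbf Q_{t\oplus\delta}\mathbf Q_t$. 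These are equal because $\mathcal M_\ell(\mathbb F_2)$ is commutative, so each pair cancels over $\mathbb F_2$ and $\mathbf B_{i,i'}=\mathbf 0$.

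The only point that needs care is the off-diagonal pairing. It relies on three facts: $\delta\neq 0$, which holds because the shifts $a_i$ are distinct; closure of $\{0,\ldots,u-1\}$ under $\oplus$, which holds because $u$ is a power of two; and commutativity of dyadic matrices. Distinctness of the $\mathbf Q_j$ is not needed for orthogonality.
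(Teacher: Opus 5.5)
Your proof is correct and follows essentially the same route as the paper: the diagonal blocks vanish because they are a sum of $u$ copies of $\mathbf I$ with $u$ even, and the off-diagonal blocks vanish by pairing indices under the fixed-point-free involution determined by $a_i\oplus a_{i'}\neq 0$, so that the two commuting terms in each pair cancel. Your reindexing $t=j\oplus a_i$ only makes the paper's pairing $\{j,j\oplus c\}$ more explicit, and you state outright that $u$ being a power of two keeps all indices in range, which the paper uses without comment.
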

\begin{proof}
We prove that any two block-rows of $\mathbf H$ are orthogonal. First, consider the product of a block-row with itself. Since each
$\mathbf D_{i,j}$ is a \ac{DPM}, we have
\(
\mathbf D_{i,j}\mathbf D_{i,j}^T=\mathbf I.
\) 
Therefore,
\[
\mathbf h_i\mathbf h_i^T
=
\sum_{j=0}^{u-1}
\mathbf D_{i,j}\mathbf D_{i,j}^T
=
u\mathbf I
=
\mathbf 0,
\]
because $u=2^m$ is even and the operations are over $\mathbb F_2$. Now consider two distinct block-rows $\mathbf h_i$ and $\mathbf h_s$,
with $i\neq s$. By construction,
\[
\mathbf h_i\mathbf h_s^T
=
\sum_{j=0}^{u-1}
\mathbf Q_{j\oplus a_i}
\mathbf Q_{j\oplus a_s}^{T}.
\]
Let \(
c=a_i\oplus a_s\). Since $a_i\neq a_s$, we have $c\neq 0$. The map
$j\mapsto j\oplus c$ has no fixed points and partitions the set
$\{0,1,\ldots,u-1\}$ into disjoint pairs
$\{j,j\oplus c\}$. 
For each such pair, the corresponding two terms in
$\mathbf h_i\mathbf h_s^T$ are
\[
\mathbf Q_{j\oplus a_i}
\mathbf Q_{j\oplus a_s}^{T}
\oplus
\mathbf Q_{(j\oplus c)\oplus a_i}
\mathbf Q_{(j\oplus c)\oplus a_s}^{T}.
\]
Using $c=a_i\oplus a_s$, these two terms have the form
\[
\mathbf Q_A\mathbf Q_B^T \oplus \mathbf Q_B\mathbf Q_A^T
\]
for suitable \acp{DPM} $\mathbf Q_A$ and $\mathbf Q_B$. Since
\acp{DPM} are symmetric and commute, this becomes
\[
\mathbf Q_A\mathbf Q_B \oplus \mathbf Q_B\mathbf Q_A
=
2\mathbf Q_A\mathbf Q_B
=
\mathbf 0
\]
Hence all terms cancel pairwise, and \(
\mathbf h_i\mathbf h_s^T=\mathbf 0\). Therefore all block-rows are mutually orthogonal, and
\(
\mathbf H\mathbf H^T=\mathbf 0.
\)
\end{proof}

We  remark that the final \ac{PCM} does not need to contain all the $u$ block-rows associated with the possible XOR shifts. 
Rather, any subset of $w\leq u$ distinct shift-generated block-rows may be selected. 
All classical and \ac{dc} \ac{CSS} codes obtained in this way are referred to as \emph{Construction~A} codes. 
We now provide a bound on the minimum distance of these codes.

\begin{corollary}
    Let $C$ be a \emph{Construction~A} code for some $u$. 
    Then, it holds that $d(C) \leq u$.
    \label{cor:distcosta}
\end{corollary}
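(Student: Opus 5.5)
The plan is to derive the bound directly from Theorem~\ref{theo:distance}, after checking that its hypotheses hold for every \emph{Construction~A} code. That theorem needs two things: the code $C$ is \ac{dc}, and its \ac{PCM} is regular in the row weight. Once both are established, $d(C)\le\lambda_{\text{r}}$ follows, and it only remains to identify $\lambda_{\text{r}}$ with $u$.

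\textbf{Step 1 (dual containment).} By the preceding proposition, every \emph{Construction~A} matrix satisfies $\6H\6H^T=\60$. This means every row of $\6H$ lies in $\ker(\6H)=C$. Hence $C^\perp=\mathrm{row}(\6H)\subseteq C$, so $C$ is \ac{dc}.

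The same conclusion holds when only a subset of $w\le u$ shift-generated block-rows is kept. The orthogonality proof works pairwise on block-rows, and the self-product vanishes because $u$ is even, so dropping block-rows does not affect it.

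\textbf{Step 2 (row weight).} Each block-row $\6h_i=[\6Q_{0\oplus a_i}\ \cdots\ \6Q_{(u-1)\oplus a_i}]$ is a horizontal concatenation of $u$ \acp{DPM}. Each \ac{DPM} is a permutation matrix, so every row of it has exactly one $1$. Therefore every row of $\6H$ has weight exactly $u$, and $\6H$ is regular with $\lambda_{\text{r}}=u$, as already noted in Remark~\ref{rem:expected_parameters}.

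\textbf{Step 3 (conclusion).} Applying Theorem~\ref{theo:distance} now gives $d(C)\le u$. Concretely, any row of $\6H$ is a nonzero codeword of $C^\perp\subseteq C$ of weight $u$, which bounds the minimum distance.

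There is no real obstacle here. The only point that needs care is Step~1 in the variant where fewer than $u$ block-rows are selected, and that is handled by the pairwise nature of the orthogonality argument.
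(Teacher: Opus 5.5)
Your proposal is correct and follows the paper's route: the paper's proof is a one-line appeal to Theorem~\ref{theo:distance}. You spell out the two hypotheses that appeal relies on. The first is dual containment, which follows from $\mathbf{H}\mathbf{H}^T=\mathbf{0}$ for any selection of $w\le u$ distinct shifts. The second is row regularity with $\lambda_{\text{r}}=u$, since each row meets exactly one nonzero entry in each of the $u$ dyadic permutation blocks.
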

\begin{IEEEproof}
    Immediately follows from Theorem~\ref{theo:distance}.
\end{IEEEproof}

Corollary~\ref{cor:distcosta} should be interpreted in the fixed-protograph regime and at fixed coding rate. If the stabilizer generator weight $u$ is fixed and the block length is increased only through the lifting size, the upper bound on the minimum distance remains fixed. 
Hence, \emph{Construction~A} does not yield a growing-distance family under a strict bounded-weight \ac{QLDPC} constraint. 
This limitation, however, does not apply when larger protographs are considered. 
In that case, $u$ is allowed to increase with the designed instance, while the coding rate can be kept approximately fixed, and Corollary~\ref{cor:distcosta} gives a correspondingly larger upper bound. 
This behavior is illustrated in Table~\ref{tab:constrA_distance}, where the dyadic-block size is fixed to $2^\ell=64$ and \emph{Construction~A} codes with comparable quantum rates are obtained by increasing the number of block-columns from $u=8$ to $u=32$. 
The resulting upper bound increases accordingly from $8$ to $32$, and the estimated minimum distance follows the same trend, increasing from $8$ to $32$ in the considered instances. Note that we estimate the quantum minimum distance of $\mathcal{C}_{\text{A}}\llbracket512,96\rrbracket$ and $\mathcal{C}_{\text{A}}\llbracket1024,172\rrbracket$ by using the tool in~\cite{Pryadko_2022}. For $\mathcal{C}_{\text{A}}\llbracket2048,384\rrbracket$, we estimate the classical minimum distance, which lower bounds the quantum one, by using the tool available in~\cite{MacKayDist}.

\begin{table}[t]
\centering
\caption{Finite-length examples of Construction~A codes with fixed dyadic-block size $2^\ell$ and comparable coding rates.}
\label{tab:constrA_distance}
\begin{tabular}{|c|c|c|c|c|c|}
\hline
\textbf{Code} & \multicolumn{5}{c|}{\textbf{Parameters}} \\
\cline{2-6}
 & $\ell$ & $u$ & $w$ & $R_{\text{Q}}$ & $d(\mathcal{C})$ \\
\hline
$\mathcal{C}_{\text{A}}\llbracket512,96\rrbracket$    & $6$ & $8$  & $5$  & $0.19$ & $8$  \\
\hline
$\mathcal{C}_{\text{A}}\llbracket1024,172\rrbracket$  & $6$ & $16$ & $8$  & $0.17$ & $16$ \\
\hline
$\mathcal{C}_{\text{A}}\llbracket2048,384\rrbracket$  & $6$ & $32$ & $15$ & $0.19$ & $\geq 32$ \\
\hline
\end{tabular}
\end{table}

\subsection{Construction~B} 

\emph{Construction~B} arises from the following theorem~\cite{baldelli2025quantum}.
\begin{theorem} \label{theo:oneRow}
    Let $u\geq 2$ be an integer and $\6H \in \mathbb F_2^{2^\ell\times u 2^\ell}$ be a \ac{QD} matrix with the following structure 
    \begin{equation} \label{eq:con_B}
    \6H = 
    \begin{bmatrix}
        \6M_0 & \6M_1 & \dots & \6M_{u-1}
    \end{bmatrix}, \, \,  \6M_i\in\mathcal M_\ell(\mathbb F_2),
    \end{equation}
     $\forall \, i \in \{ 0, \dots, u-1 \}$, where each $\6M_i$ is a dyadic matrix, with side $2^\ell$ and odd weight $v$.
    Let $C\subseteq \mathbb F_2^{u2^\ell}$ be the classical code whose \ac{PCM} is $\6H$. Under such assumptions:
\begin{itemize}
    \item[$(i)$] $\6H$ is full-rank, then $n - k = 2^\ell$ and $k = 2^\ell(u - 1)$;
    \item[$(ii)$] if all blocks $\6M_i$ are distinct, $C$ contains at least $\binom{u}{2}2^{1+\ell}$ codewords of weight $2v$; 
    \item[$(iii)$] if $u$ is even, then $C^{\perp} \subseteq C$, i.e., the code is \ac{dc}.
\end{itemize}
\end{theorem}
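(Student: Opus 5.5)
Each of the three claims follows directly from Lemma~\ref{lem:rank} together with the commutativity of the ring $\mathcal M_\ell(\mathbb F_2)$ (Definition~\ref{def:dyadic}).

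\emph{Item $(i)$.} Every block $\6M_i$ has odd weight $v$, so by Lemma~\ref{lem:rank} it is invertible with $\6M_i^{-1}=\6M_i$. In particular $\6M_0$ is a nonsingular $2^\ell\times 2^\ell$ submatrix of $\6H$. Hence $\text{rk}(\6H)=2^\ell$, i.e., $\6H$ is full-rank. This gives $n-k=2^\ell$ and $k=u2^\ell-2^\ell=2^\ell(u-1)$.

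\emph{Item $(ii)$.} I will exhibit the codewords explicitly. Fix a pair $i<j$ and a row index $t\in\{0,\dots,2^\ell-1\}$. Define $\6c$ to be zero outside block-columns $i$ and $j$, and to equal the $t$-th row of $\6M_j$ on block $i$ and the $t$-th row of $\6M_i$ on block $j$. In matrix form, the codeword is the $t$-th row of the $2^\ell\times u2^\ell$ matrix
\[
[\,\60\ \cdots\ \6M_j\ \cdots\ \6M_i\ \cdots\ \60\,].
\]
By symmetry of dyadic matrices and commutativity,
\[
\6H\6c^T=\6M_i\6M_j^T\6e_t^T+\6M_j\6M_i^T\6e_t^T=(\6M_i\6M_j+\6M_j\6M_i)\6e_t^T=\60 .
\]
Each row of a dyadic matrix of weight $v$ has weight $v$, so $|\6c|=2v$. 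This gives $2^\ell$ codewords per pair of blocks.

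To reach the count $\binom{u}{2}2^{1+\ell}$, I need a second family. Multiplying the previous one by the invertible $\6M_i\6M_j$ (which has odd weight $v^2$, hence is invertible) gives
\[
[\cdots\ \6M_i\ \cdots\ \6M_j\ \cdots]\,\bigl(\text{const}\bigr)^{-1}\ldots,
\]
but its weight need not be $2v$. A cleaner choice is to take the rows of
\[
[\,\cdots\ \6M_i\ \cdots\ \6M_j\ \cdots\,]
\]
premultiplied by $\6M_i\6M_j$. Using $\6M_i^2=\6I$, this becomes $[\cdots\ \6M_j\ \cdots\ \6M_i\ \cdots]$ again, so it gives nothing new.

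Instead I will use the family $[\cdots\ \6M_i\ \cdots\ \6M_j\ \cdots]$ itself, i.e., the rows of $\6M_i$ on block $i$ and of $\6M_j$ on block $j$. Its syndrome is $\6M_i^2+\6M_j^2=\6I+\6I=\60$ by Lemma~\ref{lem:rank}, and its weight is again $2v$. So for each pair there are two families of $2^\ell$ codewords each, $\{\6M_j,\6M_i\}$ and $\{\6M_i,\6M_j\}$ placed on blocks $i,j$.

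\textbf{Main obstacle: distinctness.} The count requires that all these codewords are pairwise distinct. Codewords from different pairs $\{i,j\}$ have different supports at block level, so they cannot coincide. Within one family, rows of a nonsingular matrix are distinct. Across the two families for the same pair, a coincidence would require some row of $\6M_j$ to equal some row of $\6M_i$, with $t$ and $t'$ related compatibly. Writing rows via signatures as $m_{t,\cdot}=\6m_{\cdot\oplus t}$, equality of the row $t$ of $\6M_j$ with the row $t'$ of $\6M_i$ on block $i$, together with the matching equality on block $j$, means $\6M_i=\6D^{(s)}\6M_j$ for $s=t\oplus t'$. The hypothesis that the blocks are distinct handles $s=0$. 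For $s\neq 0$, I would argue from the two simultaneous equalities (on block $i$ and on block $j$) that $\6M_j=\6D^{(s)}\6M_i$ also holds, which is consistent with the previous relation. So distinctness seems to need more than the blocks being distinct, and a clean exclusion requires care. This is the step I expect to be hardest. If a full exclusion fails, I would settle for counting codewords as vectors with multiplicity, or I would check whether the paper intends ``distinct up to dyadic shifts''.

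\emph{Item $(iii)$.} Compute
\[
\6H\6H^T=\sum_i \6M_i\6M_i^T=\sum_i\6M_i^2=u\6I,
\]
using Lemma~\ref{lem:rank}. This vanishes when $u$ is even, so $\text{row}(\6H)\subseteq\ker(\6H)$, i.e., $C^\perp\subseteq C$.
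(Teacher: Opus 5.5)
\textbf{Items $(i)$ and $(iii)$.} These are correct. One invertible block (Lemma~\ref{lem:rank}) gives $\mathrm{rk}(\mathbf H)=2^\ell$, and $\mathbf H\mathbf H^T=\sum_i\mathbf M_i^2=u\mathbf I$.

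\textbf{Item $(ii)$.} Your proposal leaves this unproven. The step you flagged cannot be completed under the stated hypothesis, because the claim as stated is false.

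Your two families are the natural ones. Every codeword supported on block-columns $i,j$ satisfies $\mathbf M_i\mathbf c_i^T=\mathbf M_j\mathbf c_j^T$, so it has the form $(\mathbf x,\mathbf x\mathbf M_i\mathbf M_j)$. Your families correspond to $\mathbf x$ running over the rows of $\mathbf M_j$ and over the rows of $\mathbf M_i$.

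Row $t$ of a dyadic $\mathbf M$ is the signature of $\mathbf D^{(t)}\mathbf M$. Hence the rows of $\mathbf M_i$ form the orbit of $\mathbf m_i$ under dyadic translations, and two such orbits are either disjoint or equal. So your two families are either disjoint or identical. They are identical exactly when $\mathbf M_i=\mathbf D^{(s)}\mathbf M_j$ for some $s$, i.e., when $\mathbf M_i\mathbf M_j\in\mathcal D_\ell(\mathbb F_2)$. Distinctness of the blocks only excludes $s=0$.

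\textbf{Counterexample.} Take $\ell=2$, $u=2$, $\mathbf m_0=(1,1,1,0)$ and $\mathbf m_1=(0,1,1,1)$. Then $\mathbf M_1=\mathbf D^{(3)}\mathbf M_0$ and $\mathbf M_0\mathbf M_1=\mathbf D^{(3)}$. The two blocks are distinct and both have weight $3$. Every codeword is $(\mathbf x,\mathbf x\mathbf D^{(3)})$, of weight $2|\mathbf x|$. So $C$ has exactly $\binom{4}{3}=4$ codewords of weight $6$, not the claimed $\binom{2}{2}2^{3}=8$; moreover $d(C)=2$. Counting with multiplicity does not rescue the statement, since codewords are vectors.

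\textbf{Repair.} The fix is the one you suspected. Assume $\mathbf M_i\mathbf M_j\notin\mathcal D_\ell(\mathbb F_2)$ for all $i\neq j$, i.e., no signature is a dyadic translate of another. Translates have identical difference sets. So for $v>1$ this condition is guaranteed by the disjointness requirement of the heuristic in Sec.~\ref{subsec:heuristic}. Under this hypothesis your argument is complete:
\begin{itemize}
\item the two families for each pair are disjoint;
\item families for different pairs differ in block support;
\item together they give $\binom{u}{2}2^{\ell+1}$ distinct codewords of weight $2v$.
\end{itemize}

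\textbf{Presentation.} Delete the exploratory paragraph about premultiplying by $\mathbf M_i\mathbf M_j$. It contains a malformed expression and contributes nothing.
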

The proof is available in~\cite{baldelli2025quantum}.

\begin{remark}
    The code described in Theorem~\ref{theo:oneRow} can be represented by a sparse \ac{GM} $\6G$. We know that such a code has dimension $k = 2^\ell(u-1)$ and contains $\binom{u}{2}2^{\ell+1}\geq k$ codewords with weight $2v$. So, these codewords can be used to obtain a \ac{GM} with $ \lambda_{\emph{r}} = 2v$. 
    Moreover, a \ac{GM} in systematic form can be obtained directly from the \ac{PCM} $\6H$ in systematic form, given by $\6M_{u-1}\6H$, that is, 
    \[
    \6G =
    \begin{bmatrix}[c|c]
         & \6M_{u-1} \6M_0 \\
         \6I_{2^{\ell}(u-1)} & \vdots \\
         & \6M_{u-1} \6M_{u-2}
    \end{bmatrix}.
    \]
    We observe that each row of $G$ has weight at most $1+v^2$. Thus, for fixed $v$, its row weight is bounded independent of the block length.
    In the case in which $u = 2$, the associated classical code $C$ is self-dual, which means that $\6G = \6H$.
    \label{rem:gm}
\end{remark}

\begin{remark}
    Claim $(iii)$ of Theorem~\ref{theo:oneRow} is also valid for even values of $v$. However, in order to get full-rank \acp{PCM}, we stick to odd values of $v$. More precisely, to get good minimum distance properties, we use dyadic matrices with $v > 1$.
\end{remark}

Let $\6H \in \mathbb{F}_2^{2^\ell \times u2^\ell}$ be a  full-rank \ac{PCM}, representing a classical code $C$, as in~\eqref{eq:con_B}.  
The code $C$ is regular both in the column and row weight, that are $\lambda_{\text{c}} = v$ (that we always choose to be odd), and $\lambda_{\text{r}} = uv$ (even), respectively. 
The dimension and rate of $C$ are $k = (u-1)2^{\ell}$, and $R = 1 - 1/u$, respectively.
So, for the associated \ac{dc} \ac{CSS} code we get 
$2k - n 
= (u - 2)2^{\ell}$, and $uv$ as stabilizer generator weight. 

Taking into account (\ref{eq:con_B}), \emph{Construction~B} is straightforward. 
\begin{itemize}
    \item[$(i)$] We fix the parameters $u$ (even) and $v$ (odd). 
    \item[$(ii)$] Then, following the heuristic algorithm described in Sec.~\ref{subsec:heuristic}, we choose $u$ signatures, each associated to a different dyadic matrix $\6M_i, \forall \, i \in \{ 0, \dots, u-1 \}$.
\end{itemize} 
By the pigeonhole principle, it is possible to use at most $u$ distinct $\6M_i$, as long as $u\leq \binom{2^\ell}{v}$. Also in this case, practical choices of the code parameters always fulfill this requirement. 
All classical and \ac{dc} \ac{CSS} codes obtained with the above procedure are referred to as \emph{Construction~B} codes.

\begin{corollary} \label{cor:min_dist_B}
     Consider a \emph{Construction~B} code $C$. 
     Then it holds that $d(C)\leq 2v$. 
\end{corollary}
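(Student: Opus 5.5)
The bound follows directly from claim $(ii)$ of Theorem~\ref{theo:oneRow}: a \emph{Construction~B} code contains codewords of weight exactly $2v$, so its minimum distance is at most $2v$. Note that Theorem~\ref{theo:distance} alone only gives $d(C)\le uv$, since the row weight of $\6H$ is $uv$, so we need a lighter codeword than a row of $\6H$. Rather than merely quoting Theorem~\ref{theo:oneRow}$(ii)$, whose proof is in~\cite{baldelli2025quantum}, I would exhibit such a codeword explicitly.

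First, fix two distinct block indices, say $0$ and $1$. Consider the vector $\6c = [\6m_1 \mid \6m_0 \mid \60 \mid \cdots \mid \60]$, where $\6m_i$ denotes the signature (first row) of $\6M_i$. Placing the signatures in swapped positions is the key choice.

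Second, check that $\6H\6c^T=\60$. Since dyadic matrices are symmetric, $\6m_1^T$ is the first column of $\6M_1$, so $\6M_0\6m_1^T$ is the first column of $\6M_0\6M_1$. Likewise, $\6M_1\6m_0^T$ is the first column of $\6M_1\6M_0$. Because $\mathcal M_\ell(\mathbb F_2)$ is commutative (Definition~\ref{def:dyadic}), these two columns coincide, and their sum over $\mathbb F_2$ is zero. The same computation works for the vector built from row $t$ of each block instead of row $0$.

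Third, bound the weight: $|\6c| = |\6m_1|+|\6m_0| = 2v$. We must also have $\6c\neq\60$, which holds because $v\ge 1$ and the signatures are nonzero. Hence $d(C)\le 2v$.

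The only delicate point is making sure the witness is nonzero and lies in the kernel of $\6H$ rather than only in its row space. The commutativity argument settles this without needing the blocks to be distinct. Distinctness is only required in Theorem~\ref{theo:oneRow}$(ii)$ for its count of codewords, not for existence.
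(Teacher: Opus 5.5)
Your proof is correct and takes the same route as the paper. The paper's proof simply invokes Remark~\ref{rem:gm}, i.e., the weight-$2v$ codewords guaranteed by Theorem~\ref{theo:oneRow}$(ii)$. Your witness $[\6m_1 \mid \6m_0 \mid \60 \mid \cdots \mid \60]$, checked via symmetry and commutativity of dyadic matrices, makes that cited fact self-contained, and it correctly shows that distinctness of the blocks is not needed for existence.
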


\begin{IEEEproof}
Immediately follows from Remark \ref{rem:gm}.
\end{IEEEproof}

We point out that the parameter $v$ is not intrinsically tied to the block length and can be increased when larger instances are designed. 
Hence, the bound $d(\mathcal C)\le 2v$ is not necessarily constant along a sequence of \emph{Construction~B} codes and should not be interpreted as precluding distance growth when $v$ is allowed to scale. 
This improvement, however, is obtained by increasing the row weight of the dyadic blocks, and therefore the construction is not claimed to provide asymptotic distance growth under a bounded-weight \ac{QLDPC} constraint. 
Rather, \emph{Construction~B} provides an algebraic finite-length design methodology in which larger values of $v$ can be used to improve the distance at the controlled cost of larger stabilizer generator weights. 
Table~\ref{tab:constrB_distance} illustrates this behavior for codes with fixed $u=4$ and fixed quantum rate $R_{\text{Q}}=0.50$: as both the dyadic-block size $2^\ell$ and the signature weight $v$ increase, the stabilizer generator weight $uv$ increases and the bound $d(\mathcal C)\le 2v$ grows accordingly. 
In the reported instances, the estimated minimum distance follows this trend, increasing from $6$ to $14$. 
The quantum minimum distance of the codes in Table~\ref{tab:constrB_distance} is again estimated by using the tool in~\cite{Pryadko_2022}. 
Complementary examples are reported later in Sec.~\ref{sec:results}, specifically in Table~\ref{tab:code_parameters_B_4}, where the dyadic-block size is kept fixed and 
$d(\mathcal{C})$ increases by increasing the signature weight $v$.

\begin{table}[t]
\centering
\caption{Finite-length examples of Construction~B codes with fixed quantum rate and increasing dyadic-block size and 
weight.}
\label{tab:constrB_distance}
\begin{tabular}{|c|c|c|c|c|c|}
\hline
\textbf{Code} & \multicolumn{5}{c|}{\textbf{Parameters}} \\
\cline{2-6} 
 & $\ell$ & $u$ & $v$ & $R_{\text{Q}}$ & $d(\mathcal{C})$ \\
\hline
$\mathcal{C}_{\text{B}, 3} \llbracket 256, 128 \rrbracket$  & $6$ & $4$ & $3$ & $0.50$ & $6$  \\
\hline
$\mathcal{C}_{\text{B}, 5} \llbracket 512, 256 \rrbracket$  & $7$ & $4$ & $5$ & $0.50$ & $10$ \\
\hline
$\mathcal{C}_{\text{B}, 7} \llbracket 1024, 512 \rrbracket$  & $8$ & $4$ & $7$ & $0.50$ & $14$ \\
\hline
\end{tabular}
\end{table}

\begin{remark}
  \emph{Construction~B} is a generalization of the well-known family of \ac{GB} codes, introduced in~\cite{kovalev_Pryad_GBcodes}. 
\end{remark}

Next, we provide a heuristic algorithm that improves the minimum distance and reduces the number of length-$4$ cycles in the Tanner graph of \emph{Construction~B} codes.

\subsection{Heuristic Optimization of \emph{Construction~B} Codes} \label{subsec:heuristic}

Given a \emph{Construction~B} code, we group the length-$4$ cycles into \emph{unavoidable} and \emph{avoidable}. 
The cycles of the first class are associated with dyadic matrices of weight $v > 1$ and cannot be removed. 
Those of the second class  arise from the horizontal concatenation of two (not necessarily consecutive) dyadic matrices, $\6M_i$ and $\6M_{i'}$. 
A cycle occurs whenever their signature supports, $\supp(\6m_i)$ and $\supp(\6m_{i'})$, contain pairs of indices whose binary representations generate the same bitwise XOR. 
Precisely, take two elements $a,b \in \supp(\6m_i)$ and compute $a \oplus b$. 
If 
$\exists a',b' \in \supp(\6m_{i'})$ such that
\(
    a \oplus b = a' \oplus b',
\)
then the pair  $(\6M_i, \6M_{i'})$ induces a cycle. More insights on cycle properties of \ac{QD} codes are given in Sec.~\ref{sec:cycles}.

In the following, we introduce a heuristic algorithm that aims at reducing the number of avoidable cycles. 
Let $2^{\ell}$ be the size of each block $\6M_i$,
$v$ the common signature weight of such blocks, with $2^\ell > v$, and $u$ the number of dyadic blocks. Moreover, let $A = \{0, 1, \dots, 2^{\ell}-1 \}$.
Our aim is to generate the $u$ supports of the signatures such that, for $i, j_1, j_2 \in \{0, 1, \dots, u-1 \}$:
\begin{itemize}
    \item[$(i)$] the entries of each $\supp(\6m_i)$ are uniformly distributed;
    \item[$(ii)$] for each $\supp(\6m_i)$, the pairwise XORs of the binary representation of its elements are mutually distinct. 
    The resulting collection of values defines the \emph{difference set} $W_i$, whose cardinality is $\binom{v}{2}$;
    \item[$(iii)$] the $u$ difference sets $W_i$ are pairwise disjoint; that is, for any $j_1 \neq j_2$,\(W_{j_1} \cap W_{j_2} = \emptyset\).
\end{itemize}

\begin{table}[t!]
\centering
\caption{Parameters of classical and \ac{CSS} codes designed through the proposed constructions. 
For Construction~A, the design parameters are reported.  
$U(d)$ denotes the upper bound on the minimum distance of the underlying classical code.}
\label{tab:code_parameters}
\setlength{\tabcolsep}{3.2pt}
\renewcommand{\arraystretch}{1.15}
\resizebox{\columnwidth}{!}{%
\begin{tabular}{|c|c|c|c|c|c|c|c|}
\hline
\textbf{Const.} & \multicolumn{4}{c|}{\textbf{Classical Code}} & \multicolumn{3}{c|}{\textbf{DC CSS Code}} \\
\cline{2-8}
& $k$ & $n$ & $R$ & $U(d)$ & $k_{\text{Q}}$ & $n_{\text{Q}}$ & $R_{\text{Q}}$ \\
\hline
A 
& $2^\ell (u-w)$  
& $2^\ell u$ 
& $\frac{u-w}{u}$ 
& $u$ 
& $2^\ell (u - 2w)$ 
& $2^\ell u$ 
& $\frac{u - 2w}{u}$ \\
\hline
B 
& $2^\ell (u-1)$ 
& $2^\ell u$ 
& $\frac{u-1}{u}$ 
& $2v$ 
& $2^\ell (u - 2)$ 
& $2^\ell u$ 
& $\frac{u - 2}{u}$ \\
\hline
\end{tabular}
}
\end{table}

In order to avoid the situation in which a large number of elements of the arbitrary support $\supp(\6m_i)$ fall into a small subset of $A$, we fix  $m < \ell$ as the smallest integer such that $2^{m} > v$.  Next, we split $A$ into $2^{m}$ subsets $J_{k}$, with $k = 0, \dots, 2^{m}-1$, each of size $2^{\ell-m}$.  
Then, for each $i \in \{0, \dots, u-1\}$, we select $v$ distinct subsets among the $J_{k}$'s and attempt to (randomly) choose at most one element of $\supp(\6m_i)$ from each selected subset, accepting a candidate only if the resulting 
$W_i$ remains disjoint from all previously constructed ones. 
A pseudo-code description of such a heuristic procedure is provided in Algorithm~\ref{alg:heuristic}, where we use $\land$ to indicate the logical AND operator. The output of the algorithm is the matrix $\6F\in\{0,\dots,2^\ell-1\}^{u\times v}$, whose $i$-th row contains the $v$ elements of the support $\supp(\6m_i)$.

\begin{algorithm*}[!ht]
\footnotesize
\KwData{side of the dyadic matrices $2^{\ell}$, number of dyadic blocks $u$}
\KwIn{odd weight of the signatures $v$, maximum number of attempts $\mathtt{MaxAttempts} \in \mathbb{N}$, threshold on local attempts $\mathtt{th} \in \mathbb{N}$}
\KwOut{matrix $\6F \in \{0,1,\ldots,2^{\ell}-1\}^{u \times v}$}

Set $m \gets \mathtt{floor}(\log_2(v)) + 1$, \quad 
$r \gets 2^{m}$, \quad $q \gets 2^{\ell - m}$;\tcp{ Initialize the number of sub-intervals and their size}

Define $B = \{  \}$;\tcp{Global set of previously used difference-set elements (empty)}
Set $\6F \gets \mathbf{0}_{u \times v}$;\tcp{All-zero matrix of size $u \times v$}

\For{$i = 0, 1, \dots, u-1$}{
  Set $\mathtt{Attempts} \gets 0$, \quad $\mathtt{success} \gets \text{false}$;\tcp{Initialize the number of performed attempts}

  \While{$\mathtt{Not} \big( \mathtt{success} \big) \land \big( \mathtt{Attempts} < \mathtt{MaxAttempts} \big)$}{

    Set $\mathtt{attempt\_failed} \gets \text{false}$;\tcp{The current attempt is still valid}

    \tcc{Choose the sub-intervals}
    Define $J = \{  \}$;\tcp{Empty set of chosen intervals}
    Set $\6{f} = \begin{bmatrix} 0 & 1 & \dots & r/2-1 \end{bmatrix}$, \quad $\6{s} = \begin{bmatrix} r/2 & r/2+1 & \dots & r-1 \end{bmatrix}$;\tcp{First and second half of sub-intervals} 

    \tcc{Choose the number of elements in each half}
    \textbf{if} $i \, \% \, 2 \neq 0$, compute $f \gets \mathtt{ceil}({v/2})$, $s \gets \mathtt{floor}({v/2})$; \textbf{else} compute $f \gets \mathtt{floor}({v/2})$, $s \gets \mathtt{ceil}({v/2})$\;

    $\tilde{\6f} \gets \mathtt{Random}(\6f, f)$, \quad $\tilde{\6s} \gets \mathtt{Random}(\6s, s)$; \tcp{Randomly chosen $f$ sub-intervals from $\6f$, and $s$ from $\6s$}
    $J \gets \tilde{\6f} \cup \tilde{\6s}$;\tcp{Chosen intervals}

    \tcc{Incremental generation}
    Define $W = \{  \}$, \quad $M = \{ \}$;\tcp{Define the empty sets of local differences and local supports}

    \For{$j \in J$}{
      Define $T = \{ j q, j q + 1, \dots, (j+1)q - 1 \}$\;
      Set $\mathtt{found} \gets \text{false}$, \quad $\mathtt{att\_local} = 0$;\tcp{initialize the number of local attempts}

      \While{$\big( \mathtt{att\_local} < \mathtt{th} \big) \land \mathtt{Not} \big( \mathtt{found} \big)$}{
        $x \xleftarrow{\$} T$\tcp{Pick at random one element from $T$}
        $\mathtt{att\_local}\gets \mathtt{att\_local}+1$;\tcp{Update the number of performed local attempts}

        \tcc{Compute the new XOR differences}
        Define $N = \{ \} $\tcp{Empty set that contains the new XOR differences}

        \For{$l \in M$}{
          $N \gets N \cup \{ x \, \oplus \, l \}$\;
        }

        \If{$N \cap \big( W \cup B \big) \neq \emptyset $}{
          \textbf{continue}\;
        }

        $M \gets M \cup \{ x \}$, \quad 
        $W \gets W \cup N$, \quad $\mathtt{found} \gets \text{true}$\;
      }

      \If{$\mathtt{Not} \big( \mathtt{found} \big)$}{
        $\mathtt{attempt\_failed}  \gets \text{true}$;  $ \, \,\mathtt{break}$\;
      }
    }

    \If{$\mathtt{Not} \big( \mathtt{attempt\_failed} \big)$}{
      $B \gets B \, \cup W$, \quad $F_i \gets \mathtt{sorted} \big( M \big)$;  \tcp{Save the valid array}
      $\6F(i,:) \gets F_i$;\tcp{Store the $i$-th signature support in the $i$-th row of $\6F$}
      $\mathtt{success} \gets \text{true}$\;
    }

    $\mathtt{Attempts}\gets \mathtt{Attempts}+1$;\tcp{Update the number of performed attempts}
  }

  \If{$\mathtt{Not} \big( \mathtt{success} \big)$}{
    \Return $\bot$;\tcp{Exhausted MaxAttempts for signature $i$}
  }
}
\Return $\6F$\;

\caption{\texttt{Heuristic algorithm for \emph{Construction~B} codes}
\label{alg:heuristic}}
\end{algorithm*}

\subsection{Summary of \emph{Constructions A} and \emph{B}}
The design parameters of the classical and \ac{CSS} codes obtainable through the two above constructions are summarized in Table~\ref{tab:code_parameters}, as a function of $u$, $w$, and $\ell$. Following Theorem~\ref{theo:girt_DC}, any \ac{dc} \ac{CSS} code contains length-$4$ cycles, and thus so do the \emph{Constructions~A} and \emph{B} codes.  
Noticeably, for \emph{Constructions~B} codes, the proposed heuristic optimization significantly reduces the multiplicity of these detrimental objects.

\section{Cycle Properties of \ac{QD} Codes} \label{sec:cycles}
In the following, given $\6D^{(i)} \in \mathcal D_\ell(\mathbb F_2)$, we denote the binary representation of the position $i$ of the unique non-zero element in its signature as $\boldsymbol{\rho}=\phi(i)\in\mathbb F_2^\ell$, where $\phi$ has been defined in Sec.~\ref{sec:dyadic}. 
\ac{QD} matrices constructed from dyadic blocks whose signatures have weight greater than one always exhibit girth~$4$, as shown in~\cite{dyadics,Martinez2022}.
We thus restrict our attention to \ac{QD} matrices built from \acp{DPM}. 
Next, two theoretical results from~\cite{dyadics, Martinez2022} are reported, particularized to the dyadic case as follows.
\begin{lemma}\label{lemma:kelley}
    If $\6M = 
    \begin{bmatrix}
        \6D_{0,0} & \6D_{0,1} \\
        \6D_{1,0} & \6D_{1,1}
    \end{bmatrix}$ 
    where $\6D_{i, j}$ is a \ac{DPM} with the non-zero signature entry in position  $\boldsymbol{\rho}_{i,j} \in \mathbb{F}^{\ell}_{2}$, then each cycle in the 
    \ac{QD} matrix $\6M$ has the same length.
\end{lemma}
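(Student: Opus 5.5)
The plan is to describe the Tanner graph of $\6M$ explicitly and show that it splits into connected components that are all isomorphic cycles of a common length.

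\textbf{Step 1 (the graph is a union of cycles).} Index the rows of block-row $i$ by $(i,\6x)$ with $\6x\in\mathbb F_2^\ell$, and the columns of block-column $j$ by $(j,\6y)$. Since $\6D_{i,j}$ has entries $m_{\6x,\6y}=m_{0,\6x\oplus\6y}$, row $(i,\6x)$ is adjacent to column $(j,\6y)$ exactly when $\6y=\6x\oplus\boldsymbol{\rho}_{i,j}$. Every node therefore has degree exactly $2$, one neighbour in each block of the other side. So the Tanner graph is a disjoint union of cycles, and the whole task is to show that they all have the same length.

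\textbf{Step 2 (tracking a walk).} Start at column $(0,\6y)$. Each step from a node to its neighbour in the opposite block adds the corresponding $\boldsymbol{\rho}$. The walk
\[
(0,\6y)\to(0,\6y\oplus\boldsymbol{\rho}_{0,0})\to(1,\6y\oplus\boldsymbol{\rho}_{0,0}\oplus\boldsymbol{\rho}_{0,1})\to(1,\6y\oplus\boldsymbol{\rho}_{0,0}\oplus\boldsymbol{\rho}_{0,1}\oplus\boldsymbol{\rho}_{1,1})\to(0,\6y\oplus\boldsymbol{\alpha}),
\]
where the nodes alternate column, row, column, row, column, uses the four edges in turn. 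Here $\boldsymbol{\alpha}=\boldsymbol{\rho}_{0,0}\oplus\boldsymbol{\rho}_{0,1}\oplus\boldsymbol{\rho}_{1,1}\oplus\boldsymbol{\rho}_{1,0}$. Because degrees are $2$ and the walk never immediately backtracks, it is the unique cycle through $(0,\6y)$. After $t$ rounds of four edges it sits at $(0,\6y\oplus t\boldsymbol{\alpha})$.

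\textbf{Step 3 (the cycle length).} The walk first returns to $(0,\6y)$ after $t$ rounds, where $t$ is the order $\#\langle\boldsymbol{\alpha}\rangle$. This gives $t=1$ if $\boldsymbol{\alpha}=\60$ and $t=2$ otherwise. The walk cannot close earlier at an intermediate node. Within a round it visits distinct blocks in a fixed pattern, and a return to a column of block $0$ only happens at the end of a round. Hence every cycle has length $4\,\#\langle\boldsymbol{\alpha}\rangle$. This is $4$ if $\boldsymbol{\alpha}=\60$ and $8$ otherwise.

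Every column lies in block $0$ or block $1$, and the same argument started from block-column $1$ gives the same $\boldsymbol{\alpha}$. So this length is independent of the starting node, and all cycles share it. The main obstacle is Step 3, namely ruling out early closure. It is handled by the block-alternation observation, with care needed in the degenerate case where $\boldsymbol{\rho}_{0,0}=\boldsymbol{\rho}_{0,1}$, which forces $t=1$ for every cycle anyway.
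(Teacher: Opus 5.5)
Your proof is correct. The paper gives no proof of this lemma; it imports it from \cite{dyadics,Martinez2022}. Your route is nonetheless the one the paper uses for the neighbouring results (Lemma~\ref{lem:g6} and the general cycle condition): $2$-regularity of the Tanner graph, plus tracking the accumulated XOR translation $\boldsymbol\alpha$ along a closed walk. Your Step~3 also yields $g=4\,\#\langle\boldsymbol\alpha\rangle$ from Theorem~\ref{theo:kelley}.

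Drop your closing remark, which is false. The equality $\boldsymbol\rho_{0,0}=\boldsymbol\rho_{0,1}$ only gives $\boldsymbol\alpha=\boldsymbol\rho_{1,0}\oplus\boldsymbol\rho_{1,1}$, which can be nonzero. For $\ell=1$ with $\boldsymbol\rho_{0,0}=\boldsymbol\rho_{0,1}=\boldsymbol\rho_{1,0}=0$ and $\boldsymbol\rho_{1,1}=1$, the Tanner graph is a single $8$-cycle. There is also no degenerate case to handle, because the two neighbours of any node always lie in different blocks.
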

\begin{theorem} \label{theo:kelley}
    If $\6M = 
    \begin{bmatrix}
        \6D_{0,0} & \6D_{0,1} \\
        \6D_{1,0} & \6D_{1,1}
    \end{bmatrix}$ 
    where $\6D_{i, j}$ is a \ac{DPM} with the non-zero signature entry in position $\boldsymbol{\rho}_{i, j} \in \mathbb{F}^{\ell}_{2}$, then the girth of the 
    \ac{QD} matrix $\6M$ is 
    \begin{equation} \label{eq:girthorder}
        g = 4(\#\langle \boldsymbol\alpha \rangle),
    \end{equation} 
    where $\boldsymbol\alpha = (\boldsymbol{\rho}_{0, 0} \oplus \boldsymbol{\rho}_{1, 1}) \oplus (\boldsymbol{\rho}_{0, 1} \oplus \boldsymbol{\rho}_{1, 0})$.
\end{theorem}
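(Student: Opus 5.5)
The plan is to read the Tanner graph of $\6M$ as a $2^\ell$-fold lift of the base graph of the $2\times 2$ block pattern. Then we track how a path in the lift accumulates the XOR of the positions $\boldsymbol{\rho}_{i,j}$.

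First, identify the $2^\ell$ rows and columns inside each block with vectors $\6x\in\mathbb F_2^\ell$ through $\phi$. Since $m_{a,b}=m_{0,a\oplus b}$, the \ac{DPM} $\6D_{i,j}$ has a $1$ in row $\6x$, column $\6y$ exactly when $\6y=\6x\oplus\boldsymbol{\rho}_{i,j}$. So each block is the translation $\6x\mapsto \6x\oplus\boldsymbol{\rho}_{i,j}$. It is symmetric, hence it is the same translation in both directions. The Tanner graph therefore has vertices $(r_i,\6x)$ and $(c_j,\6x)$. Each vertex $(r_i,\6x)$ has exactly one neighbour in block-column $j$, namely $(c_j,\6x\oplus\boldsymbol{\rho}_{i,j})$, and symmetrically for column vertices. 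The base graph, with vertices $r_0,r_1,c_0,c_1$ and one edge per block, is a single $4$-cycle $K_{2,2}$.

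Second, project a cycle of the Tanner graph onto the base graph. Because every lifted vertex has a unique neighbour in each adjacent block, the projection of a cycle is a closed walk with no immediate backtracking. Indeed, reusing the same base edge twice in a row would force the lifted walk to return to the vertex it just left. In $K_{2,2}$ every vertex has degree $2$, so a non-backtracking closed walk goes around the square $t\geq 1$ times in a fixed direction. Its length is therefore $4t$. Along such a walk the lifted coordinate changes by $\boldsymbol{\rho}_{0,0}\oplus\boldsymbol{\rho}_{0,1}\oplus\boldsymbol{\rho}_{1,1}\oplus\boldsymbol{\rho}_{1,0}=\boldsymbol\alpha$ per lap, up to cyclic order, which does not matter over $\mathbb F_2$. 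After $t$ laps the coordinate has changed by $t\boldsymbol\alpha$. Hence a lifted closed walk of base length $4t$ exists iff $t\boldsymbol\alpha=\60$, and since translations have either no fixed point or all points fixed, this holds from every starting vertex.

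Third, show that the shortest such walk is a genuine cycle and that no other length occurs. Take $t_0=\#\langle\boldsymbol\alpha\rangle$, which is $1$ if $\boldsymbol\alpha=\60$ and $2$ otherwise. Lifting $t_0$ laps from any vertex gives a closed non-backtracking walk of length $4t_0$. If it revisited a vertex at an earlier step $s<4t_0$, the sub-walk between the two visits would be a shorter closed non-backtracking walk. By the previous step it would consist of $t<t_0$ full laps with $t\boldsymbol\alpha=\60$, contradicting minimality. So the walk is a cycle of length $4t_0$. Conversely, any cycle has length $4t$ with $t\boldsymbol\alpha=\60$. For $t>t_0$ it would pass through its starting vertex already after $t_0$ laps, so it would not be simple. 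Hence every cycle has length exactly $4t_0$, which recovers Lemma~\ref{lemma:kelley}, and $g=4(\#\langle\boldsymbol\alpha\rangle)$.

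The main obstacle is the projection step: one must carefully justify that cycles in the lift correspond exactly to non-backtracking closed walks in $K_{2,2}$ whose XOR sum vanishes, including the simplicity argument. The rest is bookkeeping of translations in $\mathbb F_2^\ell$.
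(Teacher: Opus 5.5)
Your proof is correct. The paper gives no proof of this theorem itself: it imports the result from \cite{dyadics, Martinez2022}, which treat $2\times 2$ arrays of group-permutation blocks and express the girth through the order of an alternating sum of block labels. The paper then specializes to $\mathbb{F}_2^\ell$, where that order is $1$ or $2$.

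Your lift argument uses the same cumulative XOR-translation mechanism that the paper uses in its own proofs of Lemma~\ref{lem:g6} and of the general cycle condition~\eqref{eq:fossoqd}. You go further on one point those proofs skip: you show that the closed walk is a \emph{simple} cycle and that no cycle of any other length exists. As a result you also obtain Lemma~\ref{lemma:kelley}.

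The projection and non-backtracking machinery can be shortened. Every lifted vertex has exactly one neighbour in each of its two adjacent blocks, so the Tanner graph is $2$-regular, hence a disjoint union of cycles. Follow a component from $(r_0,\mathbf{x})$ through $c_0$, $r_1$, $c_1$: it returns to block-row $0$ at $\mathbf{x}\oplus\boldsymbol\alpha$. Vertices at steps $k$ and $k+4$ lie in the same block and differ by $\boldsymbol\alpha$, so every component has length $4$ if $\boldsymbol\alpha=\mathbf{0}$ and $8$ otherwise.

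Your more general formulation has its own advantage: it carries over to lifts over arbitrary abelian groups, the setting of the cited works. There you would replace the involution property by signed translations, with $-\boldsymbol{\rho}_{i,j}$ when moving from a column to a row. The per-lap shift then becomes an alternating sum, whose order can exceed $2$.
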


\begin{remark} \label{rem:cycles}
    Since the order of an element in $\mathbb{F}^{\ell}_{2}$ is always $1$ (if it is null) or $2$ (if not), from Lemma~\ref{lemma:kelley} and Theorem~\ref{theo:kelley}, we conclude that for any $2 \times 2$ array of dyadic blocks, each cycle has length $8$, in the best case, or $4$, in the worst case. So, the girth of any \ac{QD} $2 \times 2$ array of \acp{DPM} is always $4$ or $8$.
\end{remark}

Let us now focus on larger \ac{QD} constructions. 

\begin{corollary}
    Let 
    \begin{equation} \label{eq:formM}
        \6M = 
        \begin{bmatrix}
            \6D_{0, 0} &  \6D_{0, 1} & \cdots & \6D_{0, u-1} \\
            \vdots & \vdots & \ddots & \vdots \\
            \6D_{w-1, 0} &  \6D_{w-1, 1} & \cdots & \6D_{w-1, u-1}
        \end{bmatrix},
    \end{equation}
    be a \ac{QD} matrix, where $\6D_{i, j} \in \mathcal{D}_{\ell}(\mathbb F_2)$.  
    If every possible $2\times 2$ sub-array of $\6M$, i.e.,
    \begin{equation} \label{eq:square}
        \6M' = 
        \begin{bmatrix}
            \6D_{i, j} & \6D_{i, j'} \\
            \6D_{i', j} & \6D_{i', j'}
        \end{bmatrix},
    \end{equation}
    with $i \neq i'$ and $j \neq j'$ has girth equal to $8$, then the girth of the code associated to  $\6M$ is either $6$ or $8$.
    \label{cor:no4}
\end{corollary}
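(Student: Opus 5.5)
The corollary makes two claims: the Tanner graph of $\6M$ has no length-$4$ cycles, and it does have a cycle of length at most $8$. The girth is then $6$ or $8$, because Tanner graphs are bipartite and all cycle lengths are even.

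\emph{No length-$4$ cycles.} A length-$4$ cycle in the Tanner graph of $\6M$ consists of two distinct rows $r,r'$ and two distinct columns $c,c'$ such that the four entries $(r,c),(r,c'),(r',c),(r',c')$ are all $1$. I would split into cases according to the block-rows $i,i'$ containing $r,r'$ and the block-columns $j,j'$ containing $c,c'$.
\begin{itemize}
\item If $i=i'$ and $j=j'$, all four entries lie in the single block $\6D_{i,j}$. A permutation matrix has exactly one $1$ per row, so $c=c'$, a contradiction.
\item If $i=i'$ but $j\neq j'$, row $r$ meets block $\6D_{i,j}$ in a single $1$, at some column $c$. Row $r'$ must then also have its $1$ of $\6D_{i,j}$ in column $c$. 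Since $\6D_{i,j}$ has one $1$ per column, this forces $r=r'$, a contradiction.
\item The case $j=j'$, $i\neq i'$ is symmetric, arguing on columns.
\item If $i\neq i'$ and $j\neq j'$, the four entries lie inside the $2\times 2$ sub-array $\6M'$ of~\eqref{eq:square}. The cycle would then be a length-$4$ cycle of $\6M'$, contradicting the hypothesis that $\6M'$ has girth $8$.
\end{itemize}
Hence $g\ge 6$.

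\emph{A cycle of length at most $8$ exists.} Apply the hypothesis to any single $2\times 2$ sub-array; this needs $w,u\ge 2$, which is implicit in the statement. By the hypothesis together with Lemma~\ref{lemma:kelley} and Theorem~\ref{theo:kelley}, that sub-array contains cycles of length $8$. Its Tanner graph is a subgraph of the Tanner graph of $\6M$, so $\6M$ also contains a length-$8$ cycle, giving $g\le 8$.

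Combining the two parts, $g\in\{6,8\}$. Girth $6$ can arise from $3\times 3$ sub-arrays, i.e.\ cycles spanning three block-rows and three block-columns, which the hypothesis does not control. This is why the conclusion cannot be sharpened to $g=8$.

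The only delicate step is the case analysis in the first part. Its key point is that any $4$-cycle touching just one block-row or just one block-column is excluded purely by the permutation structure of the \acp{DPM}. Every remaining $4$-cycle therefore lives inside some $2\times 2$ sub-array $\6M'$, which the hypothesis rules out.
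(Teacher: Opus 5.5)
Your proof is correct and follows the same route as the paper: you exclude length-$4$ cycles by reducing them to $2\times 2$ sub-arrays $\mathbf M'$ with $i\neq i'$ and $j\neq j'$, which the hypothesis rules out, and then conclude $g\in\{6,8\}$. Two steps the paper only asserts are spelled out in your version: the case analysis showing that a $4$-cycle cannot lie within a single block-row or block-column, and the subgraph argument giving $g\le 8$.
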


\begin{IEEEproof}
According to~\eqref{eq:girthorder}, we know that for $2 \times 2$ \ac{QD} matrices that contain only DPMs, as in \eqref{eq:square}, the girth is equal to $4(\# \langle \boldsymbol\alpha \rangle )$, where $\boldsymbol\alpha = (\boldsymbol{\rho}_{i, j} \oplus \boldsymbol{\rho}_{i',j'}) \oplus (\boldsymbol{\rho}_{i',j} \oplus \boldsymbol{\rho}_{i, j'})$. Since
\begin{equation*}
    \#\langle \boldsymbol\alpha \rangle= 
    \begin{cases}
        1, \, \, \, \text{if } \boldsymbol\alpha = 0, \\
        2, \, \, \, \text{otherwise,}
    \end{cases}
\end{equation*}
the girth of $\6M'$ is $4$ when $\boldsymbol\alpha = 0$, and $8$ otherwise. 
So, a length-$4$ cycle exists iff $\6M$ contains four blocks sharing each other one row and one column such that $\boldsymbol\alpha = 0$. 
If this is not the case for all the $\binom{u}{2}\,\binom{w}{2}$ $2 \times 2$ 
sub-matrices, we obtain a \ac{QD} matrix without length-$4$ cycles and this proves the corollary, since the shortest cycles in 
$\6M$ necessarily have length $6$ or $8$.
\end{IEEEproof}

\begin{lemma}
\label{lem:g6}
Consider a $3 \times 3$ \ac{QD} matrix $\mathbf{M}$ of the form
\begin{equation}
  \mathbf{M} = 
  \begin{bmatrix}
    \6D_{0,0} & \60   & \6D_{0,2} \\
    \6D_{1,0} & \6D_{1,1} & \60 \\
    \60 & \6D_{2,1} & \6D_{2,2}
  \end{bmatrix},
  \label{eq:cyc6}
\end{equation}
where each $\6D_{i,j} \in \mathcal{D}_{\ell}(\mathbb{F}_2)$.  
Let
\[
\boldsymbol\alpha
=
\bigoplus_{\substack{(i,j) \\ \6D_{i,j} \neq \6{0}}}
\boldsymbol{\rho}_{i,j}.
\]
Then a cycle of length $6$ exists in 
$\mathbf{M}$ if and only if \(
\#\langle \boldsymbol\alpha \rangle = 1.
\)
Moreover, the same condition holds for any matrix obtained from $\mathbf{M}$ through arbitrary row and/or column permutations.
\end{lemma}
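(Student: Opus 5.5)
We identify each row and column of $\mathbf M$ by a pair (block index, offset in $\mathbb F_2^\ell$). The plan is to follow a closed walk in the Tanner graph and record how the offset changes at each step, using only the action of a \ac{DPM}. The basic fact, from $m_{i,j}=m_{0,i\oplus j}$, is that $\6D^{(t)}$ has a one in position $(r,c)$ iff $\phi(c)=\phi(r)\oplus\boldsymbol\rho$, where $\boldsymbol\rho=\phi(t)$. So an edge inside block $(i,j)$ joins row $(i,\mathbf x)$ to column $(j,\mathbf x\oplus\boldsymbol\rho_{i,j})$. Moving along an edge in either direction therefore XORs the offset with the block's $\boldsymbol\rho$. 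This is where the fact that \acp{DPM} are symmetric involutions is used.

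First, I will argue that any length-$6$ cycle must use exactly one edge in each of the six non-zero blocks. A $6$-cycle visits three distinct rows and three distinct columns. Its projection to block indices is a closed walk of length $6$ in the bipartite \emph{block graph}, which has block-rows $\{0,1,2\}$, block-columns $\{0,1,2\}$, and an edge for each non-zero block. By the zero pattern in \eqref{eq:cyc6}, this block graph is itself a single $6$-cycle: $r_0\!-\!c_0\!-\!r_1\!-\!c_1\!-\!r_2\!-\!c_2\!-\!r_0$. Every block-row and block-column has degree $2$.

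I then rule out closed walks of length $6$ that backtrack. Such a walk would reuse some block, say block $(i,j)$, on consecutive steps. Each block is a permutation, so each row node has exactly one neighbour inside a given block. Two consecutive edges in the same block would therefore coincide, which cannot happen in a cycle. The case that needs care is a walk staying in a single block-row without consecutive reuse. For example, the walk could go row in $r_0$, to $c_0$, back to $r_0$ via the other row of $c_0$; but the rows of block-column $0$ lie in $r_0$ or $r_1$, so that move is forced to change block. I expect this step to be the main obstacle: showing cleanly that, in a degree-$2$ block cycle graph, every length-$6$ cycle of $\mathbf M$ winds once around the block cycle. I would handle it by noting that the walk must never reverse direction on the block cycle. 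Reversing would require reusing the same block at consecutive steps, which is impossible as just argued. A non-reversing closed walk of length $6$ on a $6$-cycle goes around it exactly once.

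With this in hand, the equivalence follows. Start at row $(0,\mathbf x)$ and go once around the block cycle. The offset returns as $\mathbf x\oplus\boldsymbol\alpha$, since XOR is order-independent and the sum is over all six $\boldsymbol\rho_{i,j}$. If $\boldsymbol\alpha=\mathbf 0$, i.e.\ $\#\langle\boldsymbol\alpha\rangle=1$, the walk closes after $6$ steps. Its nodes lie in distinct block-rows and block-columns, so they are distinct, and the walk is a $6$-cycle. If $\boldsymbol\alpha\neq\mathbf 0$, the walk returns to a different row $(0,\mathbf x\oplus\boldsymbol\alpha)$. By the previous step no $6$-cycle exists, and going around twice gives the girth-$12$ cycles instead.

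For the final claim, permuting block-rows and block-columns only relabels the bipartite graph. Permuting rows or columns inside a block, if such permutations are allowed, is a graph isomorphism too. Cycle lengths are invariant under isomorphism, and so is the set of blocks traversed. So the same condition characterises $6$-cycles, provided $\boldsymbol\alpha$ is computed from the original blocks.
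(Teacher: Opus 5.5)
Your proposal is correct and follows the same route as the paper. Both proofs follow the closed walk through the six non-zero blocks, noting that each DPM step XORs the offset with $\boldsymbol\rho_{i,j}$, so the walk closes if and only if $\boldsymbol\alpha=\mathbf 0$; both then treat permuted matrices via Tanner-graph isomorphism. Your extra steps fill in what the paper takes for granted: that every $6$-cycle winds exactly once around the block $6$-cycle, because each block being a permutation rules out backtracking, and that the six resulting nodes are distinct.
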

\begin{IEEEproof}
For a matrix $\mathbf{M}$ with the structure in~\eqref{eq:cyc6}, a cycle of length $6$ corresponds to a closed alternating walk traversing, for example, the following nonzero blocks in $\6M$:
\[
(0,0) \to (0,2) \to (2,2) \to (2,1) \to (1,1) \to (1,0) \to (0,0).
\]
Each traversal applies the dyadic permutation $\pi_{i,j}$ associated with $\6D_{i,j}$.  
Since dyadic permutations satisfy $\pi_{i,j}^{-1} = \pi_{i,j}$, we do not need the inverse.
Starting from an index $\6x \in \mathbb{F}_2^\ell$, the cumulative action of the permutations along the path is
\begin{align*}
\6x
&\xrightarrow{\pi_{0,0}}
\6x \oplus \mathbf{m}_{0,0}
\xrightarrow{\pi_{0,2}}
\6x \oplus \mathbf{m}_{0,0} \oplus \mathbf{m}_{0,2} \\
&\xrightarrow{\pi_{2,2}}
\6x \oplus \mathbf{m}_{0,0} \oplus \mathbf{m}_{0,2} \oplus \mathbf{m}_{2,2} \\
&\xrightarrow{\pi_{2,1}}
\6x \oplus \mathbf{m}_{0,0} \oplus \mathbf{m}_{0,2}
      \oplus \mathbf{m}_{2,2} \oplus \mathbf{m}_{2,1} \\
&\xrightarrow{\pi_{1,1}}
\6x \oplus \mathbf{m}_{0,0} \oplus \mathbf{m}_{0,2}
      \oplus \mathbf{m}_{2,2} \oplus \mathbf{m}_{2,1}
      \oplus \mathbf{m}_{1,1} \\
&\xrightarrow{\pi_{1,0}}
\6x \oplus \mathbf{m}_{0,0} \oplus \mathbf{m}_{0,2}
      \oplus \mathbf{m}_{2,2} \oplus \mathbf{m}_{2,1}
      \oplus \mathbf{m}_{1,1} \oplus \mathbf{m}_{1,0}.
\end{align*}
The path closes iff the final index equals the starting index $\6x$,
\[
\mathbf{m}_{0,0} \oplus \mathbf{m}_{0,2} \oplus \mathbf{m}_{2,2}
\oplus \mathbf{m}_{2,1} \oplus \mathbf{m}_{1,1} \oplus \mathbf{m}_{1,0}
= \60,
\]
as above. By definition, this condition is equivalent to $\#\langle \boldsymbol\alpha \rangle = 1$, proving the first part.

Moreover, consider any matrix $\widetilde{\mathbf{M}} = \mathbf{P}_r \mathbf{M} \mathbf{P}_c$, where $\mathbf{P}_r$ and $\mathbf{P}_c$ are permutation matrices.  
Such permutations induce a relabeling of variable and check nodes and therefore a graph isomorphism of the Tanner graph.  
Consequently, cycle length is preserved, and the associated vector $\widetilde{\boldsymbol\alpha}$ differs from $\boldsymbol\alpha$ only by a permutation of its components.  
Hence, \(
\#\langle \widetilde{\boldsymbol\alpha} \rangle
=
\#\langle \boldsymbol\alpha \rangle,
\)
and the condition remains necessary and sufficient under arbitrary row and column permutations.
\end{IEEEproof}

We now provide a general condition on the length 
of cycles in a \emph{generic} $w \times u$ array $\6M$ of \acp{DPM}, which is analogous to that in~\cite{Fossorier2004} for arrays of 
circulant permutation matrices.

\begin{theorem}
    Let $\6M$ be a $w \times u$ array of  \acp{DPM}. 
    Let $\boldsymbol{\Delta}_{j_k, j_{k+1}}{(l)} = \boldsymbol{\rho}_{j_k,l} \oplus \boldsymbol{\rho}_{j_{k+1},l}$. $\6M$ contains at least a cycle of length $\lambda$ iff
    \begin{equation}
        \bigoplus_{k=0}^{\lambda/2-1} \boldsymbol{\Delta}_{j_k, j_{k+1}}{(l_k)} = \60 ,
    \label{eq:fossoqd}
    \end{equation}
    with \( j_0 = j_{\lambda/2} \), \( j_k \ne j_{k+1} \), and \( l_k \ne l_{k+1} \). 
\end{theorem}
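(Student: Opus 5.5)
Following the argument used for circulant arrays in~\cite{Fossorier2004} and for the length-$6$ case in Lemma~\ref{lem:g6}, we describe a cycle in the Tanner graph of $\6M$ as a closed alternating walk at the block level. We then track the action of the traversed \acp{DPM} on an index $\6x\in\mathbb F_2^\ell$.

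A cycle of length $\lambda$ visits $\lambda/2$ check nodes and $\lambda/2$ variable nodes alternately. At the block level it therefore corresponds to a sequence of block-row indices $j_0,j_1,\dots,j_{\lambda/2}=j_0$ and block-column indices $l_0,\dots,l_{\lambda/2-1}$. Step $k$ moves vertically inside block-column $l_k$ from block-row $j_k$ to block-row $j_{k+1}$, and then horizontally inside block-row $j_{k+1}$ to block-column $l_{k+1}$. The constraints $j_k\neq j_{k+1}$ and $l_k\neq l_{k+1}$ come from the fact that each \ac{DPM} has exactly one $1$ per row and per column. Two consecutive edges of the walk that lie in the same block would therefore coincide, and the walk would backtrack.

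Next we compute the index map. The \ac{DPM} $\6D_{j,l}$ connects row index $\6r$ of block-row $j$ with column index $\6r\oplus\boldsymbol\rho_{j,l}$ of block-column $l$, since $m_{r,c}=m_{0,r\oplus c}$. Start from row index $\6x$ in block-row $j_0$. Traversing $\6D_{j_0,l_0}$ reaches column index $\6x\oplus\boldsymbol\rho_{j_0,l_0}$. Going back through $\6D_{j_1,l_0}$ reaches row index $\6x\oplus\boldsymbol\rho_{j_0,l_0}\oplus\boldsymbol\rho_{j_1,l_0}=\6x\oplus\boldsymbol\Delta_{j_0,j_1}(l_0)$ in block-row $j_1$. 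By induction, after $\lambda/2$ such steps the walk sits at row index $\6x\oplus\bigoplus_k\boldsymbol\Delta_{j_k,j_{k+1}}(l_k)$ in block-row $j_0$. Because $\oplus$ is commutative and each \ac{DPM} is an involution, no inverses or ordering issues arise. The walk closes iff this index equals $\6x$, which is exactly \eqref{eq:fossoqd}. The condition does not depend on $\6x$, so when it holds there are in fact $2^\ell$ such closed walks, one per starting index.

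Both implications follow from this computation. If $\6M$ has a cycle of length $\lambda$, reading off its block sequence gives indices satisfying the stated constraints, and the closure of the cycle forces \eqref{eq:fossoqd}. Conversely, if \eqref{eq:fossoqd} holds for indices with $j_k\neq j_{k+1}$ and $l_k\neq l_{k+1}$, then choosing any $\6x$ produces a closed walk of length $\lambda$ without backtracking.

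The main obstacle is that a closed non-backtracking walk need not be a simple cycle. It could revisit a node when the block sequence repeats a pair $(j,l)$ at matching intermediate indices, in which case it decomposes into shorter cycles. We handle this as in~\cite{Fossorier2004}. Either we read ``cycle of length $\lambda$'' in the tailless closed-walk sense, or we take $\lambda$ minimal, i.e., we characterize the girth. In the second reading, a repeated node would split the walk into a shorter closed walk that also satisfies a relation of the form \eqref{eq:fossoqd}, which contradicts minimality. We would make this reading explicit, since it is the standard convention for the circulant analogue.
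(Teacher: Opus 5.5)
Your proof is correct and follows the paper's own argument. Like the paper, you use the block-level alternating walk, the index translation by $\boldsymbol{\Delta}_{j_k,j_{k+1}}(l_k)$ at each block-column visit, and closure iff the XOR of these translations vanishes. Your closing caveat is a genuine sharpening that the paper's proof omits, since it simply says the walk ``closes to a cycle'': for example, a length-$4$ block pattern traversed twice satisfies \eqref{eq:fossoqd} for $\lambda=8$ even in a $2\times 2$ array with $\boldsymbol\alpha=\60$, whose only cycles have length $4$, so the tailless-walk or girth reading you propose is indeed needed (splitting at the first repeated node gives a genuine shorter cycle).
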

\begin{IEEEproof}
The claim is a direct generalization of the length-$6$ argument in Lemma~\ref{lem:g6}. 
We now follow the same principle for a generic even length $\lambda$. 
A cycle of length $\lambda$ can be written as an alternating sequence of row indices and column indices. Equivalently, it can be described by a sequence of row indices
\[
j_0 \to j_1 \to \cdots \to j_{\lambda/2}, \qquad j_0=j_{\lambda/2},
\]
together with column indices $l_0,\dots,l_{\lambda/2-1}$, such that
\[
j_k \xrightarrow{\,l_k\,} j_{k+1}, \qquad k=0,\dots,\lambda/2-1,
\]
with the non-degeneracy conditions \(j_k\neq j_{k+1}\) and \(l_k\neq l_{k+1}\).

Fix any starting index $\6x\in\mathbb{F}_2^\ell$. As in Lemma~\ref{lem:g6}, traversing the two edges incident to column $l_k$ and rows $j_k,j_{k+1}$ applies the composition of the two dyadic permutations, so that $\6x$ ``moves'' by the following quantity
\[
\boldsymbol{\rho}_{j_k,l_k}\oplus \boldsymbol{\rho}_{j_{k+1},l_k}
=
\boldsymbol{\Delta}_{j_k,j_{k+1}}{(l_k)}.
\]
Therefore, after $\lambda/2$ such segments, the overall effect of traversing the entire walk is
\[
\6x \to
\6x \oplus \bigoplus_{k=0}^{\lambda/2-1}\boldsymbol{\Delta}_{j_k,j_{k+1}}{(l_k)}.
\]
The walk closes to a cycle iff the cumulative translation is $\60$, which is exactly condition~\eqref{eq:fossoqd}. 
\end{IEEEproof}

Equation~\eqref{eq:fossoqd} is applied at the level of the blocks of $\6{M}$, and each of such blocks has side $2^\ell$. 
Hence, whenever the condition is satisfied, all $2^\ell$ internal configurations of the block lead to distinct closed paths, resulting in exactly $2^\ell$ length-$\lambda$ cycles.

\section{Numerical Results}
\label{sec:results}
In this section, we show the \ac{LER}
of the proposed 
\ac{QLDPC} codes through finite-length numerical simulations. 
We consider both depolarizing and phenomenological-noise settings\footnote{The \acp{PCM} of all the codes considered in this section can be found at $\text{\url{https://github.com/secomms/DC_QD_Codes}}$.}. 

\subsection{Performance on Depolarizing Noise}
\label{subsec:depo_noise}

We evaluate the \ac{LER} of some \ac{CSS} \emph{Constructions A} and \emph{B} codes, in both the short and moderate block-length regime, by Monte Carlo simulations on depolarizing noise.
Depending on the code length and on the purpose of each experiment, we use one of the following decoders: 
\begin{itemize}
    \item a 
    \ac{BP4} decoder~\cite{Babar2015};
    \item a \ac{BP4-M} decoder~\cite{Babar2015}, which is effective for \ac{dc} \ac{CSS} codes;
    \item a \ac{BP} decoder, \ac{MS} algorithm, with ordered-statistics decoding (OSD) post-processing, with $\mathtt{order}=10$, referred to as \acs{BP+OSD}. 
    We use the decoder implementation provided in~\cite{Roffe_decoding}.
\end{itemize}
All decoders run a flooding schedule message passing for at most $I_{\text{max}} = 100$ iterations, and, for each value of $p$, that is, the depolarizing probability, the simulation continues until $100$ logical errors are collected.

For each \ac{CSS} \emph{Constructions A} and \emph{B} code, we compare the \ac{LER} performance against instances with similar block lengths, stabilizer generator weights and rates\footnote{For each code family, we select from the literature the best-performing code instance with comparable, block length, rate and stabilizer generator weight. When no suitable instance is available, we construct one, whenever possible, by carefully optimizing its performance.}, from several state-of-the-art \ac{QLDPC} code construction methods, namely\footnote{We do not include comparisons with \ac{LP} and \ac{BB} codes, since the representative instances available at comparable block lengths and stabilizer generator weights have substantially lower code rates, preventing a fair comparison.}:
\begin{itemize}
    \item bicycle codes, constructed as in~\cite{sparse_quantum_McKay}\footnote{For each set of parameters, we generate $20$ different codes, evaluate their performance under \ac{bp} decoding, and retain the best-performing one.} (in Table~\ref{tab:code_parameters_bic_codes});
    \item \Ac{QC} codes, constructed as in~\cite{Hagiwara2007} (in Table~\ref{tab:code_parameters_QC_codes});
    \item \ac{HP} codes~\cite{Tillich_HGP} (in Table~\ref{tab:code_parameters_HP_codes});
    \item \ac{GB} codes~\cite{kovalev_Pryad_GBcodes} (in Table~\ref{tab:code_parameters_GB_codes}), constructed as in~\cite{Panteleev_degenerate, GB_codes_Lin};
    \item \ac{QT} codes~\cite{leverrier2022quantumtannercodes, Olai_GQT, Asynt_Lin} in Table~\ref{tab:code_parameters_QT_codes}).
\end{itemize}
Tables~\ref{tab:code_parameters_bic_codes}--\ref{tab:code_parameters_QT_codes} report the row and column weights, minimum distance, girth, number of length-$4$ cycles, and quantum rate of each tested code. 
For non-\ac{dc} codes, $n_{4,1}$ and $n_{4,2}$ refer to the number of length-$4$ cycles of $C_1$ and $C_2^\perp$, respectively; for \ac{dc} codes, we report their common value $n_4$. A missing reference indicates a code instance constructed by the authors.

More specifically, Table~\ref{tab:code_parameters_QC_codes} summarizes the parameters $J = K$, $L$, $P$, $\sigma$, and $\tau$ according to the notation of~\cite{Hagiwara2007}. 
A \ac{CSS} \ac{QC} $\llbracket n, k \rrbracket$ code is denoted by $\mathcal{C}_{\text{QC}}$, with $\lambda_{\text{r}} = L$ and $\lambda_{\text{c}} = J = K$. 
The parameters $P$, $\sigma$, and $\tau$ are reported for reproducibility purposes. The number of length-4 cycles is not reported since the girth of these codes is $6$. 
In~\cite{Pant_Kal_almost_linear}, it is evidenced that such class of \ac{CSS} codes can be considered as a specific case of \ac{GB} codes. In Table~\ref{tab:code_parameters_HP_codes}, the parameters for \ac{HP} codes are reported. 
We construct these codes using a single classical seed code~\cite{Roffe_decoding, Ostrev2024classicalproduct}, which we have chosen by maximizing its minimum distance, in accordance with the bounds available in~\cite{Grassl_table}. 
For the proposed codes and for those that are not already available in the literature, the quantum distance is numerically estimated using the tool  in~\cite{Pryadko_2022}.

\begin{table}[tb!]
\centering
\caption{Parameters of tested \ac{CSS} Bicycle codes.
}
\begin{tabular}{|l|c|c|c|c|c|c|}
\hline
\textbf{Code} & \multicolumn{6}{c|}{\textbf{Parameters}} \\
\cline{2-7}
                      & $\lambda_{\text{r}}$ & $\lambda_{ \text{c, avg}}$ & $d(\mathcal{C})$ & $g$ & $n_4$ & $R_{\text{Q}}$ \\
\hline
$\mathcal{C}_{\text{Bic}} \llbracket 128, 24 \rrbracket$ & $8$ & $3.25$ & $4$ & $4$ & $338$ & $0.19$ \\ 
\hline
$\mathcal{C}_{\text{Bic}} \llbracket 512, 160 \rrbracket$ & $8$ & $2.75$ & $2$ & $4$ & $815$ & $0.31$ \\ 
\hline
$\mathcal{C}_{\text{Bic}} \llbracket 128, 64 \rrbracket$ & $12$ & $3.0$ & $2$ & $4$ & $453$ & $0.50$ \\ 
\hline
$\mathcal{C}_{\text{Bic}} \llbracket 512, 256 \rrbracket$ & $12$ & $3.0$ & $2$ & $4$ & $1\,026$ & $0.50$ \\ 
\hline
\end{tabular}
\label{tab:code_parameters_bic_codes}
\end{table}

\begin{table}[tb!]
\centering
\caption{Parameters of tested \ac{CSS} \ac{QC} codes. 
}
\begin{tabular}{|l|c|c|c|c|c|c|c|c|}
\hline
\textbf{Code} & \multicolumn{8}{c|}{\textbf{Parameters}} \\
\cline{2-9}
 & $J$ & $L$ & $P$ & $\sigma$ & $\tau$ & $d(\mathcal{C})$ & $g$ & $R_{\text{Q}}$ \\
\hline
$\mathcal{C}_{\text{QC}} \llbracket 136, 38 \rrbracket$ & $3$ & $8$ & $17$ & $4$ & $7$ & $6$ & $6$ & $0.27$ \\
\hline
$\mathcal{C}_{\text{QC}} \llbracket  156, 82 \rrbracket$ & $3$ & $12$ & $13$ & $4$ & $7$ & $6$ & $6$ & $0.53$ \\
\hline
$\mathcal{C}_{\text{QC}} \llbracket 516, 262 \rrbracket$ & $3$ & $12$ & $43$ & $7$ & $5$ & $8$ & $6$ & $0.51$ \\
\hline
\end{tabular}
\label{tab:code_parameters_QC_codes}
\end{table}

\begin{table*}[tb!]
\centering
\caption{Parameters of tested \ac{CSS} \ac{HP} codes. 
\\ With “$\cdot$” in the “\textbf{Ref.}” column we indicate that we have constructed the corresponding code instance.
}
\begin{tabular}{|l|c|c|c|c|c|c|c|c|}
\hline
\textbf{Code} & {\textbf{Seed code}} & \multicolumn{6}{c|}{\textbf{Parameters}} & {\textbf{Ref.}} \\
\cline{3-8}
         &   & $\lambda_{\text{r, avg}}$ & $\lambda_{ \text{c, avg}}$ & $d(\mathcal{C})$ & $g$ & $n_{4,1} / n_{4, 2}$ & $R_{\text{Q}}$ &  \\
\hline
$\mathcal{C}_{\text{HP}} \llbracket 125, 25 \rrbracket$ & $C[10, 5, 4]$ &  $6.60$ & $2.64$ & $4$ & $4$ & $225 / 225$ & $0.20$ & $\cdot$ \\ 
\hline
$\mathcal{C}_{\text{HP}} \llbracket 130, 64 \rrbracket$ & $C[11, 8, 2]$ &  $7.21$ & $1.83$ & $2$ & $4$ & $70 / 70$ & $0.49$ & $\cdot$ \\ 
\hline
$\mathcal{C}_{\text{HP}} \llbracket 505, 169 \rrbracket$ & $C[21, 13, 4]$ & $10.88$ & $3.62$ & $4$ & $4$ & $1\,653 / 1\,653$ & $0.33$ & \cite{Ostrev2024classicalproduct} \\
\hline
$\mathcal{C}_{\text{HP}} \llbracket 520, 256 \rrbracket$ & $C[22, 16, 4]$ & $11.88$ & $3.02$ & $4$ & $4$ & $2\,464 / 2\,464$ & $0.49$ & $\cdot$ \\
\hline
\end{tabular}
\label{tab:code_parameters_HP_codes}
\end{table*}

\begin{table}[tb!]
\centering
\caption{Parameters of tested \ac{CSS} \ac{GB} codes.}
\label{tab:code_parameters_GB_codes}
\resizebox{\columnwidth}{!}{%
\begin{tabular}{|l|c|c|c|c|c|c|c|}
\hline
\textbf{Code} & \multicolumn{6}{c|}{\textbf{Parameters}} & \textbf{Ref.} \\
\cline{2-7}
& $\lambda_{\text{r}}$ 
& $\lambda_{c,\mathrm{avg}}$ 
& $d(\mathcal{C})$ 
& $g$ 
& $n_{4,1}/n_{4,2}$ 
& $R_{\text{Q}}$ 
& \\
\hline
$\mathcal{C}_{\mathrm{GB}} \llbracket 126, 28 \rrbracket$ 
& $10$ & $5.0$ & $8$ & $4$ & $189/189$ & $0.22$ 
& \cite{Panteleev_degenerate} \\
\hline
$\mathcal{C}_{\mathrm{GB}} \llbracket 168, 76 \rrbracket$ 
& $12$ & $3.29$ & $4$ & $4$ & $120/112$ & $0.45$ 
& \cite{nisqec_gb_mmm} \\
\hline
$\mathcal{C}_{\mathrm{GB}} \llbracket 512, 174 \rrbracket$ 
& $8$ & $2.64$ & $6$ & $6$ & $0/0$ & $0.34$ 
& \cite{nisqec_gb_mmm} \\
\hline
\end{tabular}%
}
\end{table}

\begin{table}[tb!]
\centering
\caption{Parameters of tested \ac{CSS} \ac{QT} codes.}
\label{tab:code_parameters_QT_codes}
\resizebox{\columnwidth}{!}{%
\begin{tabular}{|l|c|c|c|c|c|c|c|}
\hline
\textbf{Code} & \multicolumn{6}{c|}{\textbf{Parameters}} & \textbf{Ref.} \\
\cline{2-7}
& $\lambda_{\text{r}}$ 
& $\lambda_{c,\mathrm{avg}}$ 
& $d(\mathcal{C})$ 
& $g$ 
& $n_{4,1}/n_{4,2}$ 
& $R_{\text{Q}}$ 
& \\
\hline
$\mathcal{C}_{\mathrm{QT}} \llbracket 168, 78 \rrbracket$ 
& $12$ & $3.43$ & $3$ & $4$ & $904/968$ & $0.46$ 
& \cite{nisqec_gb_mmm} \\ 
\hline
$\mathcal{C}_{\mathrm{QT}} \llbracket 180, 26 \rrbracket$ 
& $6$ & $2.67$ & $6$ & $4$ & $130/130$ & $0.14$ 
& \cite{nisqec_gb_mmm} \\ 
\hline
$\mathcal{C}_{\mathrm{QT}} \llbracket 500, 188 \rrbracket$ 
& $10$ & $3.20$ & $4$ & $4$ & $2496/2496$ & $0.38$ 
& \cite{nisqec_gb_mmm} \\
\hline
$\mathcal{C}_{\mathrm{QT}} \llbracket 504, 223 \rrbracket$ 
& $12$ & $3.43$ & $4$ & $4$ & $2352 / 2412$ & $0.44$ 
& \cite{nisqec_gb_mmm} \\ 
\hline
\end{tabular}%
}
\end{table}

\begin{table}[tb!]
\centering
\caption{Parameters of \ac{dc} \ac{CSS} Construction~A codes. 
}
\begin{tabular}{|l|c|c|c|c|c|c|c|c|}
\hline
\textbf{Code} & \multicolumn{7}{c|}{\textbf{Parameters}} \\
\cline{2-8}
 & $w$ & $u$ & $\ell$ & $d(\mathcal{C})$ & $g$ & $n_4$ & $R_{\text{Q}}$ \\
\hline                    
$\mathcal{C}_{\text{A}} \llbracket 128, 24 \rrbracket$ & $5$ & $8$ & $4$ & $8$ & $4$ & $640$ & $0.19$ \\
\hline
$\mathcal{C}_{\text{A}} \llbracket 512, 160 \rrbracket$ & $4$ & $8$ & $6$ & $4$ & $4$ & $1\,536$ & $0.31$ \\
\hline
\end{tabular}
\label{tab:code_parameters_A}
\end{table}

\subsection{Performance of \emph{Construction~A} Codes}
\label{subsec:performance_A}

A summary of the values of the parameters $w$, $u$, $\ell$, $d(\mathcal{C})$, $g$, $n_4$ and $R_{\text{Q}}$ for the two \ac{dc} \ac{CSS} \emph{Construction~A} codes is shown in Table~\ref{tab:code_parameters_A}.
Due to the high rank deficiency of the associated classical codes, 
although the design rate for both codes is $R_{\text{Q}} = 0$, the \emph{actual rate} is $R_{\text{Q}} = 0.19$ and $R_{\text{Q}} = 0.31$, for the first and second code reported in Table~\ref{tab:code_parameters_A}, respectively.

\emph{Construction~A}  yields  stabilizer generators weight $u = 8$, whereas non-trivial  \emph{Construction~B} codes have a minimum allowable weight of $12$.

\begin{figure}[t]
    \centering
    \resizebox{0.80\columnwidth}{!}{
        \begin{tikzpicture} 
    \begin{loglogaxis}[
        xlabel={$p$},
        ylabel={LER},
        xmin=0.001, xmax=0.2,
        ymin=0.0000001, ymax=1,
        legend style={
            legend columns=1,
            font=\scriptsize
        },
        grid=both,
        major grid style={solid, gray!70},
        minor grid style={solid, gray!30},
        legend pos = south east,
        height  = 8.0cm,
        width = \columnwidth, 
        scaled x ticks=false
    ]

    \addplot[ 
        color = BrickRed, 
        mark options = {solid},
        mark = square, thick,
        ]
        coordinates {
            (0.5, 1.0)
            (0.4, 1.0)
            (0.3, 1.0)
            (0.2, 0.9615384615384616)
            (0.1, 0.46511627906976744)
            (0.09, 0.423728813559322)
            (0.08, 0.32786885245901637)
            (0.07, 0.21231422505307856)
            (0.06, 0.1303780964797914)
            (0.05, 0.07674597083653108)
            (0.04, 0.03183699458771092)
            (0.03, 0.01179245283018868)
            (0.02, 0.0028173775849439343)
            (0.01, 0.00021591603457247547)
            (0.009, 0.00011632478811439846)
            (0.008, 8.362036021978776e-05)
            (0.007, 5.324918542058603e-05)
            (0.006, 2.5511284406431498e-05)
            (0.005, 1.2469078244588203e-05)
            (0.004, 5.333056306207331e-06)
            (0.003, 1.5343883746880665e-06)
            (0.002, 3.1900960126068003E-07)
        };

    \addplot[
        color = cyan, 
        mark = square, thick,
        mark options = {solid}
        ]
        coordinates { 
            (0.5, 1.0)
            (0.4, 1.0)
            (0.3, 1.0)
            (0.2, 0.9803921568627451)
            (0.1, 0.5882352941176471)
            (0.09, 0.4784688995215311)
            (0.08, 0.43859649122807015)
            (0.07, 0.26455026455026454)
            (0.06, 0.22321428571428573)
            (0.05, 0.15723270440251572)
            (0.04, 0.10121457489878542)
            (0.03, 0.042625745950554135)
            (0.02, 0.024888003982080638)
            (0.01, 0.004729026766291497)
            (0.009, 0.004262756298222431)
            (0.008, 0.0030586651985073715)
            (0.007, 0.002206531332744925)
            (0.006, 0.0017295352738719105)
            (0.005, 0.0012053566047515157)
            (0.004, 0.0008564356859621626)
            (0.003, 0.0004161187769436908)
            (0.002, 0.0002109691288873699)
            (0.001, 4.894774583393498e-05)
        };
            
    \addplot[
        color = NavyBlue, 
        mark = square, thick,
        mark options = {solid}
        ]
        coordinates { 
            (0.5, 1.0)
            (0.4, 1.0)
            (0.3, 1.0)
            (0.2, 0.9900990099009901)
            (0.1, 0.5813953488372093)
            (0.09, 0.39215686274509803)
            (0.08, 0.29239766081871343)
            (0.07, 0.16863406408094436)
            (0.06, 0.10787486515641856)
            (0.05, 0.066006600660066)
            (0.04, 0.0350385423966363)
            (0.03, 0.013888888888888888)
            (0.02, 0.0033061130029424404)
            (0.01, 0.0003986875206819151)
            (0.009, 0.0003239002775825379)
            (0.008, 0.000221390776860236)
            (0.007, 0.00011424732718378054)
            (0.006, 8.881223406286663e-05)
            (0.005, 4.92906581387256e-05)
            (0.004, 2.1485351072785065e-05)
            (0.003, 1.184970450984349e-05)
            (0.002, 3.533239177220246e-06)
            (0.001, 3.252905272165087e-07)
        };

     \addplot[
        color = Peach, 
        mark = square, thick,
        mark options = {solid}
        ]
        coordinates {
            (0.5, 1.0)
            (0.4, 1.0)
            (0.3, 1.0)
            (0.2, 1.0)
            (0.1, 0.6451612903225806)
            (0.09, 0.6896551724137931)
            (0.08, 0.4672897196261682)
            (0.07, 0.47619047619047616)
            (0.06, 0.3412969283276451)
            (0.05, 0.23201856148491878)
            (0.04, 0.176056338028169)
            (0.03, 0.10526315789473684)
            (0.02, 0.0493339911198816)
            (0.01, 0.013324450366422385)
            (0.009, 0.012174336498660824)
            (0.008, 0.00811886011204027)
            (0.007, 0.006603275224511358)
            (0.006, 0.005515719801434087)
            (0.005, 0.0037741545893719805)
            (0.004, 0.001997842330283294)
            (0.003, 0.0010405394156330643)
            (0.002, 0.0005322121397589079)
            (0.001, 0.00011183768326002373)
        };
        
    \addplot[
        color = CadetBlue, 
        mark = square, thick,
        mark options = {solid}
        ]
        coordinates {
            (0.5, 1.0)
            (0.4, 1.0)
            (0.3, 1.0)
            (0.2, 0.9900990099009901)
            (0.1, 0.5780346820809249)
            (0.09, 0.44052863436123346)
            (0.08, 0.3412969283276451)
            (0.07, 0.25839793281653745)
            (0.06, 0.11376564277588168)
            (0.05, 0.07272727272727272)
            (0.04, 0.030797659377887282)
            (0.03, 0.010094891984655765)
            (0.02, 0.0029421283356380005)
            (0.01, 0.0002946488816601697)
            (0.009, 0.00020680512959443445)
            (0.008, 0.0001385730577946652)
            (0.007, 0.00010264980198853197)
            (0.006, 7.092258941210847e-05)
            (0.005, 3.514708704457283e-05)
            (0.004, 1.644761459423159e-05)
            (0.003, 8.119970946743953e-06)
            (0.002, 2.2555682423750626e-06)
            (0.001, 2.944521178175594e-07)
        };

    \addplot[
        color = ForestGreen, 
        mark = square, thick,
        mark options = {solid}
        ]
        coordinates {
            (0.5, 1.0)
            (0.4, 1.0)
            (0.3, 1.0)
            (0.2, 0.9615384615384616)
            (0.1, 0.425531914893617)
            (0.09, 0.27932960893854747)
            (0.08, 0.19569471624266144)
            (0.07, 0.12804097311139565)
            (0.06, 0.07241129616220131)
            (0.05, 0.023679848448969927)
            (0.04, 0.007959882193743533)
            (0.03, 0.0018512690449302997)
            (0.02, 0.00018175341151153407)
            (0.01, 6.725372895025538e-06)
            (0.009, 3.1526461451681315e-06)
            (0.008, 2.27616806287195e-06)
            (0.007, 1.0874190767190034e-06)
            (0.006, 6.626935098183345e-07)
            (0.005, 3.1063249877335885e-07)
        };
    

    \legend{
        {$\mathcal{C}_{\text{A}} \llbracket 128, 24 \rrbracket$},
        {$\mathcal{C}_{\text{Bic}} \llbracket 128, 24 \rrbracket$},
        {$\mathcal{C}_{\text{QT}} \llbracket 180, 26 \rrbracket$},
        {$\mathcal{C}_{\text{HP}} \llbracket 125, 25 \rrbracket$},
        {$\mathcal{C}_{\text{QC}} \llbracket 136, 38 \rrbracket$},
        {$\mathcal{C}_{\text{GB}} \llbracket 126, 28 \rrbracket$},
    }
    
    \end{loglogaxis}
\end{tikzpicture}
    }
    \caption{Comparison between the \ac{LER} of $\mathcal{C}_{\text{A}}\llbracket 128, 24 \rrbracket$  and several other codes, as a function of $p$, using the \ac{BP+OSD} decoder.}
    \label{fig:constrA_[[128,24,8]]}
\end{figure}

In Fig.~\ref{fig:constrA_[[128,24,8]]}, the performance of  $\mathcal{C}_{\text{A}}\llbracket 128, 24 \rrbracket$, compared to several instances of state-of-the-art \ac{QLDPC} code constructions and using \ac{BP+OSD} decoder, is reported.
In particular, we take into account one code instance for each \ac{QLDPC} code construction, in the short block-length scenario $(n \approx 128)$, with stabilizer generator weight $\approx 8$ and $R_{\text{Q}} \approx 0.19$.
The proposed $\mathcal{C}_{\text{A}}\llbracket 128, 24 \rrbracket$ is characterized by better performance than $\mathcal{C}_{\text{Bic}}\llbracket 128, 24 \rrbracket$   and  $\mathcal{C}_{\text{HP}}\llbracket 125, 25 \rrbracket$, in line with their smaller minimum distances, namely $d(\mathcal{C}_{\text{Bic}}) = d(\mathcal{C}_{\text{HP}}) = 4$. 
Moreover, the same holds also for  $\mathcal{C}_{\text{QT}}\llbracket 180, 26 \rrbracket$  and  $\mathcal{C}_{\text{QC}}\llbracket 136, 38 \rrbracket$, since they have $d(\mathcal{C}_{\text{QT}}) = d(\mathcal{C}_{\text{QC}}) = 6$.
Finally, we also add  the performance of $\mathcal{C}_{\text{GB}}\llbracket 126, 28 \rrbracket$, called “A$2$” in~\cite{Panteleev_degenerate}, as a benchmark, since the weight of stabilizer generators is larger than that of the other codes, namely $10$. 
Although $d(\mathcal{C})$ and $g$ are the same as those of the \ac{QD} code, and $n_4$ is comparable, the \ac{GB} code performs better.

\begin{figure}[t]
    \centering
    \resizebox{0.80\columnwidth}{!}{
        \input{Figures/constructionA_512,160,4}
    }
    \caption{Comparison between the \ac{LER} of $\mathcal{C}_{\text{A}}\llbracket 512, 160 \rrbracket$  and several other codes, as a function of $p$, using the \ac{BP4} (empty dot lines) and the \ac{BP4-M} (full dot lines) decoders.}
    \label{fig:constrA_[[512,160,4]]}
\end{figure}

In Fig.~\ref{fig:constrA_[[512,160,4]]}, the performance of  $\mathcal{C}_{\text{A}}\llbracket 512, 160 \rrbracket$, compared to several state-of-the-art \ac{QLDPC} codes, is assessed using both \ac{BP4} and \ac{BP4-M} decoders.
For this setting, we are able to consider several code instances, one for each \ac{QLDPC} code construction\footnote{We just do not consider any code based on the \ac{QC} framework of~\cite{Hagiwara2007} since, for such construction, it is not possible to construct a code with $R_{\text{Q}} \approx 0.31$.}, in the moderate block-length scenario $(n \approx 512)$, with stabilizer generator weight $\approx 8$ and $R_{\text{Q}} \approx 0.31$. 
We use the \ac{BP4-M} decoder for the \ac{dc} code families and, as expected, performance improves slightly: the gain is limited because the minimum distance of these codes is relatively small.
$\mathcal{C}_{\text{A}}\llbracket 512, 160 \rrbracket$  is characterized by better performance than $\mathcal{C}_{\text{Bic}}\llbracket 512, 160 \rrbracket$, consistently with the fact that the latter has a smaller minimum distance, namely $d(\mathcal{C}_{\text{Bic}}) = 2$.  Regarding $\mathcal{C}_{\text{HP}}$ and $\mathcal{C}_{\text{QT}}$, the performance under standard \ac{BP4} decoder is very similar to that of $\mathcal{C}_{\text{A}}$, even if $\mathcal{C}_{\text{HP}}$ and $\mathcal{C}_{\text{QT}}$ have a greater stabilizer generator weight, namely $10.88$ (on average) and $10$, respectively.
In fact, these three codes share the same minimum distance, i.e., $d(\mathcal{C}_{\text{A}}) = d(\mathcal{C}_{\text{HP}}) = d(\mathcal{C}_{\text{QT}}) = 4$.
Interestingly, using the \ac{BP4-M} decoder, for small values of $p$, the code $\mathcal{C}_{\text{A}}$ achieves slightly better performance than that of both $\mathcal{C}_{\text{HP}}$ and $\mathcal{C}_{\text{QT}}$.
The last comparison is with $\mathcal{C}_{\text{GB}}\llbracket 512, 174 \rrbracket$, which behaves better 
in line with its larger girth and minimum distance. As shown later in Sec.~\ref{subsec:PLnoise}, the gap significantly reduces when  phenomenological-noise  is considered.

\subsection{Performance of \emph{Construction~B} codes}

\begin{table}[tb!]
\centering
\caption{Parameters of \ac{dc} \ac{CSS} \emph{Construction~B} codes. 
}
\begin{tabular}{|l|c|c|c|c|c|c|c|c|c|}
\hline
\textbf{Code} & \multicolumn{7}{c|}{\textbf{Parameters}} \\
\cline{2-8}
  & $v$ & $u$ & $\ell$ & $d(\mathcal{C})$ & $g$ & $n_4$ & $R_{\text{Q}}$ \\
\hline
$\mathcal{C}_{\text{B}, 3} \llbracket 128, 64 \rrbracket$ & $3$ & $4$ & $5$ & $6$ & $4$ & $192$ & $0.50$ \\
\hline
$\mathcal{C}_{\text{B}, 3} \llbracket 512, 256 \rrbracket$ & $3$ & $4$ & $7$ & $6$ & $4$ & $768$ & $0.50$ \\
\hline
$\mathcal{C}_{\text{B}, 5}^{\text{NU}} \llbracket 512, 256 \rrbracket$ & $5$ & $4$ & $7$ & $2$ & $4$ & $13\,312$ & $0.50$ \\
\hline
$\mathcal{C}_{\text{B}, 5}^{\text{C}} \llbracket 512, 256 \rrbracket$ & $5$ & $4$ & $7$ & $4$ & $4$ & $11\,520$ & $0.50$ \\
\hline
$\mathcal{C}_{\text{B}, 5}^{\text{RO1}} \llbracket 512, 256 \rrbracket$ & $5$ & $4$ & $7$ & $4$ & $4$ & $6\,400$ & $0.50$ \\
\hline
$\mathcal{C}_{\text{B}, 5}^{\text{RO2}} \llbracket 512, 256 \rrbracket$ & $5$ & $4$ & $7$ & $8$ & $4$ & $4\,096$ & $0.50$ \\
\hline
$\mathcal{C}_{\text{B}, 5}^{\text{H}} \llbracket 512, 256 \rrbracket$ & $5$ & $4$ & $7$ & $10$ & $4$ & $2\,560$ & $0.50$ \\
\hline
\end{tabular}
\label{tab:code_parameters_B_4}
\end{table}

 The values of the parameters $v$, $u$, $\ell$, $d(\mathcal{C})$, $g$, $n_4$ and $R_{\text{Q}}$ for a number of \emph{Construction~B} codes are reported in Table~\ref{tab:code_parameters_B_4}. 
The notation for such codes is $\mathcal{C}_{\text{B}, v}$.

\begin{figure}[t]
    \centering
    \resizebox{0.80\columnwidth}{!}{
        \begin{tikzpicture} 
    \begin{loglogaxis}[
        xlabel={$p$},
        ylabel={\ac{LER}},
        xmin=0.001, xmax=0.1,
        ymin=0.000001, ymax=1,
        legend style={
            legend columns=1,
            font=\scriptsize
        },
        grid=both,
        major grid style={solid, gray!70},
        minor grid style={solid, gray!30},
        legend pos = south east,
        height  = 8.0cm,
        width = \columnwidth, 
        scaled x ticks=false
    ]

    \addplot[
        color = Black, 
        mark = *,
        mark size = 2pt,
        mark options = {solid},
        thick,
        ]
        coordinates {
            (0.00050, 1.3004e-01)
            (0.00060, 1.3459e-01)
            (0.00070, 1.5408e-01)
            (0.00080, 1.9685e-01)
            (0.00090, 2.0367e-01)
            (0.00100, 2.1645e-01)
            (0.00150, 2.9070e-01)
            (0.00200, 4.2017e-01)
            (0.00250, 5.0761e-01)
            (0.00300, 5.2083e-01)
            (0.00350, 6.1350e-01)
            (0.00400, 6.6667e-01)
            (0.00450, 7.7519e-01)
            (0.00500, 7.5188e-01)
            (0.00600, 8.0000e-01)
            (0.00700, 8.3333e-01)
            (0.00800, 8.8496e-01)
            (0.00900, 9.4340e-01)
            (0.01000, 9.6154e-01)
            (0.01500, 9.9010e-01)
            (0.02000, 9.9010e-01)
            (0.02500, 1.0000e+00)
            (0.03000, 1.0000e+00)
            (0.03500, 1.0000e+00)
            (0.04000, 1.0000e+00)
            (0.04500, 1.0000e+00)
            (0.05000, 1.0000e+00)
            (0.06000, 1.0000e+00)
            (0.07000, 1.0000e+00)
            (0.08000, 1.0000e+00)
            (0.09000, 1.0000e+00)
            (0.1, 1.0000e+00)
    };

    \addplot[
        color = Gray, 
        mark = *,
        mark size = 2pt,
        mark options = {solid},
        thick,
        ]
        coordinates {
            (0.00050, 2.9748e-04)
            (0.00060, 3.3393e-04)
            (0.00070, 6.2514e-04)
            (0.00080, 6.8886e-04)
            (0.00090, 8.0132e-04)
            (0.00100, 1.1867e-03)
            (0.00100, 9.5178e-04)
            (0.00150, 2.4702e-03)
            (0.00200, 3.6332e-03)
            (0.00250, 8.0090e-03)
            (0.00300, 1.2324e-02)
            (0.00350, 1.3992e-02)
            (0.00400, 2.3063e-02)
            (0.00450, 2.1317e-02)
            (0.00500, 2.7503e-02)
            (0.00600, 4.8544e-02)
            (0.00700, 6.3371e-02)
            (0.00800, 9.1996e-02)
            (0.00900, 1.1876e-01)
            (0.01000, 1.3369e-01)
            (0.01500, 3.2680e-01)
            (0.02000, 5.0000e-01)
            (0.02500, 7.5188e-01)
            (0.03000, 8.4746e-01)
            (0.03500, 9.1743e-01)
            (0.04000, 9.7087e-01)
            (0.04500, 1.0000e+00)
            (0.05000, 9.8039e-01)
            (0.06000, 1.0000e+00)
            (0.07000, 1.0000e+00)
            (0.08000, 1.0000e+00)
            (0.09000, 1.0000e+00)
            (0.10000, 1.0000e+00)
    };

    \addplot[
        color = Peach, 
        mark = *,
        mark size = 2pt,
        mark options = {solid},
        thick,
        ]
        coordinates {
            (0.00050, 6.2775e-04)
            (0.00060, 7.2876e-04)
            (0.00070, 9.3346e-04)
            (0.00080, 1.0215e-03)
            (0.00090, 1.1671e-03)
            (0.00100, 1.3091e-03)
            (0.00100, 1.2808e-03)
            (0.00150, 2.4933e-03)
            (0.00200, 5.1329e-03)
            (0.00250, 6.5924e-03)
            (0.00300, 6.7650e-03)
            (0.00350, 1.1590e-02)
            (0.00400, 1.2620e-02)
            (0.00450, 1.6077e-02)
            (0.00500, 1.6764e-02)
            (0.00600, 2.5621e-02)
            (0.00700, 3.0931e-02)
            (0.00800, 3.4590e-02)
            (0.00900, 4.3535e-02)
            (0.01000, 5.1520e-02)
            (0.01500, 9.2421e-02)
            (0.02000, 1.5152e-01)
            (0.02500, 2.6525e-01)
            (0.03000, 2.7548e-01)
            (0.03500, 4.5872e-01)
            (0.04000, 6.4516e-01)
            (0.04500, 7.1942e-01)
            (0.05000, 7.9365e-01)
            (0.06000, 9.6154e-01)
            (0.07000, 9.9010e-01)
            (0.08000, 1.0000e+00)
            (0.09000, 1.0000e+00)
            (0.10000, 1.0000e+00)
    };

    \addplot[
        color = ForestGreen, 
        mark = *,
        mark size = 2pt,
        mark options = {solid},
        thick,
        ]
        coordinates {
            (0.00050, 1.6057e-06)
            (0.00060, 1.5371e-05)
            (0.00070, 4.1692e-06)
            (0.00080, 2.6864e-05)
            (0.00090, 3.7343e-05)
            (0.00100, 8.2271e-06)
            (0.00150, 7.2615e-05)
            (0.00200, 1.4970e-04)
            (0.00250, 1.8389e-04)
            (0.00300, 2.2258e-04)
            (0.00350, 3.9839e-04)
            (0.00400, 4.0876e-04)
            (0.00450, 4.1588e-04)
            (0.00500, 6.5128e-04)
            (0.00600, 1.0088e-03)
            (0.00700, 1.3364e-03)
            (0.00800, 1.9151e-03)
            (0.00900, 2.8497e-03)
            (0.01000, 3.8691e-03)
            (0.01500, 1.0588e-02)
            (0.02000, 3.1928e-02)
            (0.02500, 6.0132e-02)
            (0.03000, 1.3245e-01)
            (0.03500, 2.0284e-01)
            (0.04000, 3.6765e-01)
            (0.04500, 5.5249e-01)
            (0.05000, 6.7114e-01)
            (0.06000, 9.0090e-01)
            (0.07000, 9.9010e-01)
            (0.08000, 1.0000e+00)
            (0.09000, 1.0000e+00)
            (0.10000, 1.0000e+00)
    };

    \addplot[
        color = NavyBlue, 
        mark = *,
        mark size = 2pt,
        mark options = {solid},
        thick,
        ]
        coordinates {
            (0.00050, 1.9652e-06)
            (0.00060, 3.4850e-06)
            (0.00070, 5.2096e-06)
            (0.00080, 6.7764e-06)
            (0.00090, 3.9484e-06)
            (0.00100, 2.1606e-06)
            (0.00150, 7.2635e-06)
            (0.00200, 2.0760e-05)
            (0.00250, 3.9886e-05)
            (0.00300, 6.9238e-05)
            (0.00350, 7.9420e-05)
            (0.00400, 1.1442e-04)
            (0.00450, 2.0712e-04)
            (0.00500, 2.1053e-04)
            (0.00600, 3.2529e-04)
            (0.00700, 4.7650e-04)
            (0.00800, 7.1734e-04)
            (0.00900, 1.0418e-03)
            (0.010, 1.1936E-03)
            (0.015, 3.9836e-03)
            (0.020, 9.8717e-03)
            (0.025, 2.4456e-02)
            (0.030, 4.8193e-02)
            (0.035, 1.4993e-01)
            (0.040, 2.9586e-01)
            (0.045, 4.7619e-01)
            (0.050, 5.8140e-01)
            (0.060, 9.1743e-01)
            (0.070, 9.8039e-01)
            (0.080, 1.0000e+00)
            (0.090, 1.0000e+00)
            (0.100, 1.0000e+00)
        };

    \legend{ 
        {$\mathcal{C}_{\text{B}, 5}^{\text{NU}} \llbracket 512, 256 \rrbracket$},
        {$\mathcal{C}_{\text{B}, 5}^{\text{C}} \llbracket 512, 256 \rrbracket$},
        {$\mathcal{C}_{\text{B}, 5}^{\text{RO1}} \llbracket 512, 256 \rrbracket$},
        {$\mathcal{C}_{\text{B}, 5}^{\text{RO2}} \llbracket 512, 256 \rrbracket$},
        {$\mathcal{C}_{\text{B}, 5}^{\text{H}} \llbracket 512, 256 \rrbracket$},
    }
    \end{loglogaxis}
\end{tikzpicture}
    }
    \caption{\Ac{LER} of \ac{dc} \ac{CSS} \emph{Construction~B} codes, as a function of $p$, and using the \ac{BP4-M} decoder. 
    We use $u = 4$, $v = 5$, and $\ell = 7$.
    }
    \label{fig:constrB_rand_heu}
\end{figure}

Fig.~\ref{fig:constrB_rand_heu} illustrates the \ac{LER} of the $\llbracket 512, 256 \rrbracket$ codes with $v = 5$ obtained by different design choices. For these simulations,  since we are considering only \ac{dc} \ac{CSS} codes, we employ the \ac{BP4-M} decoder. 
The following observations hold:
\begin{itemize}
    \item[$(i)$] the code $\mathcal{C}_{\text{B}, 5}^{\text{NU}}$ is characterized by the fact that its difference sets $W_i$ are not unique;
    \item[$(ii)$] the code $\mathcal{C}_{\text{B}, 5}^{\text{C}}$ has the elements of the supports of the signatures clustered in a subset of $A$; 
    \item[$(iii)$] the code $\mathcal{C}_{\text{B}, 5}^{\text{RO1}}$ is randomly generated, and it has many overlaps between its difference sets $W_i$;
    \item[$(iv)$] the code $\mathcal{C}_{\text{B}, 5}^{\text{RO2}}$ is randomly generated with a smaller number of overlaps between its difference sets $W_i$;
    \item[$(v)$] the code $\mathcal{C}_{\text{B}, 5}^{\text{H}}$ is generated following our heuristic procedure of Sec.~\ref{subsec:heuristic}, ensuring 
    the best performance.
\end{itemize}
 
By looking at Table~\ref{tab:code_parameters_B_4} and Fig. \ref{fig:constrB_rand_heu}, we observe that the performance of the codes strongly depends on how the $u$ signatures are selected. If two signatures have identical difference sets ($\mathcal{C}_{\text{B}, 5}^{\text{NU}}$), or if their supports occupy a small portion of $A$ ($\mathcal{C}_{\text{B}, 5}^{\text{C}}$) and exhibit many overlaps ($\mathcal{C}_{\text{B}, 5}^{\text{RO1}}$), the number of length-$4$ cycles is large. 
In fact, $\mathcal{C}_{\text{B}, 5}^{\text{NU}}$ exhibits the largest values of $n_4$ among the codes considered in this figure, namely $13\,312$ and the smallest minimum distances $d(\mathcal{C}_{\text{B}, 5}^{\text{NU}}) = 2$.
Moreover, even if $\mathcal{C}_{\text{B}, 5}^{\text{C}}$ and $\mathcal{C}_{\text{B}, 5}^{\text{RO1}}$ have a slightly lower number of length-$4$ cycles, namely $11\,520$ and $6\,400$, respectively, their minimum distance remains small, i.e., $d(\mathcal{C}_{\text{B}, 5}^{\text{C}}) = d(\mathcal{C}_{\text{B}, 5}^{\text{RO1}}) = 4$. 
In fact, the latter codes have similar, relatively poor, performance.
As the number of overlaps decreases ($\mathcal{C}_{\text{B}, 5}^{\text{RO2}}$), the number of short cycles drops ($n_4 = 4\,096$) and the minimum distance grows accordingly, $d(\mathcal{C}_{\text{B}, 5}^{\text{RO2}}) = 8$.
As a consequence, the performance improves.
Finally, the code $\mathcal{C}_{\text{B},5}^{\text{H}}$, generated using the proposed heuristic approach, achieves mutually disjoint difference sets and attains the smallest number of length-$4$ cycles, namely $n_4 = 2\,560$ and, most importantly, the best possible minimum distance, i.e., $d(\mathcal{C}_{\text{B},5}^{\text{H}}) = 10$ (see the upper bound in  Corollary~\ref{cor:min_dist_B}), thus providing the best performance\footnote{Note that  $\mathcal{C}_{\text{B},5}^{\text{H}}\llbracket 512, 256 \rrbracket$ is the same code of the second line of Table~\ref{tab:constrB_distance}.}.

\begin{figure}[t]
    \centering
    \resizebox{0.80\columnwidth}{!}{
        \begin{tikzpicture} 
    \begin{loglogaxis}[
        xlabel={$p$},
        ylabel={ LER },
        xmin=0.0005, xmax=0.1,
        ymin=0.0000001, ymax=1,
        legend style={
            legend columns=1,
            font=\scriptsize
        },
        grid=both,
        major grid style={solid, gray!70},
        minor grid style={solid, gray!30},
        legend pos = south east,
        height  = 8.0cm,
        width = \columnwidth, 
        scaled x ticks=false
    ]
    
    \addplot[ 
        color = BrickRed, 
        mark = square, thick,
        mark options = {solid}
        ]
        coordinates {
            (0.5, 1.0)
            (0.4, 1.0)
            (0.3, 1.0)
            (0.2, 1.0)
            (0.1, 0.8849557522123894)
            (0.09, 0.8547008547008547)
            (0.08, 0.8130081300813008)
            (0.07, 0.6211180124223602)
            (0.06, 0.45662100456621)
            (0.05, 0.36363636363636365)
            (0.04, 0.1976284584980237)
            (0.03, 0.09823182711198428)
            (0.02, 0.03179650238473768)
            (0.01, 0.004429286441954201)
            (0.009, 0.002725464010247745)
            (0.008, 0.0018810428501561266)
            (0.007, 0.0010155274141625454)
            (0.006, 0.0006979438573961111)
            (0.005, 0.00031960930957996946)
            (0.004, 0.0001675614447818015)
            (0.003, 7.461294534601753e-05)
            (0.002, 2.2776718913422957e-05)
            (0.001, 2.8479975302165417e-06)
            (0.0009, 1.8992873626907828e-06)
            (0.0008, 1.4222948616653971e-06)
            (0.0007, 1.036579200735656e-06)
            (0.0006, 5.434920549342203e-07)
            (0.0005, 3.0241876371977534e-07)
        };

    \addplot[
        color = cyan, 
        mark = square, thick,
        mark options = {solid}
        ]
        coordinates { 
            (0.5, 1.0)
            (0.4, 1.0)
            (0.3, 1.0)
            (0.2, 1.0)
            (0.1, 0.8620689655172413)
            (0.09, 0.8264462809917356)
            (0.08, 0.7518796992481203)
            (0.07, 0.746268656716418)
            (0.06, 0.5780346820809249)
            (0.05, 0.4975124378109453)
            (0.04, 0.3194888178913738)
            (0.03, 0.20161290322580644)
            (0.02, 0.10460251046025104)
            (0.01, 0.029708853238265002)
            (0.009, 0.027225701061802342)
            (0.008, 0.022222222222222223)
            (0.007, 0.01893939393939394)
            (0.006, 0.014204545454545454)
            (0.005, 0.010340192327577293)
            (0.004, 0.008552125203112973)
            (0.003, 0.004570383912248629)
            (0.002, 0.002978850163836759)
            (0.001, 0.0016348270624699012)
            (0.0009, 0.00125196418541646)
            (0.0008, 0.0011010063197762754)
            (0.0007, 0.0010814785975385547)
            (0.0006, 0.0007768981564206748)
            (0.0005, 0.0006688426347049066)
        };

    \addplot[
        color = NavyBlue, 
        mark = square, thick,
        mark options = {solid}
        ]
        coordinates { 
            (0.5, 1.0)
            (0.4, 1.0)
            (0.3, 1.0)
            (0.2, 1.0)
            (0.1, 0.9174311926605505)
            (0.09, 0.9259259259259259)
            (0.08, 0.8547008547008547)
            (0.07, 0.7299270072992701)
            (0.06, 0.6756756756756757)
            (0.05, 0.5434782608695652)
            (0.04, 0.35587188612099646)
            (0.03, 0.2079002079002079)
            (0.02, 0.07457121551081283)
            (0.01, 0.021164021164021163)
            (0.009, 0.017006802721088437)
            (0.008, 0.013635124079629125)
            (0.007, 0.008476012883539583)
            (0.006, 0.007027900766041184)
            (0.005, 0.00518000518000518)
            (0.004, 0.0026360185575706454)
            (0.003, 0.0018322736683951115)
            (0.002, 0.0008440741772386957)
            (0.001, 0.00017783847997894392)
            (0.0009, 0.00014429244037475632)
            (0.0008, 9.994113467167839e-05)
            (0.0007, 8.008482584753771e-05)
            (0.0006, 7.24312265503904e-05)
            (0.0005, 4.252519724249611e-05)
        };

     \addplot[
        color = Peach, 
        mark = square, thick,
        mark options = {solid}
        ]
        coordinates {
            (0.5, 1.0)
            (0.4, 1.0)
            (0.3, 1.0)
            (0.2, 1.0)
            (0.1, 1.0)
            (0.09, 0.9615384615384616)
            (0.08, 0.9523809523809523)
            (0.07, 0.9615384615384616)
            (0.06, 0.8771929824561403)
            (0.05, 0.819672131147541)
            (0.04, 0.6666666666666666)
            (0.03, 0.6329113924050633)
            (0.02, 0.5154639175257731)
            (0.01, 0.34875621890547264)
            (0.009, 0.29411764705882354)
            (0.008, 0.24509803921568626)
            (0.007, 0.17482517482517482)
            (0.006, 0.1592356687898089)
            (0.005, 0.13368983957219252)
            (0.004, 0.12594458438287154)
            (0.003, 0.08576329331046312)
            (0.002, 0.05707762557077625)
            (0.001, 0.029489826010026542)
            (0.0009, 0.02436426116838488)
            (0.0008, 0.020907380305247754)
            (0.0007, 0.0188821752265861)
            (0.0006, 0.016877637130801686)
            (0.0005, 0.014753614635585718)
        };
        
    \addplot[
        color = CadetBlue, 
        mark = square, thick,
        mark options = {solid}
        ]
        coordinates {
            (0.5, 1.0)
            (0.4, 1.0)
            (0.3, 1.0)
            (0.2, 1.0)
            (0.1, 0.9523809523809523)
            (0.09, 0.9345794392523364)
            (0.08, 0.8849557522123894)
            (0.07, 0.7692307692307693)
            (0.06, 0.6622516556291391)
            (0.05, 0.45871559633027525)
            (0.04, 0.2824858757062147)
            (0.03, 0.1497005988023952)
            (0.02, 0.03259452411994785)
            (0.01, 0.004367575122292104)
            (0.009, 0.0025685811157916366)
            (0.008, 0.0017515588874097948)
            (0.007, 0.001175861906777668)
            (0.006, 0.000765023141950044)
            (0.005, 0.00033870060898369495)
            (0.004, 0.00019543617445414677)
            (0.003, 8.572660585221247e-05)
            (0.002, 1.9430231769633618e-05)
            (0.001, 2.3775800666730513e-06)
            (0.0009, 1.6967243650229639e-06)
            (0.0008, 1.1423568427820314e-06)
            (0.0007, 9.727972426723658e-07)
            (0.0006, 4.835329306500099e-07)
            (0.0005, 2.6869412000293727e-07)
        };

    \addplot[
        color = ForestGreen, 
        mark = square, thick,
        mark options = {solid}
        ]
        coordinates {
            (0.5, 1.0)
            (0.4, 1.0)
            (0.3, 1.0)
            (0.2, 1.0)
            (0.1, 0.9009009009009009)
            (0.09, 0.7874015748031497)
            (0.08, 0.7874015748031497)
            (0.07, 0.6024096385542169)
            (0.06, 0.4132231404958678)
            (0.05, 0.2564102564102564)
            (0.04, 0.12690355329949238)
            (0.03, 0.04111842105263158)
            (0.02, 0.010019036168720569)
            (0.01, 0.0006004491359536934)
            (0.009, 0.00035953763460190195)
            (0.008, 0.00022529118886160362)
            (0.007, 0.00015933969629853886)
            (0.006, 9.090347142176665e-05)
            (0.005, 5.7464130889498774e-05)
            (0.004, 4.1685911662550876e-05)
            (0.003, 1.8295760853914388e-05)
            (0.002, 7.385733834696641e-06)
            (0.001, 1.5397725718956768e-06)
            (0.0009, 1.1304813053550062e-06)
            (0.0008, 7.270171767695445e-07)
            (0.0007, 7.009938494519252e-07)
            (0.0006, 6.249873086303601e-07)
            (0.0005, 3.7194040244204464e-07)
        };


    \legend{ 
    {$\mathcal{C}_{\text{B}, 3} \llbracket 128, 64 \rrbracket$},
    {$\mathcal{C}_{\text{Bic}} \llbracket 128, 64 \rrbracket$},
    {$\mathcal{C}_{\text{QT}} \llbracket 168, 78 \rrbracket$},
    {$\mathcal{C}_{\text{HP}} \llbracket 130, 64 \rrbracket$},
    {$\mathcal{C}_{\text{QC}} \llbracket 156, 82 \rrbracket$},
    {$\mathcal{C}_{\text{GB}} \llbracket 168, 76 \rrbracket$},
    }
    
    \end{loglogaxis}
\end{tikzpicture}
    }
    \caption{Comparison between the \ac{LER} of $\mathcal{C}_{\text{B,}3} \llbracket 128, 64 \rrbracket$ and several other codes, as a function of $p$, and using the \ac{BP+OSD} decoder.}
    \label{fig:constrB_[[128,64,6]]}
\end{figure}

Fig.~\ref{fig:constrB_[[128,64,6]]} reports the performance of $\mathcal{C}_{\text{B}, 3}\llbracket 128, 64 \rrbracket$ and of several  state-of-the-art \ac{QLDPC} codes under \ac{BP+OSD} decoding. 
$\mathcal{C}_{\text{B}, 3}\llbracket 128, 64 \rrbracket$ is characterized by better performance than  $\mathcal{C}_{\text{Bic}}\llbracket 128, 64 \rrbracket$  and $\mathcal{C}_{\text{HP}}\llbracket 130, 64 \rrbracket$, coherent with the fact that such codes have a smaller minimum distance, namely $d(\mathcal{C}_{\text{Bic}}) = d(\mathcal{C}_{\text{HP}}) = 2$. 
The same holds also for $\mathcal{C}_{\text{QT}}\llbracket 168, 78 \rrbracket$, for which $d(\mathcal{C}_{\text{QT}}) = 3$. $\mathcal{C}_{\text{B}, 3}$ and $\mathcal{C}_{\text{QC}}$ show the same performance.  We attribute the same \ac{LER} to the OSD post-processing, which behaves very well in presence of length-$4$ cycles~\cite{Panteleev_degenerate}.  
The only code that shows slightly better performance than $\mathcal{C}_{\text{B}, 3}$, for $p > 10^{-3}$, is the \ac{GB} code.
Although this code has a smaller minimum distance, namely $ d(\mathcal{C}_{\text{GB}}) = 4$, it is characterized by a smaller amount of length-$4$ cycles, namely, $120$\footnote{Here we report only $n_{4, 1}$ since we have decoded only $\6H_1$.} instead of $192$; however, it has a smaller rate, namely $0.45$ instead of $0.50$.

\begin{figure}[t]
    \centering
    \resizebox{0.80\columnwidth}{!}{
        \input{Figures/constructionB_512,256,6}
    }
    \caption{Comparison between the \ac{LER} of $\mathcal{C}_{\text{B,}3} \llbracket 512, 256 \rrbracket$ and several other codes, as a function of $p$, using the \ac{BP4} (empty dot lines) and the \ac{BP4-M} (full dot lines) decoders.}
    \label{fig:constrB_[[512,256,6]]}
\end{figure}

Finally, Fig.~\ref{fig:constrB_[[512,256,6]]} depicts the performance of $\mathcal{C}_{\text{B}, 3}\llbracket 512, 256 \rrbracket$, compared to several instances of state-of-the-art \ac{QLDPC} codes, using both the \ac{BP4} and \ac{BP4-M} decoders.
We consider several code instances, one for each \ac{QLDPC} code construction\footnote{We do not consider any \ac{GB} code constructed as in~\cite{Panteleev_degenerate} since there are no instances with $n \approx 512$ and $R_{\text{Q}} \approx 0.50$.}, 
in the moderate block-length scenario $(n \approx 512)$, with stabilizer generator weight $\approx 12$ and $R_{\text{Q}} \approx 0.50$. 
In this case, the performance of $\mathcal{C}_{\text{Bic}}$ improves only slightly. 
Instead, for $\mathcal{C}_{\text{B}, 3}\llbracket 512, 256 \rrbracket$, the gain in terms of performance is much greater, in line with the larger minimum distance, in fact $d(\mathcal{C}_{\text{B}, 3}) = 6$. 
For instance, for $p = 3 \cdot 10^{-4}$, the \ac{LER} associated to the \ac{BP4-M} decoder is more than two orders of magnitude smaller than the \ac{LER} obtained with
\ac{BP4} decoder.
We remark that the proposed $\mathcal{C}_{\text{B}, 3}\llbracket 512, 256 \rrbracket$ exhibits better performance than  $\mathcal{C}_{\text{Bic}}\llbracket 512, 256 \rrbracket$, and $\mathcal{C}_{\text{HP}}\llbracket 520, 256 \rrbracket$. These codes have indeed a smaller minimum distance, namely $d(\mathcal{C}_{\text{Bic}}) = 2$ and $d(\mathcal{C}_{\text{HP}}) = 4$.
Regarding the comparison with code $\mathcal{C}_{\text{QT}}$, we note that the performance of the \ac{QD} code is worse with the \ac{BP4} decoder, although $d(\mathcal{C}_{\text{QT}}) < d(\mathcal{C}_{\text{B}, 3})$.
However, considering the \ac{BP4-M} decoder, the code $\mathcal{C}_{\text{B}, 3}$ behaves better even if the rate of $\mathcal{C}_{\text{QT}}$ is lower.
The code $\mathcal{C}_{\text{QC}}$, instead, exhibits more favorable performance, consistently with its higher girth and minimum distance, namely, $g = 6$, and $d(\mathcal{C}_{\text{QC}}) = 8$, respectively.

\subsection{Phenomenological noise}
\label{subsec:PLnoise}

In this section, we report the performance of \ac{dc} \ac{CSS} \emph{Construction~A} codes under  phenomenological-noise, as a first step towards fault-tolerant quantum computing. We use the \ac{BP+OSD} decoder of Sec.~\ref{subsec:depo_noise}.

From Theorem~\ref{theo:oneRow}, the component \acp{PCM} of \emph{Construction~B} codes are full-rank, while those of \emph{Construction~A} codes are overcomplete. For this reason, and due to space limits, we restrict the numerical analysis to \emph{Construction~A} codes. We compute their non-zero meta-check matrices as indicated in Sec.~\ref{app:noise}.
Then, we numerically estimate the associated meta-check distance $d_{\text{L}}$ using the tool available in~\cite{MacKayDist}.  
Interestingly, for $\mathcal{C}_{\text{A}} \llbracket 128, 24 \rrbracket$, we find $d_{\text{L}} = 4$, which almost saturates $U(d_{\text{L}}) = 5$, while $\mathcal{C}_{\text{A}} \llbracket 512, 160 \rrbracket$ achieves the bound: $d_{\text{L}} = U(d_{\text{L}}) = 4$.
For each \ac{CSS} code simulated in this section, in Table~\ref{tab:code_PL_noise} we collect the values of $d_{\text{L}}$ and $U(d_{\text{L}})$, together with the rank deficiency $m - r$ for the component code $C_1$, represented by $\6H_1$, and the quantum distance $d(\mathcal{C})$.

In Fig.~\ref{fig:constrA_PLnoise}, we show the performance of  $\mathcal{C}_{\text{A}}\llbracket 128, 24 \rrbracket$  and  $\mathcal{C}_{\text{A}}\llbracket 512, 160 \rrbracket$ under phenomenological noise and with four different readout error probabilities, namely $\epsilon \in \{ 0, \, 10^{-1}, \, 10^{-2}, \, 10^{-3} \}$.
For $\epsilon = 0$, the curves coincide with those obtained under the depolarizing-noise model. 
Moreover, the performance degrades significantly for large syndrome readout error probabilities, namely $\epsilon\in\{10^{-1}, 10^{-2}\}$, compared with the ideal-syndrome case $\epsilon=0$. 
However, for $\epsilon = 10^{-3}$, the \ac{LER} of $\mathcal{C}_{\text{A}}\llbracket 128, 24 \rrbracket$ is unaffected for $p \geq 9 \cdot 10^{-3}$. 
Analogously, $\mathcal{C}_{\text{A}}\llbracket 512, 160 \rrbracket$, having the same meta-check distance $d_{\text{L}}$, performs poorly for $\epsilon \in \{ 10^{-1}, \, 10^{-2} \}$. Interestingly, for $\epsilon = 10^{-3}$, the \ac{LER} is unaffected for $p \geq 2 \cdot 10^{-4}$.

In Fig.~\ref{fig:constrA_[[128,24,8]]_PLnoise}, we compare $\mathcal{C}_{\text{A}}\llbracket 128, 24 \rrbracket$ against the same state-of-the-art codes of Fig.~\ref{fig:constrA_[[128,24,8]]}, but considering the phenomenological noise.
We fix $\epsilon = 10^{-3}$ for all simulations\footnote{We choose this value as it is comparable to what can be achieved with state-of-the-art quantum computers~\cite{Bluvstein2024}.}. 
We note that the performance of the codes whose \acp{PCM} is full-rank, namely $\mathcal{C}_{\text{Bic}}$ and $\mathcal{C}_{\text{HP}}$, is significantly worse than in the depolarizing-noise setting, since they are not able to correct even one error in the noisy syndrome.
For instance, for the code $\mathcal{C}_{\text{Bic}}$, at $p = 10^{-3}$ and $\epsilon = 0$, we get \ac{LER} of $4.9 \cdot 10^{-5}$ (see Fig.~\ref{fig:constrA_[[128,24,8]]}), while for $\epsilon = 10^{-3}$ the \ac{LER} increases to $3.0 \cdot 10^{-3}$, which corresponds to a difference of almost two orders of magnitude.
Also, the codes $\mathcal{C}_{\text{QT}}$ and $\mathcal{C}_{\text{QC}}$ both have 
$d_{\text{L}} = 2$.
As a consequence, also in this case, the \ac{LER} is significantly worse than that shown in Fig.~\ref{fig:constrA_[[128,24,8]]}.
Interestingly, we note that for low values of $p$, namely $p \leq 10^{-2}$, $\mathcal{C}_{\text{A}}$ performs slightly better than $\mathcal{C}_{\text{GB}}$, despite the latter is characterized by a greater meta-check distance and  better performance for $\epsilon = 0$ (see Fig.~\ref{fig:constrA_[[128,24,8]]}).

Finally, Fig.~\ref{fig:constrA_[[512,160,4]]_PLnoise} compares
\(\mathcal{C}_{\text{A}}\llbracket 512,160 \rrbracket\) with the same state-of-the-art codes considered in Fig.~\ref{fig:constrA_[[512,160,4]]} under phenomenological noise, with \(\epsilon = 10^{-3}\) and using the \ac{BP+OSD} decoder. Since \(\mathcal{C}_{\text{Bic}}\), \(\mathcal{C}_{\text{HP}}\), and \(\mathcal{C}_{\text{GB}}\) have full-rank \acp{PCM}, they are not able to correct even one error, thus exhibiting poor performance  compared with $\mathcal{C}_{\text{A}}$, which has $d_{\text{L}} = 4$. Among the benchmark codes, \(\mathcal{C}_{\text{QT}}\), exhibiting $d_{\text{L}} = 2$, is the only one with a non-null meta-check matrix. 
Notably, in this more realistic noise setting, \(\mathcal{C}_{\text{GB}}\) performs significantly worse than \(\mathcal{C}_{\text{A}}\), in contrast to what is observed under depolarizing noise (see Fig.~\ref{fig:constrA_[[512,160,4]]}).

\begin{table}[tb!]
\centering
\caption{Tested \ac{CSS} codes under phenomenological noise.}
\begin{tabular}{|l|c|c|c|c|}
\hline
\textbf{Code} & \multicolumn{4}{c|}{\textbf{Parameters}} \\
\cline{2-5}
   & $m-r$ & $d_{\text{L}}$ & $U(d_{\text{L}})$ & $d(\mathcal{C})$ \\
\hline                    
$\mathcal{C}_{\text{A}} \llbracket 128, 24 \rrbracket$ & $28$ & $4$ & $5$ & $8$ \\
\hline
$\mathcal{C}_{\text{A}} \llbracket 512, 160 \rrbracket$ & $80$ & $4$ & $4$ & $4$ \\
\hline
$\mathcal{C}_{\text{Bic}} \llbracket 128, 24 \rrbracket$ & $0$ & $0$ & $1$ & $4$ \\
\hline
$\mathcal{C}_{\text{Bic}} \llbracket 512, 160 \rrbracket$ & $0$ & $0$ & $1$ & $2$ \\
\hline
$\mathcal{C}_{\text{QT}} \llbracket 180, 26 \rrbracket$ & $3$ & $2$ & $2$ & $6$ \\
\hline
$\mathcal{C}_{\text{QT}} \llbracket 500, 188 \rrbracket$ & $4$ & $2$ & $2$ & $4$ \\
\hline
$\mathcal{C}_{\text{HP}} \llbracket 125, 25 \rrbracket$ & $0$ & $0$ & $1$ & $4$ \\
\hline
$\mathcal{C}_{\text{HP}} \llbracket 505, 169 \rrbracket$ & $0$ & $0$ & $3$ & $4$ \\
\hline
$\mathcal{C}_{\text{QC}} \llbracket 136, 38 \rrbracket$ & $2$ & $2$ & $3$ & $6$ \\
\hline
$\mathcal{C}_{\text{GB}} \llbracket 126, 28 \rrbracket$ & $14$ & $5$ & $5$ & $8$ \\
\hline
$\mathcal{C}_{\text{GB}} \llbracket 512, 174 \rrbracket$ & $0$ & $0$ & $1$ & $6$ \\
\hline
\end{tabular}
\label{tab:code_PL_noise}
\end{table}

\begin{figure}[t]
    \centering
    \resizebox{0.80\columnwidth}{!}{
        \input{Figures/constructionA_PLnoise}
    }
    \caption{\ac{LER} of  $\mathcal{C}_{\text{A}} \llbracket 128, 24 \rrbracket$ and of $\mathcal{C}_{\text{A}} \llbracket 512, 160 \rrbracket$, as a function of $p$, using the \ac{BP+OSD} decoder on phenomenological noise.
    We use different values of the syndrome readout error probability $\epsilon$.}
    \label{fig:constrA_PLnoise}
\end{figure}

\begin{figure}[t]
    \centering
    \resizebox{0.80\columnwidth}{!}{
        \begin{tikzpicture} 
    \begin{loglogaxis}[
        xlabel={$p$},
        ylabel={LER},
        xmin=0.001, xmax=0.4,
        ymin=0.000001, ymax=1,
        legend style={
            legend columns=1,
            font=\scriptsize
        },
        grid=both,
        major grid style={solid, gray!70},
        minor grid style={solid, gray!30},
        legend pos = south east,
        height  = 8.0cm,
        width = \columnwidth, 
        scaled x ticks=false
    ]

    \addplot[ 
        color = BrickRed, 
        mark options = {solid},
        mark = square, thick, dashed,
        ]
        coordinates {
            (0.5, 1.0)
            (0.4, 1.0)
            (0.3, 1.0)
            (0.2, 0.9900990099009901)
            (0.1, 0.5102040816326531)
            (0.09, 0.4784688995215311)
            (0.07, 0.21978021978021978)
            (0.08, 0.3184713375796178)
            (0.06, 0.1358695652173913)
            (0.05, 0.07241129616220131)
            (0.04, 0.0357653791130186)
            (0.03, 0.013133701076963489)
            (0.02, 0.0027114967462039045)
            (0.01, 0.0002511609916840596)
            (0.009, 0.00018476672277416148)
            (0.008, 0.00013877305193859014)
            (0.007, 8.125001523437786e-05)
            (0.006, 6.2233485100059e-05)
            (0.005, 4.076805382687683e-05)
            (0.004, 2.4525408814039522e-05)
            (0.003, 1.99280954460117e-05)
            (0.002, 5.1500757833651525e-06)
            (0.001, 7.056168135616221e-07)
        };

    \addplot[
        color = cyan, 
        mark = square, thick, dashed,
        mark options = {solid}
        ]
        coordinates { 
            (0.5, 1.0)
            (0.4, 1.0)
            (0.3, 1.0)
            (0.2, 1.0)
            (0.1, 0.625)
            (0.09, 0.49019607843137253)
            (0.07, 0.3367003367003367)
            (0.08, 0.4166666666666667)
            (0.06, 0.2617801047120419)
            (0.05, 0.17094017094017094)
            (0.04, 0.10330578512396695)
            (0.03, 0.05005005005005005)
            (0.02, 0.026392187912377935)
            (0.01, 0.01207437816952427)
            (0.009, 0.01167269756040621)
            (0.008, 0.009374707040404988)
            (0.007, 0.009392317084624777)
            (0.006, 0.008565310492505354)
            (0.005, 0.006755843804891231)
            (0.004, 0.004796163069544364)
            (0.003, 0.004774637127578304)
            (0.002, 0.004599181345720462)
            (0.001, 0.0030214218811372632)
        };

    \addplot[
        color = NavyBlue, 
        mark = square, thick, dashed,
        mark options = {solid}
        ]
        coordinates { 
            (0.5, 1.0)
            (0.4, 1.0)
            (0.3, 1.0)
            (0.2, 0.9803921568627451)
            (0.1, 0.5434782608695652)
            (0.09, 0.43103448275862066)
            (0.07, 0.2028397565922921)
            (0.08, 0.31347962382445144)
            (0.06, 0.1264222503160556)
            (0.05, 0.0859106529209622)
            (0.04, 0.049236829148202856)
            (0.03, 0.01999600079984003)
            (0.02, 0.008658008658008658)
            (0.01, 0.0032869868191828552)
            (0.009, 0.0025572831423895256)
            (0.008, 0.0021073039153706747)
            (0.007, 0.0016909887211052303)
            (0.006, 0.001517105362967458)
            (0.005, 0.0013379537335598934)
            (0.004, 0.0009787704685374234)
            (0.003, 0.0009347279474309002)
            (0.002, 0.0006845001779700462)
            (0.001, 0.00032792691164993147)
        };

    \addplot[
        color = Peach, 
        mark = square, thick, dashed,
        mark options = {solid}
        ]
        coordinates {
            (0.5, 1.0)
            (0.4, 1.0)
            (0.3, 1.0)
            (0.2, 0.9900990099009901)
            (0.1, 0.7518796992481203)
            (0.09, 0.6060606060606061)
            (0.07, 0.4830917874396135)
            (0.08, 0.5747126436781609)
            (0.06, 0.37453183520599254)
            (0.05, 0.30864197530864196)
            (0.04, 0.22522522522522523)
            (0.03, 0.11587485515643106)
            (0.02, 0.08051529790660225)
            (0.01, 0.04837929366231253)
            (0.009, 0.037009622501850484)
            (0.008, 0.04315925766076824)
            (0.007, 0.03567606136282554)
            (0.006, 0.033300033300033303)
            (0.005, 0.03449465332873405)
            (0.004, 0.031387319522912745)
            (0.003, 0.02824694104560623)
            (0.002, 0.02602305475504323)
            (0.001, 0.023315458148752622)
        };

    \addplot[
        color = CadetBlue, 
        mark = square, thick, dashed,
        mark options = {solid}
        ]
        coordinates {
            (0.5, 1.0)
            (0.4, 1.0)
            (0.3, 1.0)
            (0.2, 0.9900990099009901)
            (0.1, 0.5747126436781609)
            (0.09, 0.5128205128205128)
            (0.07, 0.24271844660194175)
            (0.08, 0.3875968992248062)
            (0.06, 0.1721170395869191)
            (0.05, 0.08176614881439084)
            (0.04, 0.031555695803092455)
            (0.03, 0.012054001928640309)
            (0.02, 0.0032293483175095264)
            (0.01, 0.0007463856275983549)
            (0.009, 0.000549130726060646)
            (0.008, 0.0005853738490086694)
            (0.007, 0.0005008288717828005)
            (0.006, 0.00034036643850769737)
            (0.005, 0.00033115543442625666)
            (0.004, 0.0003223924095931085)
            (0.003, 0.000348975582178515)
            (0.002, 0.0003115701578102849)
            (0.001, 0.00017105508486898037)
        };

    \addplot[ 
        color = ForestGreen, 
        mark options = {solid},
        mark = square, thick, dashed,
        ]
        coordinates {
            (0.5,   1.0000000000000000)
            (0.4,   1.0000000000000000)
            (0.3,   1.0000000000000000)
            (0.2,   0.9615384615384616)
            (0.1,   0.5025125628140703)
            (0.09,  0.3164556962025317)
            (0.07,  0.1267427122940431)
            (0.08,  0.1941747572815534)
            (0.06,  0.0846023688663283)
            (0.05,  0.0264760391845380)
            (0.04,  0.0095987713572663)
            (0.03,  0.0022200022200022)
            (0.02,  0.0005343878587079)
            (0.01,  0.0003168497530156)
            (0.009, 0.0002538644516147)
            (0.008, 0.0002107472676617)
            (0.007, 0.0001768380994148)
            (0.006, 0.0001795957658502)
            (0.005, 0.0001401329581507)
            (0.004, 6.8970701936e-05)
            (0.003, 3.6721113061002214e-05)
            (0.002, 9.541750741417887e-06)
            (0.001, 1.0758218706216403e-06)
        };
    

    \legend{
        {$\mathcal{C}_{\text{A}} \llbracket 128, 24 \rrbracket$},
        {$\mathcal{C}_{\text{Bic}} \llbracket 128, 24 \rrbracket$},
        {$\mathcal{C}_{\text{QT}} \llbracket 180, 26 \rrbracket$},
        {$\mathcal{C}_{\text{HP}} \llbracket 125, 25 \rrbracket$},
        {$\mathcal{C}_{\text{QC}} \llbracket 136, 38 \rrbracket$},
        {$\mathcal{C}_{\text{GB}} \llbracket 126, 28 \rrbracket$},
    }
    
    \end{loglogaxis}
\end{tikzpicture}
    }
    \caption{Comparison between the \ac{LER} of $\mathcal{C}_{\text{A}} \llbracket 128, 24 \rrbracket$ and other codes, as a function of $p$, using the \ac{BP+OSD} decoder on phenomenological noise, with fixed $\epsilon = 10^{-3}$.
    }
    \label{fig:constrA_[[128,24,8]]_PLnoise}
\end{figure}

\begin{figure}[t]
    \centering
    \resizebox{0.80\columnwidth}{!}{
        \begin{tikzpicture} 
    \begin{loglogaxis}[
        xlabel={$p$},
        ylabel={LER},
        xmin=0.00005, xmax=0.1,
        ymin=0.000001, ymax=1.0,
        legend style={
            legend columns=1,
            font=\scriptsize
        },
        grid=both,
        major grid style={solid, gray!70},
        minor grid style={solid, gray!30},
        legend pos = south east,
        height  = 8.0cm,
        width = \columnwidth, 
        scaled x ticks=false
    ]
    
    \addplot[ 
        color = RedViolet,
        mark options = {solid},
        mark = square, thick, dashed,
        ]
        coordinates {
            (0.5, 1.0)
            (0.4, 1.0)
            (0.3, 1.0)
            (0.2, 1.0)
            (0.1, 1.0)
            (0.09, 1.0)
            (0.07, 0.9803921568627451)
            (0.08, 0.9900990099009901)
            (0.06, 0.9090909090909091)
            (0.05, 0.7874015748031497)
            (0.04, 0.6134969325153374)
            (0.03, 0.4065040650406504)
            (0.02, 0.25839793281653745)
            (0.01, 0.054318305268875614)
            (0.009, 0.04557885141294439)
            (0.008, 0.032509752925877766)
            (0.007, 0.028851702250432775)
            (0.006, 0.019477989871445268)
            (0.005, 0.01608492842206852)
            (0.004, 0.009563886763580718)
            (0.003, 0.005613562366677894)
            (0.002, 0.0029233782559125324)
            (0.001, 0.0006072382803011902)
            (0.0009, 0.00047559937411122366)
            (0.0008, 0.0004034812361051149)
            (0.0007, 0.00031990377294509813)
            (0.0006, 0.00023877061784285074)
            (0.0005, 0.0001408194849104881)
            (0.0004, 0.00010013036974140331)
            (0.0003, 5.7977635706802456e-05)
            (0.0002, 2.2964058263490064e-05)
            (0.0001, 9.484950134771656e-06)
            (9e-05, 7.67993531651279e-06)
            (8e-05, 6.855554092275621e-06)
            (7e-05, 5.444630584264941e-06)
            (6e-05, 4.180794442085315e-06)
            (5e-05, 2.741672840456958e-06)
        };

    \addplot[
        color = cyan, 
        mark = square, thick, dashed,
        mark options = {solid},
        ]
        coordinates { 
            (0.5, 1.0)
            (0.4, 1.0)
            (0.3, 1.0)
            (0.2, 1.0)
            (0.1, 1.0)
            (0.09, 1.0)
            (0.07, 1.0)
            (0.08, 0.9900990099009901)
            (0.06, 0.9615384615384616)
            (0.05, 0.9009009009009009)
            (0.04, 0.847457627118644)
            (0.03, 0.6535947712418301)
            (0.02, 0.5376344086021505)
            (0.01, 0.3257328990228013)
            (0.009, 0.28169014084507044)
            (0.008, 0.28735632183908044)
            (0.007, 0.25906735751295334)
            (0.006, 0.2457002457002457)
            (0.005, 0.21691973969631237)
            (0.004, 0.1841620626151013)
            (0.003, 0.1597444089456869)
            (0.002, 0.13679890560875513)
            (0.001, 0.08748906386701662)
            (0.0009, 0.06934812760055478)
            (0.0008, 0.06997900629811056)
            (0.0007, 0.054764512595837894)
            (0.0006, 0.05589714924538849)
            (0.0005, 0.04664179104477612)
            (0.0004, 0.03447087211306446)
            (0.0003, 0.02339728591483388)
            (0.0002, 0.01967729240456513)
            (0.0001, 0.009849305623953511)
            (9e-05, 0.008198737394441257)
            (8e-05, 0.00689797889218459)
            (7e-05, 0.00556947925368978)
            (6e-05, 0.0052027043791092916)
            (5e-05, 0.004538234626730202)
        };

    \addplot[
        color = NavyBlue, 
        mark = square, thick, dashed,
        mark options = {solid},
        mark size = 2pt, 
        ]
        coordinates { 
            (0.5, 1.0)
            (0.4, 1.0)
            (0.3, 1.0)
            (0.2, 1.0)
            (0.1, 1.0)
            (0.09, 0.9900990099009901)
            (0.07, 0.9803921568627451)
            (0.08, 1.0)
            (0.06, 0.8928571428571429)
            (0.05, 0.7751937984496124)
            (0.04, 0.5952380952380952)
            (0.03, 0.352112676056338)
            (0.02, 0.17921146953405018)
            (0.01, 0.042936882782310004)
            (0.009, 0.047058823529411764)
            (0.008, 0.037064492216456635)
            (0.007, 0.023969319271332695)
            (0.006, 0.023142791020597082)
            (0.005, 0.014945449110745778)
            (0.004, 0.014630577907827359)
            (0.003, 0.0075420469115317895)
            (0.002, 0.006101281269066504)
            (0.001, 0.0024976272541085967)
            (0.0009, 0.002043861262697488)
            (0.0008, 0.0017561421070193)
            (0.0007, 0.0015697805583202994)
            (0.0006, 0.0013503635602548797)
            (0.0005, 0.0011863662787248935)
            (0.0004, 0.0008549273739195855)
            (0.0003, 0.0006983435291488589)
            (0.0002, 0.0005973715651135006)
            (0.0001, 0.0004140872481831922)
            (9e-05, 0.00040349803985336903)
            (8e-05, 0.0003502962192022096)
            (7e-05, 0.00031877691672590607)
            (6e-05, 0.0003127912868859125)
            (5e-05, 0.0002956113538408783)
        };
        
    \addplot[
        color = Peach, 
        mark = square, thick, dashed,
        mark options = {solid},
        ]
        coordinates {
            (0.5, 1.0)
            (0.4, 1.0)
            (0.3, 1.0)
            (0.2, 1.0)
            (0.1, 1.0)
            (0.09, 1.0)
            (0.07, 0.970873786407767)
            (0.08, 1.0)
            (0.06, 0.9345794392523364)
            (0.05, 0.8928571428571429)
            (0.04, 0.7194244604316546)
            (0.03, 0.5434782608695652)
            (0.02, 0.2898550724637681)
            (0.01, 0.10111223458038422)
            (0.009, 0.09233610341643583)
            (0.008, 0.08841732979664015)
            (0.007, 0.07961783439490445)
            (0.006, 0.0648508430609598)
            (0.005, 0.045829514207149404)
            (0.004, 0.031456432840515886)
            (0.003, 0.02407318247472316)
            (0.002, 0.008696408383337682)
            (0.001, 0.0021853146853146855)
            (0.0009, 0.002169526826199206)
            (0.0008, 0.0018028412778538978)
            (0.0007, 0.0017645374403864816)
            (0.0006, 0.0013359339514254416)
            (0.0005, 0.0011792313769884788)
            (0.0004, 0.0008774084862948795)
            (0.0003, 0.0004764513900469305)
            (0.0002, 0.000289708958380411)
            (0.0001, 0.00014172777495896982)
            (9e-05, 0.00010920881489870337)
            (8e-05, 0.00010126920697096713)
            (7e-05, 7.782609913799813e-05)
            (6e-05, 7.058816885819401e-05)
            (5e-05, 6.152687555451097e-05)
        };

    \addplot[
        color = ForestGreen, 
        mark = square, thick, dashed,
        mark options = {solid},
        ]
        coordinates {
            (0.5, 1.0)
            (0.4, 1.0)
            (0.3, 1.0)
            (0.2, 1.0)
            (0.1, 0.9900990099009901)
            (0.09, 0.9615384615384616)
            (0.07, 0.7936507936507936)
            (0.08, 0.9259259259259259)
            (0.06, 0.625)
            (0.05, 0.40816326530612246)
            (0.04, 0.2127659574468085)
            (0.03, 0.09940357852882704)
            (0.02, 0.052328623757195186)
            (0.01, 0.049510075069142635)
            (0.009, 0.04177109440267335)
            (0.008, 0.04151100041511)
            (0.007, 0.033240237425804437)
            (0.006, 0.03156831151393085)
            (0.005, 0.031484035014589414)
            (0.004, 0.03135779241141424)
            (0.003, 0.024526910139356077)
            (0.002, 0.024051927326295114)
            (0.001, 0.014154281670205236)
            (0.0009, 0.013806433798149938)
            (0.0008, 0.011842728564661297)
            (0.0007, 0.011625203441060218)
            (0.0006, 0.009899029895070284)
            (0.0005, 0.008671522719389525)
            (0.0004, 0.006449948400412796)
            (0.0003, 0.005232314776056928)
            (0.0002, 0.0036023054755043226)
            (0.0001, 0.0018604124464516985)
            (9e-05, 0.0017871503887052094)
            (8e-05, 0.001753862882999807)
            (7e-05, 0.0012796560284595501)
            (6e-05, 0.0010705262707146833)
            (5e-05, 0.0008579640513062503)
        };
    

    \legend{
        {$\mathcal{C}_{\text{A}} \llbracket 512, 160 \rrbracket$},
        {$\mathcal{C}_{\text{Bic}} \llbracket 512, 160 \rrbracket$},
        {$\mathcal{C}_{\text{QT}} \llbracket 500, 188 \rrbracket$},
        {$\mathcal{C}_{\text{HP}} \llbracket 505, 169 \rrbracket$},
        {$\mathcal{C}_{\text{GB}} \llbracket 512, 174 \rrbracket$}, 
    }
    
    \end{loglogaxis}
\end{tikzpicture}
    }
    \caption{Comparison between the \ac{LER} of $\mathcal{C}_{\text{A}} \llbracket 512, 160 \rrbracket$ and other codes, as a function of $p$, using the \ac{BP+OSD} decoder on phenomenological noise, with fixed $\epsilon = 10^{-3}$.
    }
    \label{fig:constrA_[[512,160,4]]_PLnoise}
\end{figure}

\section{Automorphism-Ensemble Decoding as a BP Rescue Procedure}
\label{sec:auto}

We use the dyadic automorphisms of the proposed codes to build an automorphism-ensemble \ac{BP} decoder, following the quantum AutDEC principle of~\cite{AutomorphismEnsembleDecoding}. \ac{BP} is employed to isolate more clearly the contribution of the automorphism ensemble, since stronger constituent decoders such as \ac{BP4} or \ac{BP+OSD} could partially mask the effect of the AutDEC stage.  For the decoding assessments in this section, we use a full-rank \ac{PCM} for the considered \ac{CSS} component codes. 
In particular, since the \ac{PCM} of \emph{Construction~A} codes is overcomplete, we first select an independent row basis and discard the remaining dependent checks. This operation preserves the rowspace, but fixes the syndrome coordinates on which the decoders are run. 
In the following, $\mathbf H$ denotes this full-rank decoding matrix, and $ \mathbf s=\mathbf H\mathbf e^T$ denotes the corresponding syndrome.  
For each dyadic automorphism $\pi$, let $\mathbf P_\pi$ denote the corresponding permutation matrix acting on the qubit coordinates. 
The automorphism also induces a linear action on the binary stabilizer checks, represented by a binary matrix $\mathbf U_\pi$ satisfying $\mathbf U_\pi\mathbf H=\mathbf H\mathbf P_\pi$.
Hence, the syndrome transformed by the automorphism is $\mathbf s_\pi = \mathbf U_\pi \mathbf s$.
The decoder first runs \ac{BP}  (\ac{MS} variant) on the original syndrome-decoding instance. 
If \ac{BP} does not return a syndrome-valid estimate, the AutDEC rescue stage is activated. For each automorphism, \ac{BP} is run on the transformed syndrome instance $\mathbf H\mathbf e_\pi^T=\mathbf s_\pi$. 
The resulting estimate is mapped back to the original qubit ordering and retained only if it satisfies the original syndrome equation. 
Among all syndrome-valid AutDEC candidates, the minimum weight candidate is selected. We consider two numerical uses of Algorithm~\ref{alg:BP2_autdec_decoder}. For both such experiments, we use the codes with parameters described in Table~\ref{tab:code_parameters_A_autDEC}\footnote{Note that $\mathcal{C}_{\text{A}} \llbracket 128, 24 \rrbracket$ is the same code analyzed in Sec.~\ref{subsec:depo_noise}.}.
In particular, we report $w$, $u$, $\ell$, $d(\mathcal{C})$, $R_{\text{Q}}$, and the size of the automorphism group $|\mathcal{A}(\mathcal{C}_{\text{A}})|$ of the \ac{CSS} code, as described in Proposition~\ref{propo:automtype}.

\begin{algorithm}[t!]
\footnotesize
\KwData{\ac{PCM} $\mathbf H$, syndrome $\mathbf s$, depolarizing probability $p$}
\KwIn{
automorphism group $\mathcal{A}(\mathcal{C})$, maximum number of
iterations $I_{\max}$}
\KwOut{estimated binary Pauli-component error vector $\widehat{\mathbf e}$}

Set $p_{\rm bin}\gets 2p/3$\;

Run BP$2$ on $(\mathbf H,\mathbf s)$ with channel parameter $p_{\rm bin}$ and obtain $\widehat{\mathbf e}_{\rm BP2}$\;

\If{$\mathbf H\widehat{\mathbf e}_{\rm BP2}^{T}=\mathbf s$}{
    \Return $\widehat{\mathbf e}_{\rm BP2}$\;
}

Set $L \gets \emptyset$\;

\For{$\pi \in \mathcal{A}(\mathcal{C})$}{
    Let $\mathbf P_\pi$ be the qubit-coordinate permutation induced by $\pi$\;
    Compute $\mathbf U_\pi$ such that $\mathbf U_\pi\mathbf H=\mathbf H\mathbf P_\pi$\;
    Set $\mathbf s_\pi\gets \mathbf U_\pi\mathbf s$\;

    Run BP$2$ on $(\mathbf H,\mathbf s_\pi)$ and obtain $\widehat{\mathbf e}_\pi$\;
    Map back $\widehat{\mathbf e}^{(\pi)}\gets \mathbf P_\pi^{-1}\widehat{\mathbf e}_\pi$\;

    \If{$\mathbf H(\widehat{\mathbf e}^{(\pi)})^T=\mathbf s$}{
        $L\gets L \cup\{\widehat{\mathbf e}^{(\pi)}\}$\;
    }
}

\If{$L=\emptyset$}{
    \Return $\widehat{\mathbf e}_{\rm BP2}$\;
}
\Return $\displaystyle \arg\min_{\mathbf e'\in\mathcal L} |\mathbf e'|$\;

\caption{BP$2$+AutDEC on depolarizing noise.}
\label{alg:BP2_autdec_decoder}
\end{algorithm}

\begin{table}[tb!]
\centering
\caption{Parameters of \ac{dc} \ac{CSS} Construction~A codes tested with the \ac{BP}$+$AutDEC decoder. 
}
\begin{tabular}{|l|c|c|c|c|c|c|c|}
\hline
\textbf{Code} & \multicolumn{6}{c|}{\textbf{Parameters}} \\
\cline{2-7}
 & $w$ & $u$ & $\ell$ & $d(\mathcal{C})$ & $|\mathcal{A}(\mathcal{C}_{\text{A}})|$ & $R_{\text{Q}}$ \\
\hline
$\mathcal{C}_{\text{A}} \llbracket 128, 24 \rrbracket$ & $5$ & $8$ & $4$ & $8$ & $16$ & $0.19$ \\
\hline
$\mathcal{C}_{\text{A}} \llbracket 128, 16 \rrbracket$ & $8$ & $8$ & $4$ & $8$ & $16$ & $0.13$ \\
\hline
\end{tabular}
\label{tab:code_parameters_A_autDEC}
\end{table}

The first experiment is a diagnostic rescue experiment. 
In this case, \ac{BP}  syndrome decoding is run with $I_{\max}=25$ iterations. For each value of $p$, we simulate $5\cdot 10^4$ frames. The AutDEC stage is activated only on frames that are not correctly decoded by the standard \ac{BP}. 
This test measures how often the automorphism ensemble contains a correct candidate and how often the selected minimum-weight candidate rescues a \ac{BP} failure (``correct in list'' and ``selected correct'' in the legends, respectively). 
The results are reported in Fig.~\ref{fig:rescues64}, which shows the effectiveness of the AutDEC stage on \ac{BP} logical failures for the two codes.  
The solid curves report the conditional AutDEC success rate, namely the fraction of \ac{BP} failures for which the candidate selected by AutDEC is logically correct. 
The dashed curves report the fraction of \ac{BP} failures for which at least one  correct candidate is present in the AutDEC list.  
For the considered codes, the AutDEC list contains a correct candidate with high frequency at low and moderate depolarizing probabilities.  As $p$ increases, both conditional rates decrease, because \ac{BP} failures become more severe. 

\begin{figure}[t]
    \centering
    \resizebox{0.80\columnwidth}{!}{
        \begin{tikzpicture}
    \begin{semilogxaxis}[
        xlabel={$p$},
        ylabel={Conditional AutDEC success rate},
        xmin=0.001, xmax=0.1,
        ymin=0.5, ymax=1,
        xtick={0.001,0.01,0.1},
        xticklabels={$10^{-3}$,$10^{-2}$,$10^{-1}$},
        ytick={0.5,0.6,0.7,0.8,0.9,1.0},
        grid=both,
        major grid style={solid, gray!70},
        minor grid style={solid, gray!30},
        height=8.0cm,
        width=\columnwidth,
        scaled x ticks=false,
        legend style={
            legend columns=1,
            font=\scriptsize
        },
        legend pos=south west,
        tick label style={font=\small},
        label style={font=\small}
    ]

    \addplot[
        color=NavyBlue,
        mark options={solid},
        mark=asterisk,
        mark size = 2.5pt,
        thick
    ]
    coordinates {
        (0.0025, 1.0000)
        (0.005,  0.9753)
        (0.0075, 0.9687)
        (0.01,   0.9649)
        (0.025,  0.8389)
        (0.05,   0.6144)
    };

    \addplot[
        color=NavyBlue,
        mark options={solid},
        mark=asterisk,
        mark size = 2.5pt,
        thick,
        dashed
    ]
    coordinates {
        (0.0025, 1.0000)
        (0.005,  0.9753)
        (0.0075, 0.9718)
        (0.01,   0.9691)
        (0.025,  0.8544)
        (0.05,   0.6314)
    };

  \addplot[
        color=Peach,
        mark options={solid},
        mark=asterisk,
        mark size = 2.5pt,
        thick
    ]
    coordinates {
        (0.0025, 1.0000)
        (0.005,  1)
        (0.0075, 0.981)
        (0.01,   0.9689)
        (0.025,  0.8634)
        (0.05,   0.660)
    };

    \addplot[
        color=Peach,
        mark options={solid},
        mark=asterisk,
        mark size = 2.5pt,
        thick,
        dashed
    ]
    coordinates {
        (0.0025, 1.0000)
        (0.005,  1)
        (0.0075, 0.9857)
        (0.01,   0.9728)
        (0.025,  0.8826)
        (0.05,   0.6894)
    };

   \legend{
    {$\mathcal{C}_{\mathrm A}\llbracket 128,24\rrbracket$: selected correct},
    {$\mathcal{C}_{\mathrm A}\llbracket 128,24\rrbracket$: correct in list},
    {$\mathcal{C}_{\mathrm A}\llbracket 128,16\rrbracket$: selected correct},
    {$\mathcal{C}_{\mathrm A}\llbracket 128,16\rrbracket$: correct in list}
}

    \end{semilogxaxis}
\end{tikzpicture}
    }
    \caption{Conditional AutDEC success rate for  the  $\mathcal{C}_{\text A}\llbracket 128, 24\rrbracket$ and $\mathcal{C}_{\text A}\llbracket 128, 16\rrbracket$ codes evaluated over 
    logical failures and using a standard \ac{BP} decoder.}
    \label{fig:rescues64}
\end{figure}

The second use is the actual \ac{LER} simulation. 
In this case, Algorithm~\ref{alg:BP2_autdec_decoder} is used as a full-fledged decoder: again,  the AutDEC stage is activated only when \ac{BP} fails to return a valid syndrome candidate. 
We set $I_{\max}=100$ iterations. 
For each value of $p$, frames are simulated until $100$ logical errors of the \ac{BP}$+$AutDEC decoder are observed.  The results, reported in Fig. \ref{fig:autdec_ler}, show that the AutDEC stage is highly effective as a rescue mechanism for \ac{BP} failures, for both the tested codes.
In particular, \ac{BP}$+$AutDEC consistently lowers the \ac{LER} with respect to standard \ac{BP} over the whole tested range of $p$. 
The gain is especially significant at low values of $p$, where most \ac{BP} failures are recovered by AutDEC. 
At larger values of $p$, the improvement is still visible, although the residual \ac{LER} increases because the list generated by the automorphism-based stage becomes less likely to contain a valid correcting candidate. 
The figure also reports the \ac{BP+OSD} performance (with the same decoder settings used in Sec.~\ref{sec:results}), showing that \ac{BP}$+$AutDEC is competitive with this  benchmark decoder.

\begin{figure}[tb!]
    \centering
    \centering
    \resizebox{0.80\columnwidth}{!}{
        \begin{tikzpicture}
    \begin{loglogaxis}[
        xlabel={$p$},
        ylabel={LER},
        xmin=0.002, xmax=0.2,
        ymin=0.0000001, ymax=1,
        legend style={
            legend columns=1,
            font=\scriptsize
        },
        grid=both,
        major grid style={solid, gray!70},
        minor grid style={solid, gray!30},
        legend pos=south east,
        height=8.0cm,
        width=\columnwidth,
        scaled x ticks=false
    ]


    \addplot[
        color=NavyBlue,
        mark=asterisk,
        mark size = 2.5pt,
        thick,
        solid, dotted,
        mark options={solid},
        forget plot
    ]
    coordinates {
        (0.0025, 0.000107065)
        (0.005,  0.00189627)
        (0.0075, 0.00456184)
        (0.01,   0.00394726)
        (0.025,  0.0601685)
        (0.05,   0.292398)
    };

    \addplot[
        color=NavyBlue,
        mark=asterisk,
        mark size = 2.5pt,
        thick,
        dashed, 
        forget plot,
        mark options={solid}
    ]
    coordinates {
        (0.003, 2.2E-06)
        (0.005,  4.26656e-05)
        (0.0075, 1.50633e-04)
        (0.01,   2.08184e-04)
        (0.025,  9.18864e-03)
        (0.05,   1.13895e-01)
        (0.075,  4.40529e-01)
    };

    \addplot[
        color=Peach,
        mark=asterisk,
        mark size = 2.5pt,
        thick,
        solid, dotted,
        mark options={solid},
        forget plot
    ]
    coordinates {
        (0.0025, 0.000187687)
        (0.005,  0.00125373)
        (0.0075, 0.00254346)
        (0.01,   0.00177425)
        (0.025,  0.0258732)
        (0.05,   0.163132)
    };

    \addplot[
        color=Peach,
        mark=asterisk,
        mark size = 2.5pt,
        thick,
        dashed,
        forget plot,
        mark options={solid}
    ]
    coordinates {
        (0.0025, 3.753E-07)
        (0.0075, 7.08692e-05)
        (0.0100, 0.00014194)
        (0.0250, 0.00491591)
        (0.0500, 0.0701468)
    };

    \addplot[ 
        color = NavyBlue, 
        mark options = {solid},
        mark = asterisk, thick,
        mark size = 2.5pt,
        forget plot,
        ]
        coordinates {
            (0.5,   1.0000000000000000)
            (0.4,   1.0000000000000000)
            (0.3,   1.0000000000000000)
            (0.2,   0.9900990099009901)
            (0.1,   0.5405405405405406)
            (0.09,  0.4032258064516129)
            (0.07,  0.2188183807439825)
            (0.06,  0.14992503748125938)
            (0.05,  0.08077544426494346)
            (0.04,  0.034722222222222224)
            (0.03,  0.012048192771084338)
            (0.02,  0.0026999298018251525)
            (0.01,  0.00018143617619629943)
            (0.009, 0.00011321374188398988)
            (0.008, 7.58572821094393E-05)
            (0.007, 4.4669898384915155E-05)
            (0.006, 2.361637559483746E-05)
            (0.005, 1.3639710325287996E-05)
            (0.004, 5.337832770503349E-06)
            (0.003, 1.4614357574543344E-06)
            (0.002, 3.1900960126068003E-07)
        };

    \addplot[ 
        color = Peach, 
        mark options = {solid},
        mark = asterisk, thick,
        mark size = 2.5pt,
        forget plot,
        ]
        coordinates {
            (0.5, 1.0)
            (0.4, 1.0)
            (0.3, 1.0)
            (0.2, 0.9615384615384616)
            (0.1, 0.36363636363636365)
            (0.09, 0.3076923076923077)
            (0.08, 0.17482517482517482)
            (0.07, 0.12903225806451613)
            (0.06, 0.08598452278589853)
            (0.05, 0.04336513443191674)
            (0.04, 0.02316423442205235)
            (0.03, 0.005985156811108451)
            (0.02, 0.0015309715545485166)
            (0.01, 7.563100841092444e-05)
            (0.009, 5.972902137582217e-05)
            (0.008, 4.8736696100381996e-05)
            (0.007, 2.1985439043721345e-05)
            (0.006, 1.5071008563949907e-05)
            (0.005, 8.186339422660049e-06)
            (0.004, 2.356724190140602e-06)
            (0.003, 1.0234548002240794e-06)
            (0.002, 1.3202434766086594e-07)
        };
    

\addlegendimage{
    draw=NavyBlue,
    fill=NavyBlue,
    thick,
}
\addlegendentry{$\mathcal{C}_{\mathrm A}\llbracket 128,24\rrbracket$}

\addlegendimage{
    draw=Peach,
    fill=Peach,
    thick,
}
\addlegendentry{$\mathcal{C}_{\mathrm A}\llbracket 128,16\rrbracket$}

\addlegendimage{
    draw=black,
    thick,
    mark=asterisk,
    mark options = {solid},
    dotted,
}
\addlegendentry{BP2}

\addlegendimage{
    draw=black,
    thick,
    mark=asterisk,
    mark options = {solid},
    dashed
}
\addlegendentry{BP2+AutDEC}

\addlegendimage{
    draw=black,
    thick,
    mark=asterisk,
    mark options = {solid},
}
\addlegendentry{BP2+OSD}

    \end{loglogaxis}
\end{tikzpicture}
    }
    \caption{Comparison between the \ac{LER} of the  $\mathcal{C}_{\text A}\llbracket 128, 24\rrbracket$ and $\mathcal{C}_{\text A}\llbracket 128, 16\rrbracket$ under different decoding algorithms.
    }
    \label{fig:autdec_ler}
\end{figure}

\section{Conclusions}
\label{sec:conclusions}

We have presented two design methods of high-rate \ac{QLDPC} \ac{dc} \ac{CSS} codes based on \ac{QD} matrices: \emph{Construction~A} and \emph{Construction B}.
\emph{Construction~A} is flexible, and can be employed to obtain codes with many different lengths, code rates, and stabilizer generator weights.
Instead, \ac{dc} \ac{CSS} \emph{Construction~B} codes have a simpler structure and achieve higher rates, while maintaining a competitive minimum distance.
For such codes, we have proposed a heuristic algorithm that minimizes the number of length-$4$ cycles and also has beneficial effects on their minimum distance, contributing to achieve a solid error rate performance.  
Moreover, we have  presented several theoretical results regarding the cycle properties of \ac{QD} codes.

The results of Monte Carlo simulations show that the proposed \ac{QD} codes achieve comparable \ac{LER} performance to several state-of-the-art \ac{QLDPC} code families, including bicycle, \ac{QC}, \ac{HP}, \ac{GB}, and \ac{QT} codes, in the considered finite-length regimes under both depolarizing and phenomenological-noise models. In addition, we have shown that the automorphisms induced by the \ac{QD} structure can be exploited at the decoder side through a \ac{BP}$+$AutDEC procedure in which each dyadic automorphism is applied consistently to both qubit 
and stabilizer-check coordinates through the induced check action.

From an experimental perspective, the increasing maturity of quantum-hardware platforms makes the design of structured \ac{QLDPC} codes a timely and relevant direction. 
The proposed constructions contribute to this effort by providing explicit high-rate code families with a simple algebraic form and compatibility with iterative decoding. 
These features are potentially useful for future scalable quantum memories, where code regularity, efficient syndrome processing, and low-latency decoding are expected to play an important role.

A relevant direction for future work is the study of the fault-tolerance properties of the proposed codes, along two main paths. The first one regards the fault-tolerant implementation of logical gates.  The \ac{dc} structure of our codes already guarantees a transversal implementation of the Hadamard gate~\cite{gottesman1997stabilizercodesquantumerror}.  Building on this, the automorphisms induced by the \ac{QD} structure naturally raise the question of whether other Clifford gates, e.g., the CNOT, can also be implemented fault-tolerantly.   The second research path regards the fault-tolerant syndrome extraction using a single ancilla qubit.  The \ac{QD} and the \ac{dc} structure of our codes are promising structural ingredients for such a protocol.

\bibliographystyle{IEEEtran}
\bibliography{strings, References_short}

\appendices

\end{document}